\PassOptionsToPackage{table,xcdraw}{xcolor}
\documentclass{informs4}

\RequirePackage{tgtermes}
\RequirePackage{newtxtext}
\RequirePackage{newtxmath}
\RequirePackage{bm}
\RequirePackage{endnotes}
\OneAndAHalfSpacedXI

\usepackage{algorithm}
\usepackage{algpseudocode}
\usepackage{tikz}
\usepackage{natbib}
\bibpunct[, ]{(}{)}{,}{a}{}{,}
\def\bibfont{\small}

\usepackage{graphicx}
\usepackage[normalem]{ulem}
\usepackage[utf8]{inputenc}
\usepackage[T1]{fontenc}
\usepackage{hyperref}
\usepackage{url}
\usepackage{booktabs}
\usepackage{amsfonts}
\usepackage{nicefrac}
\usepackage{microtype}
\usepackage{amsmath}
\usepackage{amssymb}
\usepackage{multirow}
\usepackage{makecell}
\usepackage[capitalise]{cleveref}
\usepackage{eqparbox}
\usepackage{array}
\usepackage{caption}
\usepackage{threeparttable}
\usepackage{tabularx}
\usepackage{adjustbox}
\usepackage{placeins}
\usepackage{bbm}
\usetikzlibrary{arrows.meta,positioning,shapes.geometric,calc,arrows,decorations,decorations.markings,shadows,matrix,fit}

\EquationsNumberedThrough
\TheoremsNumberedThrough
\ECRepeatTheorems

\makeatletter

\let\theoremstyle\relax

\makeatother

\usepackage{amsthm}

\newsavebox{\lineareqbox}
\newtheorem{mydefinition}{Definition}[section]
\newtheorem{mytheorem}{Theorem}[section]

\newtheorem{mylemma}{Lemma}[section]
\newtheorem{myproposition}{Proposition}[section]
\newtheorem{mycorollary}{Corollary}[section]
\newtheorem{myexample}[myproposition]{Example}

\hypersetup{hidelinks}
\renewcommand{\qedsymbol}{$\blacksquare$}
\newenvironment{myproof}[1][Proof]{\par\noindent\textbf{#1.}\ }{\hfill\qedsymbol\par}
\let\code=\texttt
\let\proglang=\textsf
\newcommand{\pkg}[1]{{\fontseries{b}\selectfont #1}}
\Crefname{assumption}{Assumption}{Assumptions}
\Crefname{remark}{Remark}{Remarks}
\crefname{mylemma}{lemma}{lemmas}
\Crefname{mylemma}{Lemma}{Lemmas}
\crefname{myproposition}{proposition}{propositions}
\Crefname{myproposition}{Proposition}{Propositions}
\crefname{mytheorem}{theorem}{theorems}
\Crefname{mytheorem}{Theorem}{Theorems}
\crefname{mycorollary}{corollary}{corollaries}
\Crefname{mycorollary}{Corollary}{Corollaries}

\renewcommand{\EMAIL}[1]{\href{mailto:#1}{#1}}

\makeatletter
\def\theARTICLETOP{\vspace*{-24pt}}
\renewcommand{\theARTICLEABSTRACT}{%
  \HOOKb
  \vspace*{18pt}
  \begin{minipage}[t]{\textwidth}
    \parindent1em
    \ABSfont
    \noindent\theABSTRACT\endgraf
    \vskip5pt
    \theFUNDING
    \theKEYWORDS
    \theSUBJECTCLASS
    \theAREAOFREVIEW
    \theMSCCLASS
    \theORMSCLASS
    \if@BLINDREV\else\theHISTORY\fi
    \noindent\hrulefill
  \end{minipage}%
  \vspace*{0pt}%
}
\makeatother

\RRHSecondLine{}
\LRHSecondLine{}

\begin{document}

\RUNAUTHOR{Li, Chen, Taylor, and Mao}
\RUNTITLE{A Forecast Combination Framework for Hierarchical and Grouped Time Series Reconciliation}
\TITLE{A Forecast Combination Framework for Hierarchical and Grouped Time Series Reconciliation}

\ARTICLEAUTHORS{%

\AUTHOR{Xixi Li\footnotemark[2]}
\AFF{School of Economics and Management, Beihang University,
\EMAIL{xixili0720@gmail.com}}

\AUTHOR{Zijia Chen\footnotemark[2]}
\AFF{School of Economics and Management, Tsinghua University,
\EMAIL{chen-zj21@mails.tsinghua.edu.cn}}

\AUTHOR{James W. Taylor}
\AFF{Sa{\"i}d Business School, University of Oxford,
\EMAIL{james.taylor@sbs.ox.ac.uk}}

\AUTHOR{Xiaojie Mao\footnotemark[1]}
\AFF{School of Economics and Management, Tsinghua University,
\EMAIL{maoxj@sem.tsinghua.edu.cn}}

\footnotetext[2]{Xixi Li and Zijia Chen contributed equally and share first authorship.}
\footnotetext[1]{Corresponding author.}

}

\ABSTRACT{%
\noindent
Forecast combining and forecast reconciliation for hierarchical and grouped time series have largely developed as separate research areas. This paper connects the two by developing a forecast combination framework for forecast reconciliation. For each bottom-level series, we construct a maximal linearly independent set of structured candidate forecasts from aggregation constraints, and show that combining these candidates and aggregating the resulting bottom-level forecasts is equivalent to standard unbiased linear reconciliation. Within this representation, we prove that mean-squared-error optimal combination weights exactly recover the widely used Minimum Trace (MinT) reconciliation. We further show that the optimal weight problem is separable across different bottom-level series, each yielding a Bates--Granger optimal forecast combination. This reveals MinT as a collection of optimal combinations over hierarchy-induced candidate forecasts. For finite-sample estimation, we propose a modular penalized framework that nests existing MinT variants and supports rich extensions including covariance shrinkage, weight penalization, and scalable series-wise separate estimation. Empirical results show that the framework is practically implementable, competitive with existing methods, and can improve accuracy while preserving coherence. Overall, the forecast combination perspective offers new interpretations of existing reconciliation approaches and provides a flexible basis for designing new methods.
}
\KEYWORDS{Hierarchical Time Series, Forecast Combining, Point Forecasting, Forecast Reconciliation, Regularization}
\maketitle

\section{Introduction}
\label{Introduction}

Hierarchical and grouped time series are prevalent in many forecasting applications, such as retail sales categorized by product and region~\citep{makridakis2022m5}, or electricity demand data organized geographically across multiple layers, from individual smart meters to local, regional, and national aggregates~\citep{taieb2021hierarchical}. 
At different levels of the hierarchy or grouped structure, forecasts serve distinct purposes: high-quality bottom-level forecasts facilitate efficient operational planning, whereas accurate aggregate forecasts at high levels guide strategic decision-making and resource allocation.
Because these decisions are interconnected, forecasts of hierarchical and grouped series should ideally adhere to aggregation constraints. These constraints require that forecasts for detailed components combine to equal the forecast for their aggregate total, thereby maintaining internal consistency across the forecasting structure. Such coherence is essential for ensuring that operational and strategic decisions are based on a single, aligned view of the future. 

In practice, however, coherent forecasts do not arise automatically. Forecasts are often generated in a decentralized manner, for instance, with regional teams producing local forecasts while a central office creates national projections~\citep{panagiotelis2021forecast}. Even within a centralized forecasting process, different series in the whole system may be modeled separately using different data, methods, objectives, or levels of aggregation. As a result, independently generated base forecasts across different levels typically fail to satisfy the aggregation constraints. This problem, known as forecast incoherence, means that the forecast for an aggregate series may differ from the sum of the forecasts for its components. Such inconsistencies undermine trust in the forecasting process and impair decision-making~\citep{wang2025optimal}, as operational and strategic plans may be based on conflicting information. Therefore, ensuring coherence is not only statistically  desirable but also a critical requirement for organizational effectiveness.

Several approaches have been developed to address this incoherence. Early methods enforce coherence by generating forecasts at a single level of the hierarchy and then implementing adjustments: the bottom-up method first forecasts the most disaggregated series and then aggregates them upward~\citep{dunn1976aggregate}, whereas top-down procedures first forecast aggregate series and then disaggregate them to lower levels~\citep{athanasopoulos2009hierarchical,gross1990disaggregation}. While simple and coherent by construction, these approaches use information from only one level of the hierarchy and ignore potentially useful predictive signals at other levels. A more general response is to treat independently generated forecasts at all levels as base forecasts and then apply an ex post adjustment to enforce aggregation constraints. This perspective motivates more recent forecast reconciliation methods, whose central idea is to adjust the base forecasts by exploiting information from all levels simultaneously and then project them onto the coherent forecast space. A notable example is the Minimum Trace (MinT) approach~\citep{wickramasuriya2019optimal}, which minimizes the trace of the reconciled forecast error covariance matrix under the assumption of unbiased base forecasts. Empirical evidence demonstrates that MinT and its extensions consistently outperform simple bottom-up and top-down approaches~\citep{wickramasuriya2019optimal}. The MinT approach is concerned with point forecasting, which is also our focus in this paper. Although probabilistic forecasts are increasingly important for decision making in hierarchical and grouped time series \citep{bertani2025joint}, point forecasts remain a key component of many probabilistic forecasting methods \citep{bansal2020estimating,gaba2019assessing}. For instance, \citet{taieb2021hierarchical} use MinT point forecasts within a copula-based framework for hierarchical probabilistic forecasting.

While forecast reconciliation provides a principled linear framework for achieving coherence, its cross-level adjustment mechanism can be difficult to interpret in practice~\citep{panagiotelis2021forecast}. 
Consider a retailer forecasting sales for thousands of products across many stores and regions. 
Under reconciliation, the forecast for a specific item, such as bottled water in one store, may be adjusted not only according to its own base forecast, but also according to forecasts for other products, store totals, regional aggregates, and even company-wide sales totals, because all these series are linked through the same aggregation system. 
Statistically, such adjustments can be justified when historical forecast errors exhibit cross-series dependence or when aggregate series contain more stable signals, such as smoother seasonal patterns or trends~\citep{kourentzes2021elucidate}. 
From a business perspective, however, the mechanism is less transparent: it may be unclear why the forecast for bottled water in one store should be affected by forecasts for completely unrelated categories or distant regional aggregates. 
When reconciliation is driven by estimated statistical dependencies rather than explicit domain logic, practitioners may find reconciled forecasts difficult to explain or defend. 
This lack of domain-level interpretability can reduce trust in reconciled forecasts and hinder their adoption in operational decision-making~\citep{panagiotelis2021forecast}.

This interpretability challenge suggests the need for a more transparent description of how information from different parts of the hierarchical or grouped structure contributes to each reconciled forecast. 
Forecast combination provides a potentially useful framework for this purpose. 
In classical forecast combination, multiple candidate forecasts for the same target variable are aggregated into a single prediction, with combination weights indicating the relative contribution of each candidate forecast~\citep{wang2023forecast}. 
These weights offer a direct and traceable summary of how the final forecast is constructed. 
If reconciled forecasts can be represented similarly, the cross-level adjustment mechanism in forecast reconciliation can be interpreted as combining different forecast sources for the same series, rather than as an opaque projection driven only by statistical dependence of arbitrary series. 
This raises a natural question: can forecast reconciliation for hierarchical and grouped time series be formally understood through the lens of forecast combination?

Recent work has begun to explore this perspective. 
In particular, \citet{hollyman2021understanding} propose the level-conditional coherent (LCC) approach, which decomposes a multi-level hierarchy into a sequence of two-level subsystems and reconciles each subsystem separately. 
Under simplifying assumptions, LCC yields reconciled bottom-level forecasts that can be expressed as linear combinations of the base forecast of a series and indirect forecasts constructed from higher-level aggregate and sibling series. 
This representation provides useful intuition by showing that reconciliation can be viewed as combining multiple hierarchy-induced candidate forecasts. 
However, the combination form in LCC arises as an algebraic consequence of subsystem-level reconciliation rather than from an explicit construction and optimization of candidate forecasts. 
As a result, its interpretation depends on restrictive simplifying assumptions, considers only a limited set of hierarchy-induced candidates, and produces weights that are locally induced by separate subsystem problems rather than jointly optimized across all available candidates. 
Thus, although LCC offers an important first step toward a combination-based interpretation of reconciliation, it does not provide a general forecast combination formulation for forecast reconciliation, nor does it characterize the combining weights underlying standard reconciliation methods such as MinT.

Motivated by these limitations, we develop a forecast combination framework for hierarchical and grouped forecast reconciliation. 
Rather than treating forecast combination as an ex post interpretation of an already reconciled forecast, we explicitly construct candidate forecasts for each target series and combine them directly. 
We then show that this combination-based formulation is equivalent to standard linear reconciliation, which includes MinT. 
This framework is useful in two ways: it provides a new conceptual lens for  understanding forecast reconciliation, and it leads to a natural modeling space to design new methods. 
The main contributions are summarized as follows.

\paragraph{A forecast-combination representation of linear reconciliation.}
We establish a novel forecast-combination representation of standard linear reconciliation under a common unbiasedness constraint.
For each bottom-level series, we construct a maximal linearly independent set of structured candidate forecasts, including the direct base forecast of itself and indirect forecasts derived from base forecasts of aggregate-level series.  
Each candidate forecast can be interpreted as an alternative estimate of the same target series implied by a different aggregation constraint. 
We show that combining these candidates and then aggregating the resulting bottom-level forecasts spans exactly the same class of standard unbiased linear reconciliations. 
Thus, the proposed framework preserves the full flexibility of standard linear reconciliation while making each adjustment interpretable as a weighted combination of transparent forecast sources.

\paragraph{A theoretical bridge between MinT and the Bates--Granger forecast combination.}
Within this forecast combining representation, we characterize the mean squared error (MSE)-optimal combination weights and show that  they lead to a   reconciliation matrix that exactly recovers the MinT solution. 
More importantly, we prove that the global optimization problem is intrinsically separable: the off-diagonal blocks of the candidate forecast error covariance matrix do not affect the optimal weights, and the solution decomposes into independent per-series subproblems. 
Each subproblem admits the classical Bates--Granger optimal combination formula \citep{bates1969combination}. This result gives MinT a new interpretation as a collection of optimal forecast combination problems over hierarchy-induced candidate forecasts. 
It also provides a computationally scalable approach to estimation, since the weights can be obtained from solving smaller per-series subproblems in parallel. 
This scalability enables the approach to harness the wisdom of the crowd of candidate forecasts, making it suitable for large hierarchies or grouped structures.

\paragraph{A modular and extensible finite-sample estimation framework.}
Building on the combination representation of linear  reconciliation, 
 we develop a modular penalized optimization framework for finite-sample estimation of  combination weights. 
The framework offers flexibility by explicitly separating three design choices, each of which can be specified independently: the covariance estimation approach used to form the optimization objective, the penalty used to regularize combination weights, and the implementation strategy used to solve the optimization problem. 
This structure recovers standard MinT variants as unpenalized special cases, while naturally accommodating extensions such as factor-based covariance shrinkage, egalitarian weight penalization, and an implementation, which we refer to as \textit{Separate}, that computes weights series by series to reduce problem dimension and enable parallel computation. 
These components provide complementary ways to stabilize finite-sample estimation, especially when the hierarchy or group of time series is large relative to the available training sample. Thus, the combination formulation does more than reproduce MinT: it exposes new modeling levers that are difficult to express directly in the standard reconciliation formulation. We obtain encouraging results for the framework in an electricity generation hierarchical dataset and a labor force grouped time series dataset.

The rest of the paper is organized as follows. Section~\ref{review} introduces the notation, reviews linear reconciliation and MinT, and summarizes the existing combination interpretation based on LCC. Section~\ref{reconcil_mc} develops the proposed combination-based reconciliation representation. Section~\ref{extensions} presents the finite-sample estimation framework. Section~\ref{sec:empirical} reports the empirical evaluation, and Section~\ref{conclusion} concludes.
The Electronic Companion (EC) provides supplementary material, with
proofs of all theoretical results collected in
Section~\ref{sec:ec-proofs}.

\section{Background}
\label{review}
This section introduces the notation for hierarchical and grouped time series, reviews the linear reconciliation framework, and summarizes the existing forecast-combination interpretation.

\subsection{Notation}
\label{notation}

A hierarchical or grouped time series can be viewed as a collection of $n$ related time series that are organized according to a known aggregation structure and are subject to linear constraints. 
Specifically, the aggregation structure consists of $n_b$ bottom-level series and $n_a$ aggregated series, giving a total of $n = n_a + n_b$ time series. Let $\mathbf{b}_t \in \mathbb{R}^{n_b}$ denote the vector of bottom-level observations at time $t$, and let $\mathbf{a}_t \in \mathbb{R}^{n_a}$ collect the aggregated observations. The aggregated observations $\mathbf{a}_t$ satisfy 
\begin{equation*}
    \mathbf{a}_t=A_{\mathrm{agg}} \mathbf{b}_t,
\end{equation*}
where the matrix $A_{\mathrm{agg}} \in \mathbb{R}^{n_a \times n_b}$ maps the bottom-level series to the aggregated series. 
Stacking the aggregated and bottom-level series gives the full observation vector $\mathbf{y}_t=(\mathbf{a}_t^{\prime},\mathbf{b}_t^{\prime})^{\prime}$:
\begin{equation}
    \label{coherence}
\mathbf{y}_t = {S}\mathbf{b}_t, \text{ where }  {S}=\left[\begin{array}{c}
  {A_{\mathrm{agg}}} \\
  {I}_{n_b}
  \end{array}\right] \in \mathbb{R}^{n \times n_b},
\end{equation}
with ${I}_{n_b}$ denoting the identity matrix of dimension $n_b$. The property that the observation $\mathbf{y}_t$ satisfies the linear identity in~(\ref{coherence}) is referred to as \emph{coherence}.

For the aggregation structure described above, let $\mathcal{A}$ and $\mathcal{B}$ denote the sets of aggregated and bottom-level series, respectively. For any aggregated series $j \in \mathcal{A}$, $\mathrm{Desc}_{\mathrm{bot}}(j)$ denotes its bottom-level descendants. 
For a bottom-level series $i\in\mathcal{B}$, define its ancestor set by
$\mathrm{Anc}(i)
=
\{j\in\mathcal{A}:i\in\mathrm{Desc}_{\mathrm{bot}}(j)\}$
and its collateral set by
$\mathrm{Col}(i)
=
\mathcal{A}\setminus\mathrm{Anc}(i)$.
Intuitively, $\mathrm{Anc}(i)$ collects the aggregated series on the path from $i$ to the top level, including the Total series; $\mathrm{Col}(i)$ collects the remaining non-ancestor aggregated series.

\begin{figure}[t]
  \centering
  \includegraphics[width=0.4\textwidth]{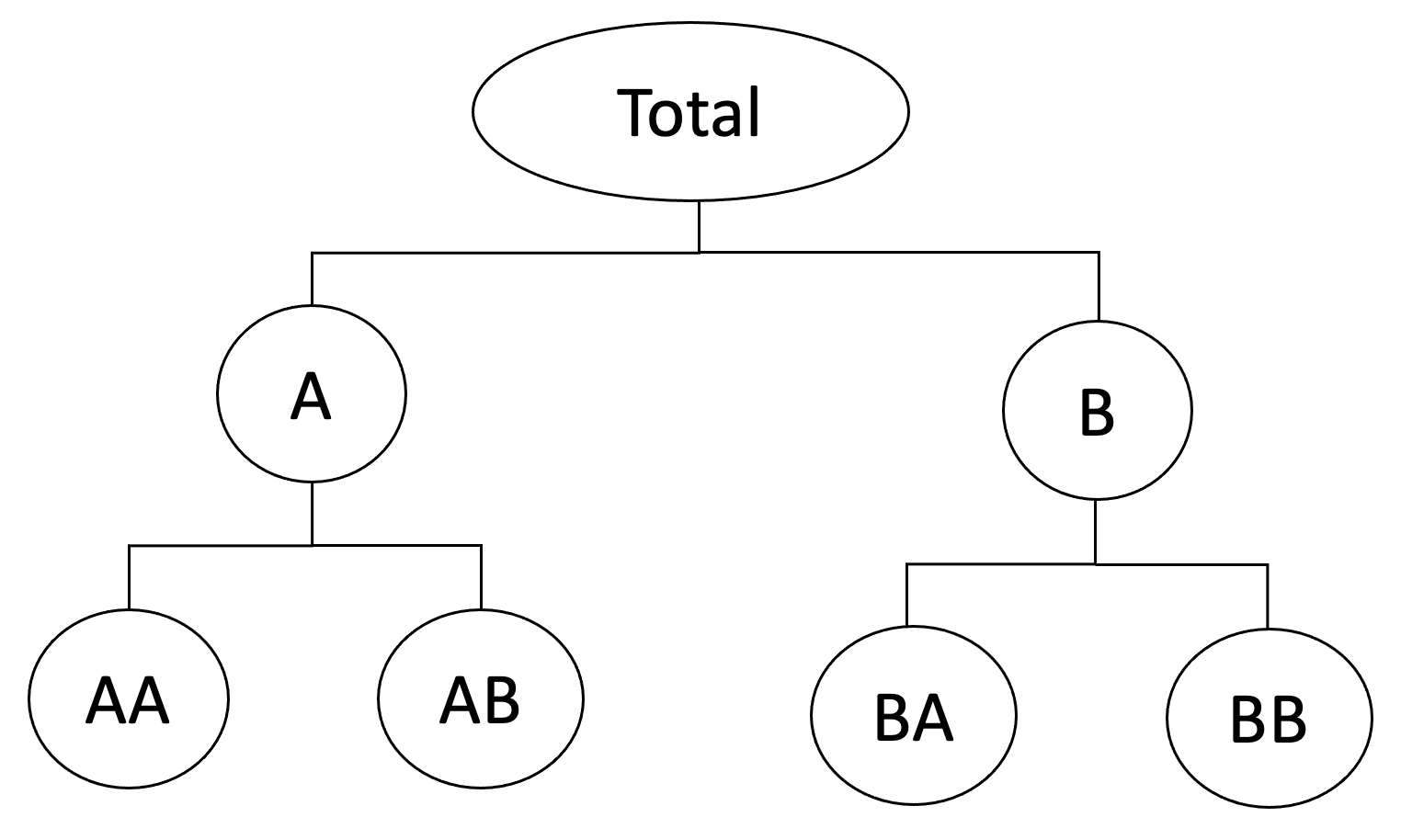}
  \vspace{-10pt}
  \caption{\centering A three-level hierarchical time series.}
  \label{example_1}
  \end{figure}

For ease of exposition, in the remainder of the paper we focus on hierarchical time series; all concepts and results extend directly to grouped time series under the same linear aggregation framework; see Section~\ref{app:grouped} for details.
As an example, Figure~\ref{example_1} shows a three-level hierarchy. As all paths from the top level series to the bottom level series are of the same length, the hierarchy is balanced \citep{di2024forecast}. The mathematical notation and aggregation representation introduced below are not restricted to balanced hierarchies; an unbalanced example is provided in Section~\ref{sec:unbalanced-structure}.

In this example, the full vector of series is
$\mathbf{y}_t=\bigl(y_t^{\mathrm{Total}},\,y_t^{A},\,y_t^{B},
\,y_t^{AA},\,y_t^{AB},\,y_t^{BA},\,y_t^{BB}\bigr)^{\prime}$.
The corresponding aggregated and bottom-level vectors are
$\mathbf{a}_t=\bigl(y_t^{\mathrm{Total}},\,y_t^{A},
\,y_t^{B}\bigr)^{\prime}$
and
$\mathbf{b}_t=\bigl(y_t^{AA},\,y_t^{AB},
\,y_t^{BA},\,y_t^{BB}\bigr)^{\prime}$,
respectively.
The corresponding aggregation (or summing) matrix $S$ that maps $\mathbf{b}_t$ to $\mathbf{y}_t$ is given by
\begin{align*}
S \;=\;
\begin{bmatrix}
1 & 1 & 1 & 1 \\
1 & 1 & 0 & 0 \\
0 & 0 & 1 & 1 \\
& & I_4 &
\end{bmatrix},
\end{align*}
where the first three rows correspond to the aggregation $y_t^{\mathrm{Total}}=y_t^{AA}+y_t^{AB}+y_t^{BA}+y_t^{BB},\, y_t^{A}=y_t^{AA}+y_t^{AB},\, y_t^{B}=y_t^{BA}+y_t^{BB}$, and 
$I_4$ denotes a $4\times 4$ identity matrix for the bottom-level series.

Using the notation defined above, $\mathcal{A}=\{\mathrm{Total},A,B\}$ and $\mathcal{B}=\{AA,AB,BA,BB\}$. For the bottom-level series $AA$, the ancestor and collateral sets are $\mathrm{Anc}(AA)=\{A,\mathrm{Total}\}$ and $\mathrm{Col}(AA)=\{B\}$, respectively. We use this hierarchy as a running example below.

Building on the coherence property of the actual time series, we extend the notion to forecasts.  Let $\hat{\mathbf{y}}_{t+h|t}=(\hat{\mathbf{a}}_{t+h|t}^{\prime},\hat{\mathbf{b}}_{t+h|t}^{\prime})^{\prime}$ denote the $h$-step-ahead forecasts for all series at time $t=1,\dots,T$, where $\hat{\mathbf{a}}_{t+h|t}$ and  $\hat{\mathbf{b}}_{t+h|t}$ denote the aggregated and bottom-level forecasts. The notion of coherence is extended from observations to forecasts as follows.

\begin{mydefinition}[Coherent forecasts]
\label{def-coherence}
The forecasts $\hat{\mathbf{y}}_{t+h|t}=
(\hat{\mathbf{a}}_{t+h|t}^{\prime},\hat{\mathbf{b}}_{t+h|t}^{\prime})^{\prime}$
are coherent if their components satisfy 
$\hat{\mathbf{a}}_{t+h|t} = A_{\mathrm{agg}}\,\hat{\mathbf{b}}_{t+h|t}$, or 
equivalently $\hat{\mathbf{y}}_{t+h|t} = S\,\hat{\mathbf{b}}_{t+h|t}$ for $t= 1,\dots, T$. We refer to these aggregating equations as the aggregation constraints.
\end{mydefinition}

 As we discussed in Section~\ref{Introduction}, the series in a hierarchy are often modeled separately, which typically produces base forecasts that violate the aggregation constraints in Definition~\ref{def-coherence}. Forecast reconciliation addresses this problem by adjusting base forecasts so that they satisfy the aggregation constraints~\citep{hyndman2011optimal, wickramasuriya2019optimal,panagiotelis2021forecast}.

\subsection{Linear Forecast Reconciliation and MinT}
\label{linear-reconc}

The most widely used reconciliation methods are linear.
They can be expressed through a two-step procedure that we formalize below.

\begin{mydefinition}[Standard Linear forecast reconciliation]
\label{def:linear-reconciliation}
A linear forecast reconciliation method first maps base forecasts $\hat{\mathbf{y}}_{t+h\mid t}$ to revised bottom-level forecasts $\tilde{\mathbf{b}}_{t+h\mid t}$ through a reconciliation matrix $P \in \mathbb{R}^{n_b \times n}$ and then aggregates them to all series by the summing matrix $S$:
\[
\tilde{\mathbf{b}}_{t+h\mid t}
    = P \, \hat{\mathbf{y}}_{t+h\mid t},
\qquad
\tilde{\mathbf{y}}_{t+h\mid t}
    = S \, \tilde{\mathbf{b}}_{t+h\mid t}.
\]
 Equivalently, the complete reconciliation map is
\begin{equation}
\label{linear-recon-1}
    \tilde{\mathbf{y}}_{t+h\mid t} = S\,P\,\hat{\mathbf{y}}_{t+h\mid t}.
\end{equation}
The final multiplication by $S$ guarantees that the reconciled forecasts are coherent. 
\end{mydefinition}

The reconciliation matrix $P$ in Definition~\ref{def:linear-reconciliation} is usually restricted by an admissibility condition. In the standard unbiasedness analysis of \citet{wickramasuriya2019optimal}, base forecasts are assumed to be conditionally unbiased and the reconciled forecasts remain unbiased only if
\[
PS=I_{n_b}.
\]
In other words, any matrix $P$ satisfying this constraint does not introduce additional bias through reconciliation.  
This constraint defines the class of admissible  reconciliation matrices considered below.

Within this class, the MinT method of
\citet{wickramasuriya2019optimal} chooses $P$ to minimize the trace of the conditional covariance matrix of the reconciled forecast errors:
\begin{equation} \label{mint-loss} P_{\mathrm{mint}}^\ast \in \operatorname*{arg\,min}_{P:\,PS=I_{n_b}} \mathbb{E}\bigl[ \lVert \mathbf{y}_{t+h} - SP\hat{\mathbf{y}}_{t+h\mid t} \rVert_2^2 \mid \mathcal{I}_t \bigr]. \end{equation}
Here, $\mathcal{I}_t$ denotes the information set available when the $h$-step-ahead base forecasts are produced.
Letting $\hat{\mathbf{e}}_{t+h|t}:=\mathbf{y}_{t+h}-\hat{\mathbf{y}}_{t+h|t}$ be the base forecast error, they further assume that the error covariance matrix $\Sigma_h=\mathrm{Var}(\hat{\mathbf{e}}_{t+h|t}\mid\mathcal{I}_t)$ only depends on $h$ but not $t$. When $\Sigma_h \succ {0}$ (i.e., strictly positive definite), the MinT  reconciliation matrix has the following unique closed-form solution: 
\begin{equation}
\label{mint-solution}
    P_{\mathrm{mint}}^\ast
    = (S^{\prime} \Sigma_h^{-1} S)^{-1} S^{\prime} \Sigma_h^{-1}.
\end{equation}

In practice, $\Sigma_h$ is estimated from in-sample forecast errors. Different covariance specifications yield the standard reconciliation methods summarized in Table~\ref{tab:recon-methods}, reported under the common proportional-covariance simplification $\Sigma_h=\kappa_h\Sigma_1$ with $\kappa_h=1$~\citep{wickramasuriya2019optimal}.
Here, $\Sigma_1$ denotes the covariance matrix of one-step-ahead base forecast errors, commonly estimated by the raw sample covariance
$\hat{\Sigma}_1^{\mathrm{raw}}
=
\frac{1}{T-1}
\sum_{t=1}^{T-1}
(\hat{\mathbf e}_{t+1|t}-\bar{\mathbf e})
(\hat{\mathbf e}_{t+1|t}-\bar{\mathbf e})^{\prime}$, 
where $\hat{\mathbf e}_{t+1|t}=\mathbf y_{t+1}-\hat{\mathbf y}_{t+1|t}$ and
$\bar{\mathbf e}=(T-1)^{-1}\sum_{t=1}^{T-1}\hat{\mathbf e}_{t+1|t}$.

\begin{table}[H]
\small
\refstepcounter{table}
\label{tab:recon-methods}

\noindent
\begin{minipage}{\textwidth}
{\small\renewcommand{\baselinestretch}{1}\selectfont
\textbf{Table~\thetable}\quad
Standard reconciliation methods and covariance estimators.
$\mathbf{1}_{n_b}$ is a vector of ones, $\operatorname{Diag}(\cdot)$ retains the diagonal,
and $\delta \in [0,1]$ is a shrinkage parameter.
\par}
\end{minipage}

\vspace{4pt}

\begin{threeparttable}
\renewcommand{\arraystretch}{1.08}
\begin{tabularx}{\textwidth}{llX}
\hline
Reconciliation method & $\hat{\Sigma}_h$ & Description \\
\hline
OLS \citep{hyndman2011optimal}
& ${I}$
& Data-independent, $P=\left(S^{\prime} S\right)^{-1} S^{\prime}$ \\
\noalign{\vskip 3pt}

WLSs \citep{athanasopoulos2017forecasting}
& $\operatorname{diag}({S} \mathbf{1}_{n_b})$
& Uncorrelated errors with homogeneous variance assumption \\
\noalign{\vskip 3pt}

WLSv \citep{hyndman2016fast}
& $\operatorname{Diag}(\hat{{\Sigma}}_1^{\mathrm{raw}})$
& Variance-based estimator, ignores cross-correlations \\
\noalign{\vskip 3pt}

MinT(Sample) \citep{wickramasuriya2019optimal}
& $\hat{{\Sigma}}_1^{\mathrm{raw}}$
& Full covariance estimator \\
\noalign{\vskip 3pt}

MinT(Shrink) \citep{wickramasuriya2019optimal}
& $\delta \operatorname{Diag}(\hat{{\Sigma}}_1^{\mathrm{raw}})+(1-\delta) \hat{{\Sigma}}_1^{\mathrm{raw}}$
& Shrinkage estimator: off-diagonal elements shrunk toward zero, diagonals preserved \\
\hline
\end{tabularx}
\end{threeparttable}
\end{table}

\subsection{Connection Between Forecast Reconciliation and Forecast Combination}\label{sec: connection}

The preceding discussion treats reconciliation through the matrix $P$. This subsection shifts to the forecast combination perspective, both to review the existing literature linking combination and reconciliation and to motivate the new framework developed in this paper.

Forecast combination refers to the process of aggregating a set of candidate forecasts into a single forecast. The seminal work of \citet{bates1969combination} proposed a linear combination with weights obtained by minimizing the forecast error variance. In finite samples, however, weight estimation can be unstable when there are many candidate forecasts, and the simple average can perform as well as, or even better than, more sophisticated weighted approaches~\citep{wang2023forecast}. This has motivated shrinkage and regularization methods, including shrinkage of weights toward equality~\citep{diebold2019machine, blanc2020bias} and Bayesian estimation of a common forecast error correlation through shrinkage toward a prior~\citep{soule2024heuristic}.

\citet{hollyman2021understanding} provide a preliminary connection between hierarchical forecast reconciliation and forecast combination through a level-conditional coherent (LCC) method. LCC decomposes a multi-level hierarchy into two-level subsystems and reconciles each subsystem separately. Under the simplifying assumption that bottom-level base forecast errors are uncorrelated, each local adjustment can be written as a weighted average of the target series' own base forecast and an indirect forecast formed from an ancestor aggregate after subtracting sibling contributions, all restricted to the corresponding local subsystem. The level-specific reconciled bottom-level forecasts are then averaged and mapped through $S$ to recover coherent forecasts for all series. A concrete example illustrating the LCC method is provided in \Cref{sec:LCC}.
Thus, LCC shows that reconciliation can sometimes be given a combination interpretation.
However, this interpretation is specific to the LCC construction and its simplifying assumptions, and therefore does not establish a general forecast-combination characterization of linear reconciliation.

Three limitations of LCC are particularly important for our purposes.
First, the candidate forecasts implicit in LCC are tied to ancestor-based local subsystems. For a given bottom-level series, the method uses only its own base forecast and indirect forecasts derived from the ancestor aggregates appearing in the corresponding two-level subsystems. This excludes other structurally valid forecast inputs formed from non-ancestor aggregates or from alternative decompositions of the hierarchy. Consequently, LCC works with a restricted candidate set and does not represent the full collection of valid forecast inputs.
Second, conditional on this restricted construction, the implied weights are generated by solving separate local reconciliation problems and then combined through simple averaging. Because each local problem is optimized in isolation, the final averaged weights are not guaranteed to achieve the global optimum of the full combination problem.
Third, LCC obtains its  combination form by assuming that the bottom-level base forecast errors within each two-level subsystem are mutually uncorrelated. This rules out cross-series dependence in the local reconciliation problems. Once such dependence is allowed, the local adjustment no longer reduces to the same simple weighted-average combination, so LCC does not provide a general combination characterization of linear reconciliation or MinT; see also the discussion in~\cite{di2024forecast}.

These limitations motivate a framework that starts from forecast combination itself rather than from local reconciliation steps. 
For each bottom-level series, we explicitly construct structurally valid candidate forecasts generated according to the aggregation constraints, select a maximal linearly independent and interpretable subset, and define reconciliation as a linear combination over this subset of forecasts. 
This formulation allows the combination weights to be studied directly, establishes equivalence with standard linear reconciliation without the local uncorrelated-error assumptions required by LCC, and provides the basis for connecting the resulting optimal weights to the widely-used MinT method. 
We next introduce the proposed framework formally.

\section{Reconciliation via Forecast Combination}
\label{reconcil_mc}
This section recasts forecast reconciliation as a structured forecast combination problem. We define candidate forecasts for each bottom-level series, construct the corresponding candidate-generation matrix, establish equivalence with standard linear reconciliation, and then characterize the population-optimal weights together with their connection to MinT.

\subsection{Candidate Forecasts for a Bottom-Level Series}
\label{indirect-forecast}

Within a hierarchical structure, each bottom-level series can often be expressed in multiple algebraically equivalent ways due to the aggregation constraints. 
For example, in the hierarchy of Figure~\ref{example_1}, the observation $y_{t+h}^{AA}$ for bottom-level series $AA$ can be represented directly as $y_{t+h}^{AA}$, or equivalently as $y_{t+h}^{A} - y_{t+h}^{AB}$, or as $y_{t+h}^{\mathrm{Total}} - y_{t+h}^{B} - y_{t+h}^{AB}$. 
In general, such identities take the form
\begin{equation}
\label{form-1}
    \mathbf{c}_i^{\prime} \mathbf{y}_{t+h} = b_{t+h}^{(i)},  \quad i=1,2,...,n_b,
\end{equation}
where $\mathbf{c}_i \in \mathbb{R}^n$ is a coefficient vector producing the $i$-th bottom-level observation from the full hierarchy. Different algebraic representations correspond to different choices of $\mathbf{c}_i$.

Substituting the hierarchical relation
$\mathbf{y}_{t+h}=S\mathbf{b}_{t+h}$ into~(\ref{form-1})
immediately yields, for each $i=1,2,\ldots,n_b$, the linear constraint
\begin{equation}
\label{linear-equality}
\mathbf{c}_i^{\prime}S\mathbf{b}_{t+h}
=b_{t+h}^{(i)}
=\mathbf{e}_i^\prime\mathbf{b}_{t+h},
 ~\forall\,\mathbf{b}_{t+h}
\iff
S^{\prime}\mathbf{c}_i=\mathbf{e}_i, 
\end{equation}
where $\mathbf{e}_i \in \mathbb{R}^{n_b}$ denotes the $i$-th standard basis vector. Although the representations in~(\ref{form-1}) are algebraically equivalent for the true values, forecast errors break the equalities, yielding multiple candidate forecasts for each bottom-level series. For instance, given base forecasts $\hat{\mathbf{y}}_{t+h|t}$, the three representations of $AA$ mentioned above produce $\hat{y}_{t+h|t}^{A A},\quad \hat{y}_{t+h|t}^{A} - \hat{y}_{t+h|t}^{AB}, $ and $ \hat{y}_{t+h|t}^{\mathrm{Total}} - \hat{y}_{t+h|t}^{B} - \hat{y}_{t+h|t}^{AB}$, which are three candidate forecasts that will typically not be the same. These forecasts are meaningful because they exploit the aggregation constraints to provide direct and indirect predictions for the same bottom-level series.
We formalize the notion of candidate forecasts as follows.
\begin{mydefinition}[Candidate forecasts for a bottom-level series]\label{def:candidate-forecasts}
For the $i$-th bottom-level series, define the admissible coefficient set as $\mathcal{C}_i
:=
\{\mathbf{c}_{i}\in\mathbb{R}^n:S^\prime\mathbf{c}_{i}=\mathbf{e}_i\}$. 
The corresponding set of all $h$-step-ahead candidate forecasts at time $t$ is
\[
\mathcal{F}_{t+h|t}^{\mathrm{all}}(i)
:= \bigl\{\mathbf{c}_{i}^{\prime}\hat{\mathbf{y}}_{t+h|t} 
   : \mathbf{c}_{i} \in \mathcal{C}_i\bigr\}.
\]
We refer to any element of this set as a candidate forecast for the $i$-th series.
\end{mydefinition}

We next characterize the structure of the admissible coefficient set
$\mathcal{C}_i$, which determines the maximum number of linearly independent
candidate forecasts for a given bottom-level series.

\begin{mylemma}[Affine structure of admissible coefficient vectors]
  \label{indep-set}
For any bottom-level series $i$, every admissible coefficient vector
$\mathbf{c}_i\in\mathcal{C}_i$ admits the representation
\begin{equation}
\label{coeffs}
\mathbf{c}_i
=
\mathbf{v}_{0}^{(i)}
+
\sum_{j=1}^{n_a}\alpha_j\mathbf{v}^{(j)},
\end{equation}
for some $(\alpha_1,\ldots,\alpha_{n_a})\in\mathbb{R}^{n_a}$,
where $\mathbf{v}_{0}^{(i)}=(\mathbf{0}_{n_a}^{\prime},\mathbf{e}_i^{\prime})^{\prime}\in \mathbb{R}^n$ is the direct-forecast coefficient vector for
  series $i$ that places $1$ on the entry corresponding to the bottom-level series $i$ and $0$ on all other entries (so that $S^{\prime}\mathbf{v}_{0}^{(i)}=\mathbf{e}_i$), and each $\mathbf{v}^{(j)} \in \mathbb{R}^n$ corresponds to the aggregate series $j \in \mathcal{A}$. Using the ordering of the series in $\mathbf{y}_t$ to index the coordinates, the $m$-th component of $\mathbf{v}^{(j)}$, for $m=1,\ldots,n$, is given by \begin{equation} \label{v_jl_compute} [\mathbf{v}^{(j)}]_m = \begin{cases} 1, & m=j,\\ -1, & m\in\mathrm{Desc}_{\mathrm{bot}}(j),\\ 0, & \text{otherwise}. \end{cases} \end{equation}

  Therefore, $\mathcal{C}_i$ is an affine subspace of dimension $n_a$, whose
  linear span has dimension $n_a+1$. Equivalently, no candidate-generation
  matrix whose rows are admissible coefficient vectors for series $i$ can have
  rank larger than $n_a+1$, so at most $n_a+1$ candidate forecasts in
  $\mathcal{F}_{t+h|t}^{\mathrm{all}}(i)$ are linearly independent.
  \end{mylemma}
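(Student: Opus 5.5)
The statement is a piece of linear algebra about the affine solution set of $S^\prime\mathbf{c}=\mathbf{e}_i$, and we will prove it in three steps.

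\textbf{Step 1: the kernel of $S^\prime$.} Write $S^\prime=[A_{\mathrm{agg}}^\prime,\ I_{n_b}]\in\mathbb{R}^{n_b\times n}$. Because of the identity block, $S^\prime$ has full row rank $n_b$. By rank--nullity, $\ker S^\prime$ therefore has dimension $n-n_b=n_a$.

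Next, each $\mathbf{v}^{(j)}$ lies in this kernel. Its aggregate part is the unit vector $\mathbf{e}_j\in\mathbb{R}^{n_a}$. Its bottom part is $-\mathbf{1}_{\mathrm{Desc}_{\mathrm{bot}}(j)}$, which is exactly minus the $j$-th row of $A_{\mathrm{agg}}$, since that row is the indicator of the bottom descendants of $j$. Hence
\[
S^\prime\mathbf{v}^{(j)}=A_{\mathrm{agg}}^\prime\mathbf{e}_j-(A_{\mathrm{agg}})_{j\cdot}^\prime=\mathbf{0}.
\]
Equivalently, the matrix $V=[\mathbf{v}^{(1)},\ldots,\mathbf{v}^{(n_a)}]$ equals $\begin{bmatrix} I_{n_a}\\ -A_{\mathrm{agg}}^\prime\end{bmatrix}$. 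Its top block is $I_{n_a}$, so the $\mathbf{v}^{(j)}$ are linearly independent. They thus form a basis of $\ker S^\prime$.

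\textbf{Step 2: the representation (\ref{coeffs}).} First, $S^\prime\mathbf{v}_0^{(i)}=\mathbf{e}_i$ holds because the bottom block of $S^\prime$ is $I_{n_b}$. So $\mathbf{v}_0^{(i)}$ is a particular solution. For any $\mathbf{c}_i\in\mathcal{C}_i$, the difference $\mathbf{c}_i-\mathbf{v}_0^{(i)}$ lies in $\ker S^\prime$, so by Step 1 it equals $\sum_j\alpha_j\mathbf{v}^{(j)}$. Since the basis is explicit, the coefficients can be given concretely: $\alpha_j=[\mathbf{c}_i]_j$, the aggregate coordinates of $\mathbf{c}_i$. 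Conversely, every vector of the form (\ref{coeffs}) is admissible. Hence $\mathcal{C}_i=\mathbf{v}_0^{(i)}+\ker S^\prime$ is an affine subspace of dimension $n_a$.

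\textbf{Step 3: the span and the rank bound.} The linear span of $\mathcal{C}_i$ is $\mathrm{span}\{\mathbf{v}_0^{(i)},\mathbf{v}^{(1)},\ldots,\mathbf{v}^{(n_a)}\}$. It has dimension $n_a+1$ because $\mathbf{v}_0^{(i)}\notin\ker S^\prime$: indeed $S^\prime\mathbf{v}_0^{(i)}=\mathbf{e}_i\neq\mathbf{0}$. Equivalently, the $n\times(n_a+1)$ matrix $[V,\mathbf{v}_0^{(i)}]$ has full column rank.

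Now take any candidate-generation matrix whose rows are admissible coefficient vectors for series $i$. Its row space lies in this span, so its rank is at most $n_a+1$. The bound is attained by the rows $\mathbf{v}_0^{(i)}$ and $\mathbf{v}_0^{(i)}+\mathbf{v}^{(j)}$, $j=1,\ldots,n_a$. Finally, the candidate forecasts are the entries of $C\hat{\mathbf{y}}_{t+h|t}$ for such a matrix $C$. As linear functionals of $\hat{\mathbf{y}}$, at most $n_a+1$ of them can be linearly independent.

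\textbf{Main obstacle.} There is no real obstacle here. The only point needing care is verifying $S^\prime\mathbf{v}^{(j)}=\mathbf{0}$ for general (possibly unbalanced) hierarchies. This follows from the fact that the $j$-th row of $A_{\mathrm{agg}}$ is the 0/1 indicator of $\mathrm{Desc}_{\mathrm{bot}}(j)$, which should be stated explicitly. The same argument then covers grouped structures unchanged.
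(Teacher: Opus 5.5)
Your proof is correct and follows essentially the same route as the paper. Like the paper, you take the particular solution $\mathbf{v}_0^{(i)}$ plus $\mathcal{N}(S')$ of dimension $n_a$, verify that the $\mathbf{v}^{(j)}$ lie in the kernel and are independent via their aggregate coordinates (your $I_{n_a}$ top block of $V$), and conclude the span has dimension $n_a+1$ from $\mathbf{v}_0^{(i)}\notin\mathcal{N}(S')$; your explicit coefficients $\alpha_j=[\mathbf{c}_i]_j$ and the direct witness $\{\mathbf{v}_0^{(i)},\mathbf{v}_0^{(i)}+\mathbf{v}^{(j)}\}$ for attaining the rank bound are small, welcome additions, since the paper defers attainment to Proposition~\ref{maximality}.
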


The proof of this lemma, deferred to \Cref{app:proof-indep-set}, 
characterizes the solution set of the linear system 
$S^\prime\mathbf{c}=\mathbf{e}_i$ for each bottom-level series $i$.
The affine representation in~\eqref{coeffs} implies that infinitely many
admissible candidate forecasts can be generated by varying the coefficients
$\alpha_j$. However, not all admissible choices have clear hierarchical
interpretations. Fractional choices, for instance, mix multiple aggregation
constraints in arbitrary proportions, so the resulting forecast
$\mathbf{c}_i^{\prime}\hat{\mathbf{y}}_{t+h|t}$ is difficult to interpret. A second problematic
construction involves adding an estimated aggregation residual, which is zero for the true
observations, to the direct forecast of a target series. In the running example,
for bottom-level series $AA$, setting $\alpha_1=\alpha_2=0$ and
$\alpha_3=1$ gives
$\hat{y}^{AA}_{t+h|t}
+\hat{y}^{B}_{t+h|t}
-\hat{y}^{BA}_{t+h|t}
-\hat{y}^{BB}_{t+h|t}$,
where
$\hat{y}^{B}_{t+h|t}-\hat{y}^{BA}_{t+h|t}-\hat{y}^{BB}_{t+h|t}$
estimates the aggregation residual
$y^{B}_{t+h}-y^{BA}_{t+h}-y^{BB}_{t+h}=0$.
This uses incoherence from the collateral branch $B$ rather than an indirect
forecast of $AA$ itself. We therefore focus on a structured and interpretable
subset of admissible candidates, ruling out fractional values of $\alpha_1, \dots, \alpha_{n_a}$ and avoiding reliance solely on incoherence from collateral branches. As shown next, this subset yields a maximal
linearly independent candidate set for each bottom-level series and can be
constructed efficiently from the aggregation structure alone.

\subsection{Constructing a Maximal and Structured Candidate Forecast Set}
\label{constructing-algorithm}

The set $\mathcal{F}_{t+h|t}^{\mathrm{all}}(i)$ in Definition~\ref{def:candidate-forecasts} contains all admissible
candidate forecasts and is generally infinite. We now construct a finite,
structured, and interpretable subset
\(\mathcal{F}_{t+h|t}^{\mathrm{str}}(i)\) of \(n_a+1\) candidate forecasts for each
bottom-level series $i \in \mathcal{B}$. For
each bottom-level series \(i\), recall that \(\mathrm{Anc}(i)\) denotes its aggregate
ancestors, including the total series, and \(\mathrm{Col}(i)\) denotes the
aggregate series that are not ancestors of \(i\).

We define
\[
\mathcal{F}_{t+h|t}^{\mathrm{str}}(i)
=
\mathcal{F}_{t+h|t}^{\mathrm{dir}}(i)
\cup
\mathcal{F}_{t+h|t}^{\mathrm{anc}}(i)
\cup
\mathcal{F}_{t+h|t}^{\mathrm{collateral}}(i).
\]
By construction,
$\mathcal{F}_{t+h|t}^{\mathrm{str}}(i)
\subset
\mathcal{F}_{t+h|t}^{\mathrm{all}}(i)$,
where $\mathcal{F}_{t+h|t}^{\mathrm{dir}}(i)
= 
\{\hat{b}^{(i)}_{t+h|t}\}$ contains the \emph{direct} candidate forecast of the target bottom-level series, generated by the coefficient vector $\mathbf{c}_{i}^{(0)} =  \mathbf{v}_{0}^{(i)}$. For example, for series $AA$ in 
Figure~\ref{example_1}, the base forecast $\hat{y}^{AA}_{t+h|t}$ itself is a direct candidate forecast. 
The sets
\(\mathcal{F}_{t+h|t}^{\mathrm{anc}}(i)\) and
\(\mathcal{F}_{t+h|t}^{\mathrm{collateral}}(i)\), defined below, contain \emph{indirect}
candidate forecasts constructed from ancestor and collateral aggregates,
respectively. Each forecast in these two sets is associated with an admissible
coefficient vector \(\mathbf{c}_{i}^{(\cdot)}\in\mathcal{C}_i\). Stacking the
direct, ancestor-based, and collateral-based coefficient vectors row-wise gives
a matrix \(C^{(i)}\), so that \(C^{(i)}\hat{\mathbf{y}}_{t+h|t}\) contains
exactly the forecasts in \(\mathcal{F}_{t+h|t}^{\mathrm{str}}(i)\).

\paragraph{Ancestor-based candidates.}
An ancestor-based candidate uses an aggregate forecast on the path from the
target series to the series in the top level, and subtracts the forecasts of the other
bottom-level descendants of that aggregate. For example, for \(AA\) in
Figure~\ref{example_1}, the ancestor \(A\) yields $\hat{y}^{A}_{t+h|t}-\hat{y}^{AB}_{t+h|t}$,
while the total series yields
$
\hat{y}^{\mathrm{Total}}_{t+h|t}
-\hat{y}^{AB}_{t+h|t}
-\hat{y}^{BA}_{t+h|t}
-\hat{y}^{BB}_{t+h|t}$.
In general, for each ancestor \(j\in\mathrm{Anc}(i)\), construct 
\small\[
\mathcal{F}_{t+h|t}^{\mathrm{anc}}(i)
=
\Bigl\{
\hat{y}^{j}_{t+h|t}
-
\sum_{m\in\mathrm{Desc}_{\mathrm{bot}}(j)\setminus\{i\}}
\hat{y}^{m}_{t+h|t}
:
j\in\mathrm{Anc}(i)
\Bigl\}.
\]\normalsize
The associated coefficient vectors are 
\(\mathbf{c}_{i}^{(j)}=\mathbf{v}^{(i)}_0+\mathbf{v}^{(j)}\) for $j \in \mathrm{Anc}(i)$.

\paragraph{Collateral-based candidates.}
Ancestor-based candidates use only aggregates on the path from the target series
to the series in the top level. Collateral-based candidates in addition use non-ancestor aggregates by
replacing a group of bottom-level forecasts with the forecast of their common
aggregate. For example, for \(AA\) in Figure~\ref{example_1}, start from the
total-level decomposition
$
\hat{y}^{\mathrm{Total}}_{t+h|t}
-\hat{y}^{AB}_{t+h|t}
-\hat{y}^{BA}_{t+h|t}
-\hat{y}^{BB}_{t+h|t}$. 
The bottom-level terms \(\hat{y}^{BA}_{t+h|t}+\hat{y}^{BB}_{t+h|t}\)
correspond to the collateral aggregate \(B\). Replacing this bottom-level sum
by the non-ancestor aggregate forecast \(\hat{y}^{B}_{t+h|t}\) gives the collateral-based
candidate
$
\hat{y}^{\mathrm{Total}}_{t+h|t}
-
\hat{y}^{B}_{t+h|t}
-
\hat{y}^{AB}_{t+h|t}$, 
which is a structurally valid forecast for \(AA\) using non-ancestor aggregate
information.
In general, for each collateral aggregate \(j\in\mathrm{Col}(i)\), construct the candidate forecasts 
\small\[
\mathcal{F}_{t+h|t}^{\mathrm{collateral}}(i)
=
\Bigl\{
\hat{y}^{\mathrm{Total}}_{t+h|t}
-
\hat{y}^{j}_{t+h|t}
-
\sum_{m\in B_i^j}
\hat{y}^{m}_{t+h|t}
:
j\in\mathrm{Col}(i)
\Bigl\},
\]\normalsize
where $B_i^j
=
\{1,\ldots,n_b\}
\setminus
\bigl(\{i\}\cup\mathrm{Desc}_{\mathrm{bot}}(j)\bigr)$ 
collects the bottom-level series that are neither the target \(i\) nor
bottom-level descendants of \(j\). The associated coefficient vectors are  $\mathbf{c}_{i}^{(j)}
=
\mathbf{v}^{(i)}_0+\mathbf{v}^{(\mathrm{Total})}-\mathbf{v}^{(j)}$ for $j \in \mathrm{Col}(i)$, where $\mathbf{v}^{(\mathrm{Total})}$ is the basis vector in \Cref{indep-set} corresponding to the top total series. 

\begin{algorithm}[!t]
  \caption{Efficient construction of candidate-generation matrix $C$}
  \label{algorithm-1}
  \renewcommand{\algorithmicrequire}{\textbf{Input:}}
  \renewcommand{\algorithmicensure}{\textbf{Output:}}
  \begin{algorithmic}[1]
  \Require Summing matrix $S$
  \Ensure Candidate-generation matrix $C$

  \State Compute $\{\mathbf{v}^{(1)},\ldots,\mathbf{v}^{(n_a)}\}$ using~\eqref{v_jl_compute}.
  \For{$i=1$ {\bfseries to} $n_b$} \Comment{Iterate over each bottom-level series $i$}
      \State Determine $\mathrm{Anc}(i)$, $\mathrm{Col}(i)$, and the direct-forecast coefficient vector  $\mathbf{v}_0^{(i)}$. 
      \State Form $C^{(i)}$ by row-wise stacking  $\mathbf{c}_{i}^{(0)} = \mathbf{v}_0^{(i)}$,
      $\mathbf{c}_{i}^{(j)} = \mathbf{v}_0^{(i)}+\mathbf{v}^{(j)}$ for $j\in\mathrm{Anc}(i)$,
      and $\mathbf{c}_{i}^{(j)} = \mathbf{v}_0^{(i)}+\mathbf{v}^{(\mathrm{Total})}-\mathbf{v}^{(j)}$ for $j\in\mathrm{Col}(i)$.
  \EndFor
  \State Stack $C^{(1)},\ldots,C^{(n_b)}$ row-wise to obtain $C$.
  \State \Return $C$
  \end{algorithmic}
  \end{algorithm}

The construction is summarized in Algorithm~\ref{algorithm-1}. For each
bottom-level series \(i\), the algorithm forms a block
\(C^{(i)}\in\mathbb{R}^{(n_a+1)\times n}\) by stacking the direct coefficient
vector together with the ancestor-based and collateral-based coefficient
vectors. 
The block \(C^{(i)}\) has exactly \(n_a+1\) rows: one direct row and one row
for each aggregate series, represented either through an ancestor-based or a
collateral-based construction. Intuitively, these non-direct rows correspond to
distinct aggregation constraints, thus the construction automatically
attains the maximum number of linearly independent candidate forecasts allowed
by the affine representation in~\eqref{coeffs}. The following proposition
formalizes this property.

\begin{myproposition}[Maximality of the structured candidate set]
  \label{maximality}
  For each bottom-level series \(i \in \mathcal{B}\), the candidate-generation matrix \(C^{(i)}\)
  has full row rank, i.e., \(n_a+1\). Accordingly, every admissible candidate
  forecast in \(\mathcal{F}_{t+h|t}^{\mathrm{all}}(i)\) can be expressed as a
  linear combination of the structured candidate forecasts in
  \(\mathcal{F}_{t+h|t}^{\mathrm{str}}(i)\).
  \end{myproposition}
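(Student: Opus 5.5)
The plan is to work in the basis furnished by \Cref{indep-set}. The vectors $\mathbf{v}_0^{(i)}$ and $\mathbf{v}^{(1)},\ldots,\mathbf{v}^{(n_a)}$ are linearly independent: each $\mathbf{v}^{(j)}$ has a $1$ in the aggregate coordinate $j$ and $0$ in every other aggregate coordinate. Hence $\sum_j \alpha_j\mathbf{v}^{(j)}$ restricted to the first $n_a$ coordinates equals $(\alpha_1,\ldots,\alpha_{n_a})$, which forces all $\alpha_j=0$. The vector $\mathbf{v}_0^{(i)}$ vanishes on the aggregate coordinates, so together these $n_a+1$ vectors span $\operatorname{span}(\mathcal{C}_i)$, which has dimension $n_a+1$ by \Cref{indep-set}.

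Write each row of $C^{(i)}$ in this basis. We get $C^{(i)} = M^{(i)} V^{(i)}$, where $V^{(i)}$ stacks $\mathbf{v}_0^{(i)},\mathbf{v}^{(1)},\ldots,\mathbf{v}^{(n_a)}$ and $M^{(i)}\in\mathbb{R}^{(n_a+1)\times(n_a+1)}$ records the coefficients:
\begin{itemize}
\item the direct row is $(1,0,\ldots,0)$;
\item the ancestor row for $j\in\mathrm{Anc}(i)$ is $(1,\mathbf{e}_j')$;
\item the collateral row for $j\in\mathrm{Col}(i)$ is $(1,\mathbf{e}_{\mathrm{Total}}'-\mathbf{e}_j')$.
\end{itemize}
Since $V^{(i)}$ has full row rank, it suffices to show that $M^{(i)}$ is nonsingular. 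This reduction is the central step.

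To show $M^{(i)}$ is invertible, subtract the direct row from every other row. This leaves the $n_a\times n_a$ block whose rows are $\mathbf{e}_j'$ for ancestors and $\mathbf{e}_{\mathrm{Total}}'-\mathbf{e}_j'$ for collaterals. The Total series is always an ancestor of $i$, so $\mathbf{e}_{\mathrm{Total}}'$ appears among the rows. Adding it to each collateral row turns that row into $\mathbf{e}_j'$, up to sign. The block then becomes a signed permutation of the identity, because $\mathrm{Anc}(i)$ and $\mathrm{Col}(i)$ partition $\mathcal{A}$ and each $j$ contributes exactly one row. Hence $\det M^{(i)}=\pm1$ and $C^{(i)}$ has rank $n_a+1$.

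For the second claim, every $\mathbf{c}\in\mathcal{C}_i$ lies in $\operatorname{span}(\mathcal{C}_i)$. This span has dimension $n_a+1$ and contains the rows of $C^{(i)}$, which are admissible, so it equals the row space of $C^{(i)}$. Therefore $\mathbf{c}'=\mathbf{w}'C^{(i)}$ for some $\mathbf{w}$, and $\mathbf{c}'\hat{\mathbf{y}}_{t+h|t}=\mathbf{w}'C^{(i)}\hat{\mathbf{y}}_{t+h|t}$ is a linear combination of the structured candidates. If affine weights are wanted, applying $S'$ gives $\mathbf{e}_i=\sum_k w_k\mathbf{e}_i$, so $\sum_k w_k=1$ automatically.

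The main obstacle is only bookkeeping. One must confirm that the Total series is a genuine aggregate in $\mathcal{A}$ and belongs to $\mathrm{Anc}(i)$ for every $i$, which holds for a hierarchy with a single top node. For grouped structures the argument relies on the same fact, so the existence of a unique Total aggregate should be stated explicitly.
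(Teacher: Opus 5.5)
Your proof is correct and is essentially the paper's argument written in coordinates. Subtracting the direct row isolates the null-space directions, and your second row operation (using the Total ancestor row to turn each collateral row into $\pm\mathbf{e}_j'$) is exactly the paper's identity $\mathbf{v}^{(k)}=\mathbf{v}^{(\mathrm{Total})}-(\mathbf{v}^{(\mathrm{Total})}-\mathbf{v}^{(k)})$, with your $\det M^{(i)}=\pm1$ replacing its dimension count; both arguments rely on the convention $\mathrm{Total}\in\mathrm{Anc}(i)$ that you flag. One small slip: to turn $\mathbf{e}_{\mathrm{Total}}'-\mathbf{e}_j'$ into $-\mathbf{e}_j'$ you must subtract the Total row, not add it.
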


It follows that multiplying \(C^{(i)}\) by the base forecast vector produces a
maximal linearly independent set of candidate forecasts for series \(i\):
\begin{equation}
\label{eq:candidates}
  \hat{\mathbf{y}}_{\mathrm{cand},t+h\mid t}^{(i)}
  =
  C^{(i)}\hat{\mathbf{y}}_{t+h\mid t}
  \in \mathbb{R}^{n_a+1},
  \qquad i=1,\ldots,n_b.
\end{equation}

For the example in Figure~\ref{example_1}, the structured candidate
forecast vector for series \(AA\) is
\(\hat{\mathbf{y}}_{\mathrm{cand},t+h\mid t}^{(AA)}
=
\bigl(
\hat y^{AA}_{t+h\mid t},
\hat y^{A}_{t+h\mid t}-\hat y^{AB}_{t+h\mid t},
\hat y^{\mathrm{Total}}_{t+h\mid t}
-\hat y^{AB}_{t+h\mid t}
-\hat y^{BA}_{t+h\mid t}
-\hat y^{BB}_{t+h\mid t},
\hat y^{\mathrm{Total}}_{t+h\mid t}
-\hat y^{B}_{t+h\mid t}
-\hat y^{AB}_{t+h\mid t}
\bigr)^{\prime}\), which consists of the direct
forecast for \(AA\), two ancestor-based indirect forecasts generated from
\(A\) and \(\mathrm{Total}\), and one collateral-based indirect forecast
generated from \(B\). Although the hierarchy in Figure~\ref{example_1} is balanced, the
candidate construction is not restricted to balanced structures.
Section~\ref{unbalanced-example} provides a step-by-step
application of Algorithm~\ref{algorithm-1} to an unbalanced hierarchy.

Stacking all candidate forecast vectors across all bottom-level series gives
\small\begin{equation}
\label{C}
  \hat{\mathbf{y}}_{\mathrm{cand},t+h\mid t}
   = 
  \begin{bmatrix}
     \hat{\mathbf{y}}_{\mathrm{cand},t+h\mid t}^{(1)}\\
     \vdots\\
     \hat{\mathbf{y}}_{\mathrm{cand},t+h\mid t}^{(n_b)}
   \end{bmatrix} = C\hat{\mathbf{y}}_{t+h\mid t}
  , \text{ where }   C=\begin{bmatrix}C^{(1)}\\\vdots\\C^{(n_b)}\end{bmatrix}.
\end{equation}\normalsize

In summary, Algorithm~\ref{algorithm-1} provides an efficient way to translate the
aggregation structure into candidate forecasts for every bottom-level series.
For each targeted bottom-level series, it constructs a maximal linearly independent set of
\(n_a+1\) candidates while preserving a clear interpretation: the first
candidate is the direct base forecast, the ancestor-based candidates are
indirect forecasts obtained from aggregates on the target's ancestral path, and
the collateral-based candidates are indirect forecasts that additionally use
non-ancestor aggregate information. This last class is absent
from the LCC construction discussed in Section~\ref{sec: connection}: LCC only uses
ancestor-based local subsystems and therefore omits valid candidates generated
from collateral aggregates. The proposed construction fills this gap and
retains the full structurally available information for the subsequent
combination step.

\subsection{A Combination Representation of Linear Reconciliation}
\label{equivalence}

The construction above provides, for each bottom-level series, several valid
candidate forecasts for the same target: the direct base forecast and a set of
indirect forecasts generated from the aggregation structure. Adopting the
widely-used \cite{bates1969combination} linear forecast combination approach, we combine these
candidate forecasts using series-specific weights. Specifically, for
bottom-level series \(i\), the combined forecast
\(\tilde{b}^{(i)}_{t+h \mid t}\in\mathbb{R}\) is given by 
\begin{equation*}
\tilde{b}^{(i)}_{t+h \mid t}
=
\mathbf{w}_i^\prime
\hat{\mathbf{y}}_{\mathrm{cand},t+h \mid t}^{(i)},
\quad i=1,\ldots,n_b,\ t=1,\ldots,T,
\end{equation*}
where \(\mathbf{w}_i\in\mathbb{R}^{n_a+1}\) is the combination-weight vector, which satisfies \(\mathbf{1}_{n_a+1}^\prime\mathbf{w}_i=1\).

Implementing this combination for all bottom-level series simultaneously, and
recalling that
$\hat{\mathbf{y}}_{\mathrm{cand},t+h\mid t} = C\hat{\mathbf{y}}_{t+h\mid t}$
from~\eqref{C}, the resulting   forecast vector
$\tilde{\mathbf{b}}_{t+h\mid t} =
(\tilde{b}^{(1)}_{t+h\mid t},\ldots,\tilde{b}^{(n_b)}_{t+h\mid t})^{\prime}$ is
\begin{equation*}
\label{eq:bottom-stack}
\tilde{\mathbf{b}}_{t+h \mid t} = \Phi(\mathbf{w}) C \hat{\mathbf{y}}_{t+h \mid t},
\quad t=1,\ldots,T,
\end{equation*}
where $\Phi(\mathbf{w}) \in \mathbb{R}^{n_b \times n_b(n_a+1)}$ is a
 matrix collecting all weight vectors:
$\mathbf{w} = (\mathbf{w}_1^{\prime}, \ldots, \mathbf{w}_{n_b}^{\prime})^{\prime}$:
\begin{equation}
\label{eq:phi}
\begin{aligned}
&\Phi(\mathbf{w})
=
\begin{bmatrix}
\mathbf{w}_1^{\prime} & \mathbf{0}_{n_a+1}^{\prime} & \cdots & \mathbf{0}_{n_a+1}^{\prime} \\
\mathbf{0}_{n_a+1}^{\prime} & \mathbf{w}_2^{\prime} & \cdots & \mathbf{0}_{n_a+1}^{\prime} \\
\vdots & \vdots & \ddots & \vdots \\
\mathbf{0}_{n_a+1}^{\prime} & \mathbf{0}_{n_a+1}^{\prime} & \cdots & \mathbf{w}_{n_b}^{\prime}
\end{bmatrix}, \\
&\text{where } \mathbf{1}_{n_a+1}^{\prime}\mathbf{w}_i = 1,
\qquad i=1,\ldots,n_b.
\end{aligned}
\end{equation}
where $\mathbf{0}_{n_a+1}^\prime$ denotes the transpose of a zero vector 
of length $n_a+1$.
The block-diagonal structure reflects the fact that each bottom-level series
draws exclusively on its own candidate forecasts.

Finally, applying the summing matrix $S$ recovers reconciled forecasts at all
levels of the hierarchy:
\begin{equation}
\label{eq:mc-recon}
\tilde{\mathbf{y}}_{t+h \mid t} 
= S \tilde{\mathbf{b}}_{t+h \mid t}
= S \underbrace{\Phi(\mathbf{w}) C}_{=:\, P} \hat{\mathbf{y}}_{t+h \mid t},
\quad t=1,\ldots,T.
\end{equation}
Thus, the combination-based reconciliation is a linear reconciliation in the
sense of Definition~\ref{def:linear-reconciliation}, with reconciliation matrix
\(P=\Phi(\mathbf{w})C\). This factorization has a simple interpretation:
\(C\) first maps the original base forecasts into candidate forecasts for each
bottom-level target series, and \(\Phi(\mathbf{w})\) then combines the
candidates for each target using the corresponding series-specific weights.

Although this combination-based reconciliation matrix \(P=\Phi(\mathbf{w})C\) appears to impose additional structure,
Theorem~\ref{Equivalence} shows that it entails no loss of generality: every
unbiased linear reconciliation matrix can be represented in this form, and vice versa.

\begin{mytheorem}[Equivalence with standard linear reconciliation in Definition~\ref{def:linear-reconciliation}]
  \label{Equivalence}
  Let \(C\) be the candidate-generation matrix constructed by
  Algorithm~\ref{algorithm-1}. A matrix \(P\in\mathbb{R}^{n_b\times n}\) satisfies
  the unbiasedness condition \(PS=I_{n_b}\) if and only if there exists a  
  block-diagonal combination matrix \(\Phi(\mathbf{w})\) of the form~\eqref{eq:phi}
  such that $P=\Phi(\mathbf{w})C$. 
  Thus, the proposed forecast-combination reconciliation in~\eqref{eq:mc-recon}
  and the standard unbiased linear reconciliation in~\eqref{linear-recon-1}
  represent the same class of reconciled forecasts.
  \end{mytheorem}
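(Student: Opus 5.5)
We prove the two inclusions separately, working row by row; this is legitimate because $\Phi(\mathbf{w})$ is block diagonal. Writing $\mathbf{p}_i^\prime$ for the $i$-th row of $P$, the identity $P=\Phi(\mathbf{w})C$ is equivalent to $\mathbf{p}_i^\prime=\mathbf{w}_i^\prime C^{(i)}$ for every $i=1,\ldots,n_b$. Likewise, $PS=I_{n_b}$ is equivalent to $S^\prime\mathbf{p}_i=\mathbf{e}_i$, that is, $\mathbf{p}_i\in\mathcal{C}_i$, for every $i$. So the theorem reduces to the following claim for each fixed bottom-level series $i$:
\[
\mathcal{C}_i=\bigl\{C^{(i)\prime}\mathbf{w}_i:\ \mathbf{w}_i\in\mathbb{R}^{n_a+1},\ \mathbf{1}_{n_a+1}^\prime\mathbf{w}_i=1\bigr\}.
\]

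\emph{Sufficiency (``if'').} By construction, every row of $C^{(i)}$ lies in $\mathcal{C}_i$. The direct row is $\mathbf{v}_0^{(i)}$. The ancestor rows are $\mathbf{v}_0^{(i)}+\mathbf{v}^{(j)}$ and the collateral rows are $\mathbf{v}_0^{(i)}+\mathbf{v}^{(\mathrm{Total})}-\mathbf{v}^{(j)}$. Each of these has the affine form~\eqref{coeffs}, and for that form $S^\prime\mathbf{v}^{(j)}=\mathbf{0}$, since the $j$-th column of $S^\prime$ equals the sum of the columns indexed by $\mathrm{Desc}_{\mathrm{bot}}(j)$. Hence $S^\prime C^{(i)\prime}=\mathbf{e}_i\mathbf{1}_{n_a+1}^\prime$. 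If $\mathbf{1}^\prime\mathbf{w}_i=1$, it follows that $S^\prime C^{(i)\prime}\mathbf{w}_i=\mathbf{e}_i$. An affine combination of points of the affine set $\mathcal{C}_i$ stays in $\mathcal{C}_i$, so $\Phi(\mathbf{w})C\,S=I_{n_b}$.

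\emph{Necessity (``only if'').} Take $\mathbf{p}_i\in\mathcal{C}_i$. By Proposition~\ref{maximality}, $C^{(i)}$ has full row rank $n_a+1$. By Lemma~\ref{indep-set}, the linear span of $\mathcal{C}_i$ has dimension $n_a+1$. The $n_a+1$ rows of $C^{(i)}$ lie in that span and are linearly independent, so they form a basis of it. Therefore there is a unique $\mathbf{w}_i$ with $\mathbf{p}_i=C^{(i)\prime}\mathbf{w}_i$.

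It remains to show that this $\mathbf{w}_i$ sums to one. Apply $S^\prime$ to both sides:
\[
\mathbf{e}_i=S^\prime\mathbf{p}_i=S^\prime C^{(i)\prime}\mathbf{w}_i=\mathbf{e}_i\,(\mathbf{1}^\prime\mathbf{w}_i),
\]
which forces $\mathbf{1}^\prime\mathbf{w}_i=1$. Stacking the vectors $\mathbf{w}_i$ into $\Phi(\mathbf{w})$ gives $P=\Phi(\mathbf{w})C$. The final sentence of the theorem then follows immediately from~\eqref{eq:mc-recon} and~\eqref{linear-recon-1}.

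\emph{Main obstacle.} The only nontrivial step is the spanning claim in the necessity direction. It rests on Proposition~\ref{maximality} and the dimension count of Lemma~\ref{indep-set}, both taken as given here; the rest is bookkeeping. The one point needing care is that the span of $\mathcal{C}_i$ is exactly the span of $\{\mathbf{v}_0^{(i)},\mathbf{v}^{(1)},\ldots,\mathbf{v}^{(n_a)}\}$. This is what lets full row rank of $C^{(i)}$ translate into spanning, rather than merely into $C^{(i)}$ having the right number of rows. If a self-contained check is wanted, one can verify directly that the change of basis from $\{\mathbf{v}_0^{(i)},\mathbf{v}^{(j)}\}$ to the rows of $C^{(i)}$ is triangular with nonzero diagonal, after ordering so that the Total row comes first.
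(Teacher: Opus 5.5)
Your proposal is correct and takes essentially the same route as the paper's proof. Both reduce the problem row by row, obtain sufficiency from $C^{(i)}S=\mathbf{1}_{n_a+1}\mathbf{e}_i^{\prime}$, get existence and uniqueness of $\mathbf{w}_i$ from Proposition~\ref{maximality} and Lemma~\ref{indep-set}, and derive the sum-to-one property by applying $S^{\prime}$. The only cosmetic difference is that you obtain spanning by a dimension count ($n_a+1$ independent rows inside the $(n_a+1)$-dimensional span of $\mathcal{C}_i$), whereas the paper argues from the null-space basis structure and also records the explicit solution $\mathbf{w}_i=(C^{(i)}C^{(i)\prime})^{-1}C^{(i)}\mathbf{p}_i$.
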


Theorem~\ref{Equivalence} shows that the proposed candidate-forecast
combination framework is not merely an interpretation of a particular new
reconciliation method, but an equivalent representation of standard unbiased linear reconciliation. In contrast, the LCC interpretation in Section~\ref{sec: connection}
is limited to ancestor-based local subsystems and therefore omits valid
candidate forecasts generated from non-ancestor aggregates. By constructing a
maximal candidate set, our framework makes it possible to study reconciliation
directly through combination weights without loss of any information. We next
characterize the optimal combination weights within this same representation.

\subsection{Optimal Combination Weights and MinT}
\label{new-approach}

We now characterize the optimal combination weights based on the proposed
candidate-forecast representation. 
We derive the closed-form optimal weights, analyze their structural properties, and establish the equivalence between the induced reconciliation matrix and MinT.

Recall that
$\mathbf{w}=(\mathbf{w}_1^\prime,\ldots,\mathbf{w}_{n_b}^\prime)^\prime$
stacks the series-specific combination-weight vectors, where
$\mathbf{w}_i\in\mathbb{R}^{n_a+1}$ contains the weights applied to the
candidate forecasts for bottom-level series $i$. With this notation, the
optimal combination weights are defined as the solution to the conditional
MSE minimization problem:

\begin{equation}
\label{obj1}
\begin{aligned}
\min_{\mathbf w}\quad
&\mathbb E\![
\|
\mathbf y_{t+h}
-
S\Phi(\mathbf w)C\hat{\mathbf y}_{t+h|t}\|_2^2
\mid
\mathcal I_t] \\
\text{s.t.}\quad
&
\mathbf1_{n_a+1}'\mathbf w_i=1,
\qquad
i=1,\ldots,n_b.
\end{aligned}
\end{equation}

Equivalently, the sum-to-one constraints can be written compactly as follows: 
\begin{equation}
\label{A}
\Pi\mathbf{w}=\mathbf{1}_{n_b}, \text{ where }
\Pi = \left[\begin{array}{cccc}
\mathbf{1}_{n_a+1}^{\prime} & \mathbf{0}_{n_a+1}^{\prime} & \cdots & \mathbf{0}_{n_a+1}^{\prime} \\
\mathbf{0}_{n_a+1}^{\prime} & \mathbf{1}_{n_a+1}^{\prime} & \cdots & \mathbf{0}_{n_a+1}^{\prime} \\
\vdots & \vdots & \ddots & \vdots \\
\mathbf{0}_{n_a+1}^{\prime} & \mathbf{0}_{n_a+1}^{\prime} & \cdots & \mathbf{1}_{n_a+1}^{\prime}
\end{array}\right].
\end{equation}

 We now further characterize the optimal weights that solve the problem in \eqref{obj1}. To this end, we follow MinT to assume that the base forecast errors are conditionally unbiased, $\mathbb{E}[\hat{\mathbf{e}}_{t+h|t}\mid\mathcal{I}_t]=\mathbf{0}$, and that $\Sigma_h=\mathrm{Var}(\hat{\mathbf{e}}_{t+h|t}\mid\mathcal{I}_t)$ for the base forecast error $\hat{\mathbf{e}}_{t+h|t}=\mathbf{y}_{t+h}-\hat{\mathbf{y}}_{t+h|t}$ is invariant across $t$. Then \eqref{obj1} has a useful convex quadratic
 program reformulation as follows.

\begin{mytheorem}
  \label{quadratic-theorem}
A vector $\mathbf{w}^\ast$ solves~\eqref{obj1} if and only if it solves the convex quadratic program
\begin{equation}
\label{quadratic-formulation}
\begin{aligned}
    & \min_{\mathbf{w}} \ \mathbf{w}^{\prime} Q_h \mathbf{w},
    \quad \mathrm{s.t.} \quad \Pi\mathbf{w} = \mathbf{1}_{n_b},
\end{aligned}
\end{equation}
where $Q_h \in \mathbb{R}^{n_b(n_a+1) \times n_b(n_a+1)}$ is the matrix $Q_h
=
\bigl\{(S^\prime S)\otimes \mathbf{1}_{n_a+1}\mathbf{1}_{n_a+1}^\prime\bigr\}
\circ
(C\Sigma_h C^\prime)$.
Here $\otimes$ denotes the Kronecker product and $\circ$ denotes the
Hadamard product. Equivalently, for $i,i^\prime = 1,\ldots,n_b$, the $(i,i')$-th
block of $Q_h$ is
$[Q_h]_{i,i'}=z_{i,i'}C^{(i)}\Sigma_h(C^{(i')})^\prime \in \mathbb{R}^{(n_a+1) \times (n_a+1)}$, with
$z_{i,i'}$ as the $(i,i')$-entry of $S^\prime S$.
\end{mytheorem}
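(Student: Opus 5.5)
The plan is to rewrite the conditional MSE in~\eqref{obj1} as a quadratic form in $\mathbf w$ on the feasible set, using the sum-to-one constraints to cancel the bias. I will first show that any feasible $\mathbf w$ gives an unbiased reconciliation map $P=\Phi(\mathbf w)C$ with $PS=I_{n_b}$. Each row of $C^{(i)}$ lies in $\mathcal C_i$, so $C^{(i)}S=\mathbf 1_{n_a+1}\mathbf e_i^\prime$ by~\eqref{linear-equality}. The $i$-th row of $\Phi(\mathbf w)CS$ is then $\mathbf w_i^\prime C^{(i)}S=(\mathbf 1^\prime\mathbf w_i)\mathbf e_i^\prime=\mathbf e_i^\prime$. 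This is the direction "feasible $\Rightarrow$ unbiased" of Theorem~\ref{Equivalence}, which we may also simply invoke.

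With this in hand, I will write $\mathbf y_{t+h}=S\mathbf b_{t+h}=SPS\mathbf b_{t+h}=SP\mathbf y_{t+h}$. The reconciled error is then
\[
\mathbf y_{t+h}-SP\hat{\mathbf y}_{t+h|t}=SP\hat{\mathbf e}_{t+h|t}=S\Phi(\mathbf w)C\hat{\mathbf e}_{t+h|t}.
\]
Because $\mathbb E[\hat{\mathbf e}_{t+h|t}\mid\mathcal I_t]=\mathbf 0$ and $\Sigma_h$ does not depend on $t$, the objective equals
\[
\operatorname{tr}\bigl(S\Phi(\mathbf w)C\Sigma_hC^\prime\Phi(\mathbf w)^\prime S^\prime\bigr)
=\operatorname{tr}\bigl(S^\prime S\,\Phi(\mathbf w)\,C\Sigma_hC^\prime\,\Phi(\mathbf w)^\prime\bigr).
\]
This is exactly where the constraint matters: without it, a bias term involving $\mathbf b_{t+h}$ would remain. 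So the equivalence holds on the feasible set, which is all that "solves~\eqref{obj1} iff solves~\eqref{quadratic-formulation}" requires, since both problems share the constraint $\Pi\mathbf w=\mathbf 1_{n_b}$.

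Next I will expand the trace blockwise. Partition $M:=C\Sigma_hC^\prime$ into blocks $M_{i,i'}=C^{(i)}\Sigma_h(C^{(i')})^\prime$. Since $\Phi(\mathbf w)$ is block diagonal with rows $\mathbf w_i^\prime$, the $(i,i')$ entry of $\Phi M\Phi^\prime$ is $\mathbf w_i^\prime M_{i,i'}\mathbf w_{i'}$. This gives
\[
\operatorname{tr}\bigl(S^\prime S\,\Phi M\Phi^\prime\bigr)=\sum_{i,i'}z_{i,i'}\,\mathbf w_i^\prime M_{i,i'}\mathbf w_{i'}=\mathbf w^\prime Q_h\mathbf w,
\]
where $[Q_h]_{i,i'}=z_{i,i'}M_{i,i'}$, using the symmetry $z_{i,i'}=z_{i',i}$. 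Finally I will check that this block matrix equals $\{(S^\prime S)\otimes\mathbf 1\mathbf 1^\prime\}\circ(C\Sigma_hC^\prime)$. The $(i,i')$ block of the Kronecker factor is $z_{i,i'}\mathbf 1\mathbf 1^\prime$, and its Hadamard product with $M_{i,i'}$ gives $z_{i,i'}M_{i,i'}$.

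For convexity, $\mathbf w^\prime Q_h\mathbf w$ is a trace of the form $\operatorname{tr}(X\Sigma_hX^\prime)\ge 0$ with $X=S\Phi(\mathbf w)C$, and this holds for every $\mathbf w$, not only feasible ones. The trace identity is purely algebraic, so $Q_h\succeq0$ (equivalently, it is the Hadamard product of two PSD matrices). The feasible set is affine. Since the two objectives coincide on the common feasible set, their minimizers coincide.

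The main obstacle is the bias cancellation: recognizing that feasibility implies $PS=I_{n_b}$, which removes the $\mathbf b_{t+h}$-dependent term. The index bookkeeping for the Kronecker–Hadamard form is routine but needs care.
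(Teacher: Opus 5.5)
Your proof is correct and follows essentially the same route as the paper. The paper also uses the sum-to-one constraint, via $C^{(i)}S=\mathbf{1}_{n_a+1}\mathbf{e}_i^{\prime}$, to write each combined bottom-level error as $\mathbf{w}_i^{\prime}C^{(i)}\hat{\mathbf{e}}_{t+h|t}$, expands $\|S\tilde{\mathbf{e}}_{t+h|t}\|_2^2$ through the entries $z_{i,i'}$ of $S^{\prime}S$, and takes conditional expectations; this is your $PS=I_{n_b}$ and trace computation carried out series by series, and your explicit checks of the Kronecker--Hadamard identity and of $Q_h\succeq 0$ for all $\mathbf{w}$ are small additions that the paper leaves implicit.
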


In fact, when the base forecast error covariance matrix is strictly positive definite, i.e., $\Sigma_h \succ {0}$, it can be proved that the matrices $Q_h$ and $\Pi Q_h^{-1} \Pi^\prime$ are also strictly positive 
definite. As a result, the quadratic program in \eqref{quadratic-formulation} is strictly convex, admitting a unique closed-form solution.

\begin{myproposition}[Closed-form solution]
  \label{closed_solution}
  Under the candidate-generation construction in Algorithm~\ref{algorithm-1}, if the base forecast
  error covariance matrix satisfies $\Sigma_h \succ 0$, then the quadratic
  program in~\eqref{quadratic-formulation} has a unique  closed-form solution: the optimal
  combination weights are
  \begin{equation}
  \label{solution-w}
  \mathbf{w}^\ast
  =
  Q_h^{-1}\Pi^{\prime}
  (\Pi Q_h^{-1}\Pi^{\prime})^{-1}
  \mathbf{1}_{n_b}.
  \end{equation}
  \end{myproposition}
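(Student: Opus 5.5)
The proof has two parts. First we show that $Q_h \succ 0$ and that $\Pi$ has full row rank, so that $\Pi Q_h^{-1}\Pi'$ is positive definite and the formula \eqref{solution-w} is well defined. Then we solve the equality-constrained strictly convex quadratic program \eqref{quadratic-formulation} through its KKT (Lagrangian) conditions.

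For positive definiteness, we compute the quadratic form directly, starting from the objective in Theorem~\ref{quadratic-theorem}. For any $\mathbf{w}$, write $\Phi(\mathbf{w})C = \sum_i \mathbf{e}_i \mathbf{w}_i' C^{(i)}$. By the block formula $[Q_h]_{i,i'} = z_{i,i'}C^{(i)}\Sigma_h (C^{(i')})'$ we get
\[
\mathbf{w}'Q_h\mathbf{w}=\sum_{i,i'} z_{i,i'}\,\mathbf{w}_i'C^{(i)}\Sigma_h(C^{(i')})'\mathbf{w}_{i'}=\operatorname{tr}\bigl(S\,\Phi(\mathbf{w})C\,\Sigma_h\,C'\Phi(\mathbf{w})'S'\bigr).
\]
Since $\Sigma_h\succ0$, this trace is nonnegative. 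It vanishes only if $S\Phi(\mathbf{w})C=0$. Because $S$ contains $I_{n_b}$, it has full column rank, so this forces $\Phi(\mathbf{w})C=0$, i.e.\ $\mathbf{w}_i'C^{(i)}=0$ for every $i$. By Proposition~\ref{maximality}, each $C^{(i)}$ has full row rank, so $\mathbf{w}_i=0$ for every $i$. Hence $Q_h\succ0$ and $Q_h^{-1}$ exists.

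Next, $\Pi$ has full row rank $n_b$: its rows have disjoint supports and each is nonzero. Therefore $\Pi Q_h^{-1}\Pi'\succ0$ and is invertible. This step is the main obstacle, because $Q_h$ is a Hadamard product and is not obviously definite; the trace rewriting above resolves it.

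With these facts, the problem minimizes a strictly convex objective over a nonempty affine set; for example, $\mathbf{w}_i=\mathbf{e}_1$ for all $i$ is feasible. A minimizer therefore exists and is unique. Since the constraints are linear, the KKT conditions are necessary and sufficient. Form the Lagrangian $\mathbf{w}'Q_h\mathbf{w}-2\boldsymbol\lambda'(\Pi\mathbf{w}-\mathbf{1}_{n_b})$. Stationarity gives $Q_h\mathbf{w}=\Pi'\boldsymbol\lambda$, so $\mathbf{w}=Q_h^{-1}\Pi'\boldsymbol\lambda$. Substituting into the constraint gives $\Pi Q_h^{-1}\Pi'\boldsymbol\lambda=\mathbf{1}_{n_b}$, hence $\boldsymbol\lambda=(\Pi Q_h^{-1}\Pi')^{-1}\mathbf{1}_{n_b}$. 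This yields \eqref{solution-w}.

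Alternatively, one can verify optimality directly. For any feasible $\mathbf{w}=\mathbf{w}^\ast+\mathbf{d}$ with $\Pi\mathbf{d}=0$, the cross term $2\mathbf{d}'Q_h\mathbf{w}^\ast=2\mathbf{d}'\Pi'\boldsymbol\lambda$ vanishes. The objective therefore equals $\mathbf{w}^{\ast\prime}Q_h\mathbf{w}^\ast+\mathbf{d}'Q_h\mathbf{d}$, which exceeds the value at $\mathbf{w}^\ast$ unless $\mathbf{d}=0$. This confirms both optimality and uniqueness.
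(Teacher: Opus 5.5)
Your proof is correct and follows essentially the same route as the paper: your identity $\mathbf{w}'Q_h\mathbf{w}=\operatorname{tr}\bigl(S\Phi(\mathbf{w})C\Sigma_h C'\Phi(\mathbf{w})'S'\bigr)$ is just the quadratic form of the paper's factorization $Q_h=\bar{C}(S'S\otimes\Sigma_h)\bar{C}'$ with $\bar{C}=\mathrm{blkdiag}(C^{(1)},\dots,C^{(n_b)})$. Both arguments use the same ingredients (full column rank of $S$, full row rank of each $C^{(i)}$ from Proposition~\ref{maximality}, and full row rank of $\Pi$) and then solve the same Lagrangian system; your closing check via $\mathbf{w}=\mathbf{w}^\ast+\mathbf{d}$ is a self-contained confirmation of optimality and uniqueness, which the paper instead obtains from strict convexity.
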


  Although Proposition~\ref{closed_solution} gives a closed-form expression,
  the formula in~\eqref{solution-w} is difficult to use directly in large
  hierarchies. It requires inverting the full matrix $Q_h$, whose dimension is
  $n_b(n_a+1)\times n_b(n_a+1)$, and also the matrix
  $\Pi Q_h^{-1}{\Pi}^\prime$. Since the off-diagonal blocks of $Q_h$ encode
  cross-series dependence among candidate forecast errors, estimating and
  inverting the full matrix can be costly and unstable when the hierarchy is
  large.

  Surprisingly, we discover that these off-diagonal blocks do not affect the optimal combination
  weights. To state this result, define the block-diagonal counterpart of $Q_h$ as $  Q_h^{\mathrm{bd}}
  =
  \operatorname{blkdiag}\!\left(
  [Q_h]_{1,1},\ldots,[Q_h]_{n_b,n_b}
  \right)$,
  which retains only the within-series candidate-error covariance blocks and sets
  all cross-series blocks to zero. 
  Consider the corresponding block-diagonal quadratic program
\begin{equation}
\label{quadratic-formulation-bd}
\begin{aligned}
    & \min_{\mathbf{w}} \ \mathbf{w}^{\prime} Q_h^{\mathrm{bd}} \mathbf{w},
    \quad \mathrm{s.t.} \quad \Pi \mathbf{w} = \mathbf{1}_{n_b}.
\end{aligned}
\end{equation}

Because $Q_h^{\mathrm{bd}}$ is block diagonal and the constraints
$\Pi \mathbf{w}=\mathbf{1}_{n_b}$ impose only the series-wise restrictions
$\mathbf{1}_{n_a+1}^{\prime}\mathbf{w}_i=1$,
problem~\eqref{quadratic-formulation-bd} decouples into $n_b$ independent
subproblems. Specifically, for each $i=1,\ldots,n_b$,
\begin{equation}
\label{eq:subproblem}
\min_{\mathbf{w}_i}\quad
\mathbf{w}_i^{\prime}[Q_h]_{i,i}\mathbf{w}_i
\qquad
\text{s.t.}\quad
\mathbf{1}_{n_a+1}^{\prime}\mathbf{w}_i=1.
\end{equation}
Let
$\mathbf{w}^{\mathrm{sep}}
=
(\mathbf{w}_1^{\mathrm{sep}\prime},\ldots,
\mathbf{w}_{n_b}^{\mathrm{sep}\prime})^\prime$
denote the solution obtained by solving these independent subproblems. 

\begin{myproposition}[Separability of optimal weights]
  \label{prop:equivalence}
  Under the conditions in \Cref{closed_solution}, for each bottom-level series $i$, the solution to the $i$-th subproblem of
  \eqref{quadratic-formulation-bd} is
  \begin{equation}
  \label{eq:BG}
  \mathbf{w}_i^{\mathrm{sep}} =
  \frac{(C^{(i)}\Sigma_h C^{(i)\prime})^{-1}\mathbf{1}_{n_a+1}}
  {\mathbf{1}_{n_a+1}^\prime
  (C^{(i)}\Sigma_h C^{(i)\prime})^{-1}\mathbf{1}_{n_a+1}},
  \quad i=1,\ldots,n_b.
  \end{equation}
  Moreover, the stacked solution
  $\mathbf{w}^{\mathrm{sep}}$ coincides with the solution to the full
  quadratic program~\eqref{quadratic-formulation}:
  \[
  \mathbf{w}^\ast=\mathbf{w}^{\mathrm{sep}}.
  \]
  \end{myproposition}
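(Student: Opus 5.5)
The plan has two parts: first the closed form for each subproblem, then the identification $\mathbf{w}^\ast=\mathbf{w}^{\mathrm{sep}}$. The second part goes through the reconciliation matrices and uses the uniqueness in \Cref{closed_solution}.

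\textbf{Subproblems.} By \Cref{quadratic-theorem}, $[Q_h]_{i,i}=z_{i,i}\,C^{(i)}\Sigma_h C^{(i)\prime}$. Here $z_{i,i}=(S'S)_{ii}\ge 1$ because of the identity block of $S$. Since $C^{(i)}$ has full row rank by \Cref{maximality} and $\Sigma_h\succ0$, this block is positive definite. The $i$-th subproblem~\eqref{eq:subproblem} is then a strictly convex equality-constrained quadratic program. The Lagrange conditions $2[Q_h]_{i,i}\mathbf{w}_i=\lambda\mathbf{1}$ and $\mathbf{1}'\mathbf{w}_i=1$ give~\eqref{eq:BG} directly. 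The positive scalar $z_{i,i}$ cancels between the numerator and the denominator.

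\textbf{Identification via MinT.} Rather than manipulating the Kronecker--Hadamard matrix $Q_h$, I would work with the reconciliation matrices. Set $P^{\mathrm{sep}}=\Phi(\mathbf{w}^{\mathrm{sep}})C$ and $P^\ast=\Phi(\mathbf{w}^\ast)C$. By \Cref{Equivalence}, as $\mathbf{w}$ ranges over the feasible set, $\Phi(\mathbf{w})C$ ranges exactly over $\{P:PS=I\}$. Together with \Cref{quadratic-theorem}, this means $P^\ast$ minimizes the MinT objective~\eqref{mint-loss}, so $P^\ast=P^\ast_{\mathrm{mint}}$ by uniqueness of~\eqref{mint-solution}.

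The key step is to show that $P^{\mathrm{sep}}=P^\ast_{\mathrm{mint}}$ as well. Row $i$ of $P^{\mathrm{sep}}$ is $\mathbf{p}_i'=\mathbf{w}_i^{\mathrm{sep}\prime}C^{(i)}$. By \Cref{indep-set} and \Cref{maximality}, the rows of $C^{(i)}$ span the linear hull of $\mathcal{C}_i$, and $\mathcal{C}_i$ is exactly the set of $\mathbf{w}_i'C^{(i)}$ with $\mathbf{1}'\mathbf{w}_i=1$. The $i$-th subproblem is therefore
\[
\min_{\mathbf{c}\in\mathbb{R}^n}\ \mathbf{c}'\Sigma_h\mathbf{c}\quad\text{s.t.}\quad S'\mathbf{c}=\mathbf{e}_i,
\]
which is the minimum-variance unbiased estimator of $b^{(i)}$. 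Its solution is $\mathbf{c}=\Sigma_h^{-1}S(S'\Sigma_h^{-1}S)^{-1}\mathbf{e}_i$, i.e.\ the $i$-th row of $P^\ast_{\mathrm{mint}}$. This is the standard Gauss--Markov/GLS fact and can be checked by Lagrange multipliers. Stacking over $i$ gives $P^{\mathrm{sep}}=P^\ast_{\mathrm{mint}}=P^\ast$.

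\textbf{Back to weights.} The map $\mathbf{w}_i\mapsto\mathbf{w}_i'C^{(i)}$ is injective because $C^{(i)}$ has full row rank. Hence $\Phi(\mathbf{w})C$ determines $\mathbf{w}$ uniquely, and $\mathbf{w}^\ast=\mathbf{w}^{\mathrm{sep}}$.

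The main obstacle is the key step: seeing why the off-diagonal blocks of $Q_h$, which couple series through the weights $z_{i,i'}$, do not matter. The argument above sidesteps this by noting that MinT's trace objective is minimized row by row. This is a known property: the GLS projection minimizes $\mathbf{a}'\mathrm{Var}(\cdot)\mathbf{a}$ simultaneously for every $\mathbf{a}$, in particular for $\mathbf{a}$ given by the columns of $S$.

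If one prefers a direct algebraic proof, a fallback is to verify the KKT conditions of~\eqref{quadratic-formulation} at $\mathbf{w}^{\mathrm{sep}}$. The gradient block $i$ equals $2\sum_{i'} z_{i,i'}C^{(i)}\Sigma_h\mathbf{p}_{i'}$. Since $\Sigma_h\mathbf{p}_{i'}=S\mathbf{m}_{i'}$ for some $\mathbf{m}_{i'}$ and $C^{(i)}S=\mathbf{1}\mathbf{e}_i'$, each term is proportional to $\mathbf{1}_{n_a+1}$, as the KKT conditions require.
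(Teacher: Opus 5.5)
Your argument is correct, but its main line is different from the paper's.

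\textbf{The paper's route.} It stays in weight space. It first proves Lemma~\ref{lem:proportionality}: $C^{(i)}\Sigma_h C^{(i')\prime}(C^{(i')}\Sigma_h C^{(i')\prime})^{-1}\mathbf{1}_{n_a+1}$ is proportional to $\mathbf{1}_{n_a+1}$. The proof whitens with $\Sigma_h^{1/2}$ and uses a dimension count to place the whitened coefficient vector of the Bates--Granger combination in $\operatorname{col}(\Sigma_h^{-1/2}S)$. The paper then checks that $\mathbf{w}^{\mathrm{sep}}$ satisfies the KKT conditions of the full strictly convex program~\eqref{quadratic-formulation}, with the off-diagonal blocks absorbed into the multipliers. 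MinT enters only afterwards, in the proof of Proposition~\ref{equ-mint}, which invokes the present result.

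\textbf{Your route.} You move to coefficient space. The map $\mathbf{w}_i\mapsto C^{(i)\prime}\mathbf{w}_i$ is a bijection from $\{\mathbf{1}_{n_a+1}^\prime\mathbf{w}_i=1\}$ onto $\mathcal{C}_i$. That every element of $\mathcal{C}_i$ arises from a sum-to-one $\mathbf{w}_i$ follows from $C^{(i)}S=\mathbf{1}_{n_a+1}\mathbf{e}_i^\prime$, as in the proof of Theorem~\ref{Equivalence}. Under this bijection each subproblem becomes the GLS problem $\min\mathbf{c}^\prime\Sigma_h\mathbf{c}$ subject to $S^\prime\mathbf{c}=\mathbf{e}_i$, whose solution is the $i$-th row of $P^\ast_{\mathrm{mint}}$. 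Separately, Theorems~\ref{Equivalence} and~\ref{quadratic-theorem} identify the full optimum with $P^\ast_{\mathrm{mint}}$, and injectivity gives $\mathbf{w}^\ast=\mathbf{w}^{\mathrm{sep}}$.

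\textbf{What each buys.} Your route proves Propositions~\ref{prop:equivalence} and~\ref{equ-mint} together. It also explains why separability holds: by Aitken/Gauss--Markov, $P^\ast_{\mathrm{mint}}$ is Loewner-minimal, so it minimizes every row variance at once and hence minimizes $\operatorname{tr}(S^\prime S\,P\Sigma_hP^\prime)$. The price is reliance on the external MinT optimality result. The paper needs only strict convexity of~\eqref{quadratic-formulation} and stays self-contained within the combination formulation.

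\textbf{Your fallback.} This is essentially the paper's proof, but with a shorter route to its key lemma. The row differences of $C^{(i')}$ span $\mathcal{N}(S^\prime)$. Hence $C^{(i')}\mathbf{x}\propto\mathbf{1}_{n_a+1}$ holds exactly when $\mathbf{x}\in\operatorname{col}(S)$. The Bates--Granger condition $C^{(i')}\Sigma_h\mathbf{p}_{i'}\propto\mathbf{1}_{n_a+1}$ therefore gives $\Sigma_h\mathbf{p}_{i'}\in\operatorname{col}(S)$ directly, with no dimension count needed.
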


  Thus the block-diagonal program in \eqref{quadratic-formulation-bd} is not an approximation to the full
problem: it gives exactly the same optimal combination weights. The practical
benefit is substantial. Instead of solving one problem of dimension
$n_b(n_a+1)$, the weights can be obtained from $n_b$ independent problems of
dimension $n_a+1$, each involving a much smaller covariance matrix
$C^{(i)}\Sigma_h C^{(i)\prime}$ of the candidate forecast errors for series
$i$. Moreover, formula~\eqref{eq:BG} is precisely the Bates--Granger optimal combination
rule applied separately to the candidate forecasts of each bottom-level series.

We next connect the reconciliation induced by these optimal combination weights
to MinT and establish their equivalence.
\begin{myproposition}[MinT as a combination procedure]
\label{equ-mint}
The optimal reconciliation matrices induced by $\mathbf{w}^\ast$ 
in~\eqref{solution-w} and by $\mathbf{w}^{\mathrm{sep}}$ 
in~\eqref{eq:BG} coincide with the optimal MinT reconciliation matrix in~\eqref{mint-solution}:
\[
\Phi(\mathbf{w}^\ast)C = \Phi(\mathbf{w}^{\mathrm{sep}})C = P_{\mathrm{mint}}^\ast.
\]
\end{myproposition}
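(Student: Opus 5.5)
The plan is to work at the level of reconciliation matrices rather than weights, using \Cref{Equivalence} as the bridge.

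\textbf{Step 1: translate the objective into a trace over $P$.} Write $P=\Phi(\mathbf{w})C$. Conditional unbiasedness of the base forecasts, together with $PS=I_{n_b}$, gives
\[
\mathbf{y}_{t+h}-SP\hat{\mathbf{y}}_{t+h|t}=SP\,\hat{\mathbf{e}}_{t+h|t}.
\]
This identity uses $\mathbf{y}_{t+h}=S\mathbf{b}_{t+h}=SPS\mathbf{b}_{t+h}=SP\mathbf{y}_{t+h}$. Consequently the objective in~\eqref{obj1} equals $\operatorname{tr}(SP\Sigma_hP'S')$. This is exactly the MinT objective in~\eqref{mint-loss}, restricted to matrices of the form $\Phi(\mathbf{w})C$.

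\textbf{Step 2: identify the feasible sets.} By \Cref{Equivalence}, the map $\mathbf{w}\mapsto\Phi(\mathbf{w})C$, with $\Pi\mathbf{w}=\mathbf{1}_{n_b}$, has image equal to $\{P:PS=I_{n_b}\}$. Problem~\eqref{obj1} is therefore the MinT problem reparametrized through a surjection onto the MinT feasible set. Hence $\mathbf{w}$ is optimal for~\eqref{obj1} if and only if $\Phi(\mathbf{w})C$ is optimal for~\eqref{mint-loss}.

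\textbf{Step 3: use uniqueness.} When $\Sigma_h\succ0$, the MinT minimizer is unique and equals~\eqref{mint-solution}. Uniqueness follows from strict convexity in $P$: the objective is $\operatorname{tr}(S'S\,P\Sigma_hP')$ with $S'S\succ0$. By \Cref{closed_solution}, $\mathbf{w}^\ast$ is the minimizer of~\eqref{quadratic-formulation}, and by \Cref{quadratic-theorem} it also minimizes~\eqref{obj1}. Combining with Step~2 gives $\Phi(\mathbf{w}^\ast)C=P^\ast_{\mathrm{mint}}$. Finally, \Cref{prop:equivalence} gives $\mathbf{w}^{\mathrm{sep}}=\mathbf{w}^\ast$, which yields the second equality.

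\textbf{Main obstacle.} The delicate point is the identity in Step~1. It requires unbiasedness for \emph{every} $\mathbf{w}$ in the feasible set, and this holds because \Cref{Equivalence} guarantees $\Phi(\mathbf{w})CS=I_{n_b}$. Concretely, each row of $C^{(i)}$ satisfies $S'\mathbf{c}=\mathbf{e}_i$ and the weights in each block sum to one.

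A possible gap is that \Cref{prop:equivalence} may itself have been proved via this proposition. If so, I would verify $\mathbf{w}^{\mathrm{sep}}$ directly, and the route I would try first is the following. Write $P^\ast_{\mathrm{mint}}$ in the form $\Phi(\bar{\mathbf{w}})C$; each block can be solved for uniquely because $C^{(i)}$ has full row rank (\Cref{maximality}). Then check that row $i$ of $P^\ast_{\mathrm{mint}}$ minimizes $\mathbf{p}'\Sigma_h\mathbf{p}$ subject to $S'\mathbf{p}=\mathbf{e}_i$. This is the minimum-variance unbiased combination for series $i$, and over the affine set $\mathcal{C}_i$ it is exactly the Bates--Granger problem~\eqref{eq:subproblem}.

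The row-wise minimality holds because, for $P^\ast_{\mathrm{mint}}$, the rows satisfy $\Sigma_h\mathbf{p}_i\in\operatorname{range}(S)$. Indeed,
\[
\Sigma_h\,P^{\ast\prime}_{\mathrm{mint}}=S\,(S'\Sigma_h^{-1}S)^{-1},
\]
which is the Lagrangian optimality condition for each row. This row-wise decoupling is essentially why the off-diagonal blocks of $Q_h$ drop out.
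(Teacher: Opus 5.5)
Your main argument is essentially the paper's proof, with the same structure: Theorem~\ref{Equivalence} makes $\mathbf{w}\mapsto\Phi(\mathbf{w})C$ a surjection onto $\{P:PS=I_{n_b}\}$, the combination objective coincides with the MinT criterion at the same $P$, uniqueness of $P^\ast_{\mathrm{mint}}$ forces $\Phi(\mathbf{w}^\ast)C=P^\ast_{\mathrm{mint}}$, and Proposition~\ref{prop:equivalence} then transfers this to $\mathbf{w}^{\mathrm{sep}}$. There is no circularity, because the paper proves Proposition~\ref{prop:equivalence} independently via Lemma~\ref{lem:proportionality} and a KKT check, so your fallback is not needed; it is nonetheless correct, since $\Sigma_h P^{\ast\prime}_{\mathrm{mint}}=S(S'\Sigma_h^{-1}S)^{-1}$ makes each row of $P^\ast_{\mathrm{mint}}$ the minimizer of its own Bates--Granger problem over $\mathcal{C}_i$, and it is a shorter route to separability than the paper's lemma.
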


This equivalence gives MinT a forecast-combination interpretation. In
particular, MinT can be viewed as first constructing, for each bottom-level
series, a set of structurally valid candidate forecasts and then applying the
optimal combination rule in~\eqref{eq:BG}. To our knowledge, this provides the
first explicit link between MinT reconciliation and the Bates--Granger
forecast-combination perspective.

\subsection{Summary of the New Reconciliation Framework}

This section develops the proposed reconciliation framework based on forecast
combination. For each bottom-level series, we construct a maximal set of structurally valid
candidate forecasts and collect them through the candidate-generation matrix
$C$. Reconciliation is then performed by choosing series-specific combination
weights, encoded in $\Phi(\mathbf{w})$, to combine these candidates at the
bottom level. The resulting bottom-level combined forecasts are aggregated by
$S$, yielding coherent forecasts
$S\Phi(\mathbf{w})C\hat{\mathbf{y}}_{t+h|t}$ for the full hierarchy. The equivalence theorem shows that this representation spans the same class as
standard unbiased linear reconciliation.

The subsequent results characterize the mean squared error (MSE)-optimal combination weights within
this framework. Although the full optimization program involves joint optimization over high dimensional weights, its optimal solution is separable: the weights can be obtained
from independent per-series problems, each taking the Bates--Granger form
in~\eqref{eq:BG}. The induced reconciliation matrix is exactly the MinT matrix. This 
 shows that the well-known MinT approach can be reinterpreted as an optimal forecast-combination
procedure over the candidate forecasts generated by the hierarchy.

These results establish the theoretical equivalence between the proposed
combination framework and standard linear reconciliation. In finite samples,
however, the covariance matrices entering the weight optimization problems must be estimated, potentially in
high-dimensional settings where the number of series is large relative to the
available forecast error history. The combination perspective provides a useful
way to address this problem, because it exposes the covariance structure and the
combination weights as objects that can be stabilized or regularized directly.
Moreover, the solution separability property motivates a series-wise separate estimation procedure, which  not only improves computation speed but also offers additional stability. 
The next section develops finite-sample estimation methods based on our combination perspective.

\section{Finite-Sample Estimation with  Regularization}
\label{extensions}
\label{framework}

The previous section developed the forecast-combination reconciliation
framework and characterized the theoretically optimal combination weights.
We now turn to the important practical issue of finite-sample estimation: given a finite training sample,
how should these weights be estimated and used to produce reconciled
forecasts? Section~\ref{reg_two_components} describes how we address this through the use of regularization to stabilize the estimation of forecast error covariances and combining weights. Section~\ref{shrink_estimator} discusses covariance-side regularization through shrinkage, Section~\ref{e-penalty} introduces weight-side regularization through  penalization, and Section~\ref{joint-sep-implementation} describes different options for implementing the two together.

\subsection{Regularized Estimation of Two Key Components}
\label{reg_two_components}

A central challenge with the forecast-combination reconciliation framework is that the optimal weights depend on the forecast-error
covariance structure, which is unknown and must be estimated from data. When
the hierarchy is large relative to the available training sample, the covariance estimates and the resulting
weight estimates can be unstable. Our implementation addresses this finite-sample
problem from two complementary directions. First, we stabilize the estimated
forecast error covariance matrix through shrinkage estimation.
Second, we regularize the combination weights directly, penalizing them toward
stable benchmarks. Together, these two components yield a practical
penalized reconciliation procedure that preserves the interpretation of the
combination framework while improving finite-sample stability.

To see where these two components enter, recall that the theoretical optimal
weights solve the quadratic program in~\eqref{quadratic-formulation}, whose
objective is governed by the matrix $Q_h$. This matrix summarizes the
candidate forecast error covariance after accounting for the
aggregation geometry of the hierarchy; equivalently, its blocks are
$[Q_h]_{i,i'}=z_{i,i'}C^{(i)}\Sigma_h C^{(i')\prime}$, where $z_{i,i'}$ is the
corresponding entry of $S^\prime S$, $\Sigma_h$ is the covariance matrix of the $h$-step base forecast errors, and $C^{(i)}$ is the candidate generation matrix given by Algorithm~\ref{algorithm-1}. In finite samples, $Q_h$ is replaced by
an empirical estimator $\hat Q_h$, and in Section~\ref{shrink_estimator}, we explain how we use shrinkage to improve the estimation.

Weight-side stabilization can be represented by adding a generic penalty
$\lambda \mathcal P(\mathbf w)$ to the empirical quadratic objective. 
The penalty term $\mathcal P(\mathbf w)$ is used to encourage desirable
structure in the estimated combination weights, such as shrinkage toward
stable benchmark weights, sparsity among candidate forecasts, or other
interpretable weighting patterns. 
The tuning parameter $\lambda\ge 0$ controls the strength of this
regularization: $\lambda=0$ recovers the unpenalized estimator, whereas
larger values place more emphasis on the structure encouraged by
$\mathcal P(\mathbf w)$. 
A general stabilized weight estimator can therefore be written
schematically as
\begin{equation}
  \label{empirical-general-penalized}
  \hat{\mathbf{w}}  \in \arg\min_{\mathbf{w}}~
\mathbf{w}'\hat Q_h\mathbf{w}
+ \lambda\mathcal P(\mathbf{w}),
\qquad
\mathrm{s.t.}\ \Pi \mathbf{w}=\mathbf{1}_{n_b}.
\end{equation}
For practical implementation, the equality constraints in
\eqref{empirical-general-penalized} can be eliminated exactly through
an affine reparameterization. The resulting unconstrained formulation
is provided in Section~\ref{sec:unconstrained}.

The quadratic term in~\eqref{empirical-general-penalized} is controlled by how $\hat Q_h$ is estimated; the penalty
term controls how additional structure is imposed on the combination weights. After discussing the estimation of $\hat Q_h$  in Section~\ref{shrink_estimator}, we consider specifications for the penalty term $\mathcal P(\mathbf w)$ in Section~\ref{e-penalty}.

\subsection{Covariance Shrinkage}
\label{shrink_estimator}

The population matrix $Q_h$ is determined by the base forecast error covariance matrix $\Sigma_h$, together with the candidate-generation matrix $C$ and the aggregation structure. 
Thus, in finite samples, the central estimation task is to estimate $\Sigma_h$. 
This is the same problem encountered in MinT: as reviewed in Section~\ref{review} and Table~\ref{tab:recon-methods}, standard implementations usually adopt the simplification $\Sigma_h=\kappa_h\Sigma_1$ with $\kappa_h=1$, so that different estimators of $\Sigma_1$ give rise to OLS, WLSs, WLSv, MinT(Sample), and MinT(Shrink).
Our framework can use these same covariance estimators directly: given any plug-in estimator $\hat{\Sigma}_h$, we form $\hat Q_h
=
\bigl\{(S^\prime S)\otimes \mathbf{1}_{n_a+1}\mathbf{1}_{n_a+1}^\prime\bigr\}
\circ
(C\hat \Sigma_h C^\prime)$, 
that is, by replacing $\Sigma_h$ with $\hat{\Sigma}_h$ in the population expression for $Q_h$. 
When no weight penalty is imposed, this plug-in construction recovers the corresponding MinT variant.

\begin{mycorollary}[Equivalence under plug-in covariance estimation]
  \label{equ-mint-empirical}
  For any symmetric positive definite estimator $\hat{\Sigma}_h$ of $\Sigma_h$,
  construct $\hat{Q}_h$ by plugging $\hat{\Sigma}_h$ into $Q_h$. 
  Then the unpenalized combination-based reconciliation matrix (i.e., $\lambda = 0$) coincides with
  the corresponding plug-in MinT matrix: $  \Phi(\hat{\mathbf w}^{\ast})C
  =
  \hat{P}_{\mathrm{mint}}^\ast$,
  where $\hat{P}_{\mathrm{mint}}^\ast$ is obtained from~\eqref{mint-solution}
  with $\Sigma_h$ replaced by $\hat{\Sigma}_h$.
  \end{mycorollary}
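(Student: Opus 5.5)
The plan is to observe that none of the earlier population-level results use any property of $\Sigma_h$ beyond its being symmetric and positive definite. Its probabilistic meaning as the conditional covariance of base forecast errors is never needed. The corollary should therefore follow by re-running that algebraic chain with $\hat{\Sigma}_h$ in place of $\Sigma_h$.

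First, I would isolate what Theorem~\ref{quadratic-theorem}, Proposition~\ref{closed_solution}, Proposition~\ref{prop:equivalence} and Proposition~\ref{equ-mint} actually use. Theorem~\ref{quadratic-theorem} is the only step that invokes the distributional assumptions (conditional unbiasedness and time-invariance of the covariance). It turns the MSE objective~\eqref{obj1} into the deterministic quadratic program~\eqref{quadratic-formulation} with matrix $Q_h$. Every later step is a statement about this deterministic program, parametrized by a symmetric positive definite matrix $\Sigma$. Specifically:
\begin{itemize}
\item Proposition~\ref{closed_solution} uses positive definiteness of $\Sigma$, together with the rank properties of $C$ and $S$ from Proposition~\ref{maximality} and Theorem~\ref{Equivalence}, to make $Q_h$ and $\Pi Q_h^{-1}\Pi'$ invertible.
\item Proposition~\ref{prop:equivalence} gives separability and the Bates--Granger form.
\item Proposition~\ref{equ-mint} shows $\Phi(\mathbf w^\ast)C=(S'\Sigma^{-1}S)^{-1}S'\Sigma^{-1}$.
\end{itemize}

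Second, for the unpenalized empirical problem ($\lambda=0$ in~\eqref{empirical-general-penalized}), $\hat Q_h$ is by definition the matrix $Q_h$ of Theorem~\ref{quadratic-theorem} evaluated at $\Sigma=\hat\Sigma_h$. Since $\hat\Sigma_h\succ0$, Proposition~\ref{closed_solution} applies verbatim and gives a unique minimizer $\hat{\mathbf w}^\ast=\hat Q_h^{-1}\Pi'(\Pi\hat Q_h^{-1}\Pi')^{-1}\mathbf 1_{n_b}$. Proposition~\ref{equ-mint}, read as the algebraic identity above with $\Sigma=\hat\Sigma_h$, then yields $\Phi(\hat{\mathbf w}^\ast)C=(S'\hat\Sigma_h^{-1}S)^{-1}S'\hat\Sigma_h^{-1}=\hat P^\ast_{\mathrm{mint}}$.

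To make this airtight without depending on how those proofs were written, I would give a short self-contained argument.

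\emph{Step 1: rewrite the quadratic form.} For any $\Sigma\succ0$ and any feasible $\mathbf w$, set $P=\Phi(\mathbf w)C$. By the block structure,
\[
\mathbf w'Q(\Sigma)\mathbf w=\sum_{i,i'}z_{i,i'}\,\mathbf w_i'C^{(i)}\Sigma C^{(i')\prime}\mathbf w_{i'}=\operatorname{tr}(S P\Sigma P'S').
\]
Here I use that row $i$ of $P$ is $\mathbf w_i'C^{(i)}$ and that $z_{i,i'}=[S'S]_{i,i'}$.

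\emph{Step 2: transfer to MinT.} By Theorem~\ref{Equivalence}, as $\mathbf w$ ranges over feasible weights, $P$ ranges exactly over $\{P:PS=I_{n_b}\}$. So minimizing $\mathbf w'\hat Q_h\mathbf w$ is the same as minimizing $\operatorname{tr}(SP\hat\Sigma_hP'S')$ subject to $PS=I$. This is the MinT problem with $\hat\Sigma_h$, whose unique solution is $\hat P^\ast_{\mathrm{mint}}$ by the standard generalized least squares argument.

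The main obstacle is uniqueness. The program in $\mathbf w$ could in principle have several minimizers mapping to the same $P$, and one must check that the minimizing $P$ is unique. The latter holds because $\operatorname{tr}(SP\hat\Sigma_hP'S')$ is strictly convex in $P$: $S$ has full column rank and $\hat\Sigma_h\succ0$. Uniqueness of $\hat{\mathbf w}^\ast$ then follows either from $\hat Q_h\succ0$ via Proposition~\ref{closed_solution}, or simply from the fact that the conclusion concerns only $\Phi(\hat{\mathbf w}^\ast)C$, which is pinned down regardless.
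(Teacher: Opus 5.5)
Your proposal is correct and follows the paper's approach. The paper's proof is a single line: Proposition~\ref{equ-mint} holds for every positive-definite $\Sigma_h$, so $\hat{\Sigma}_h$ can simply be substituted. Your self-contained version (the identity $\mathbf w'Q(\Sigma)\mathbf w=\operatorname{tr}(SP\Sigma P'S')$, surjectivity from Theorem~\ref{Equivalence}, and uniqueness from strict convexity in $P$) is essentially the paper's proof of Proposition~\ref{equ-mint} written deterministically, which makes explicit that $\hat{\Sigma}_h$ needs no interpretation as a forecast-error covariance.
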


\begin{proof}
Proposition~\ref{equ-mint} holds for every positive-definite forecast
error covariance matrix. Therefore, replacing $\Sigma_h$ by any
symmetric positive-definite estimator $\hat{\Sigma}_h$ yields 
$
\Phi(\hat{\mathbf w}^{\ast})C
=
\hat P_{\mathrm{mint}}^\ast.
$
\end{proof}

Thus, Corollary~\ref{equ-mint-empirical} shows that the covariance estimator is a modular input in the proposed framework. 
If $\hat{\Sigma}_h$ is chosen according to the standard covariance specifications reviewed in Table~\ref{tab:recon-methods}, the unpenalized combination-based estimator recovers the corresponding standard reconciliation methods, including OLS, WLSs, WLSv, MinT(Sample), and MinT(Shrink). 
In this sense, the proposed framework retains the full covariance-estimation flexibility of MinT.

At the same time, the framework is not restricted to covariance estimators previously used in the MinT literature. 
Because any positive definite estimator $\hat{\Sigma}_h$ can be plugged into the construction of $\hat{Q}_h$, one can introduce alternative covariance specifications tailored to the empirical setting. 
In what follows, we use the common one-step proxy by setting $\hat{\Sigma}_h=\hat{\Sigma}_1$ in the empirical construction, and borrow a factor-based shrinkage estimator $\hat{\Sigma}_1^{\mathrm{shrink}}$ for $\Sigma_1$ from \cite{ledoit2003improved}.
The idea parallels MinT(Shrink): instead of relying directly on the raw sample covariance, we shrink it toward a structured target. 
Unlike MinT(Shrink), which uses a diagonal matrix target, our estimator uses a factor-based covariance target that captures common forecast-error dependence parsimoniously. 
The resulting shrinkage estimator combines information from the raw sample covariance with the stability of the factor structure, improving performance in high-dimensional settings~\citep{ledoit2003improved,ledoit2004wellconditioned}. 
Factor-based estimators have also proven useful in forecast combination~\citep{chen2025combining} and hierarchical forecasting~\citep{hollyman2024scaleable}. 
In hierarchical forecasting, however, they have primarily been used within Bayesian dynamic models for covariance modeling of the underlying series. Here, by contrast, we use a factor-based estimator as a stabilized plug-in estimate of the base forecast error covariance. To our knowledge, such a factor-based covariance estimator has not previously been used for MinT-style reconciliation.

We now describe the factor-based covariance estimator in \cite{ledoit2003improved}. 
For each series $m=1,\ldots,n$ and each forecast origin
$t=1,\ldots,T-1$, let $\hat{e}_{t+1\mid t}^{(m)}$ denote the
one-step-ahead forecast error of the $m$-th series. We model these errors
using the single-factor representation
\begin{equation*}
\hat{e}_{t+1\mid t}^{(m)}
=
\gamma_m+\beta_m f_{t+1}+\epsilon_{t+1}^{(m)},
\end{equation*}
where $\gamma_m$ and $\beta_m$ are series-specific parameters,
$f_{t+1}$ denotes a common factor at $t+1$ shared by all series,
and $\epsilon_{t+1}^{(m)}$ is an idiosyncratic error term satisfying
$\mathbb{E}[\epsilon_{t+1}^{(m)} \mid f_{t+1}]=0$ and
$\mathbb{E}[\epsilon_{t+1}^{(m)}\epsilon_{t+1}^{(m^{\prime})}]=0$
for $m\neq m^{\prime}$.
We further assume $\operatorname{Var}(f_{t+1}) = \tau^2$ and
$\operatorname{Var}(\epsilon_{t+1}^{(m)}) = \sigma_m^2$,
which leads to the following covariance structure for the one-step-ahead
base forecast errors:
\[
\Sigma_1^{\mathrm{factor}}
= \tau^2\,\boldsymbol{\beta}\boldsymbol{\beta}^{\prime} + D,
\]
where $\boldsymbol{\beta}=(\beta_1,\dots,\beta_n)^{\prime}$ and
$D=\operatorname{diag}(\sigma_1^2,\ldots,\sigma_n^2)$.
In practice, the common factor is estimated by the cross-sectional mean
$\hat{f}_{t+1}=n^{-1}\sum_{m=1}^n \hat{e}_{t+1|t}^{(m)}$, and the loadings and
idiosyncratic variances are obtained from series-level regressions on this
estimated factor. We denote the resulting covariance estimator as $\hat{\Sigma}_1^{\mathrm{factor}}$. 

Finally, the shrinkage estimator of $\Sigma_1$ is
$
\hat{\Sigma}_1^{\mathrm{shrink}}
=
\hat{\delta}\hat{\Sigma}_1^{\mathrm{factor}}
+
(1-\hat{\delta})\hat{\Sigma}_1^{\mathrm{raw}},$
where $\hat{\delta}$ is a data-dependent plug-in shrinkage coefficient
estimated under the quadratic loss induced by the Frobenius norm,
following the one-factor shrinkage construction of
\citet{ledoit2003improved}. The estimator combines the parsimonious
dependence structure of the one-factor target with the information
contained in the raw sample covariance matrix, thereby stabilizing
covariance estimation when the number of series is large relative to the
available forecast-error history. In the empirical section, \textit{Factor} denotes the specification that sets
$\hat{\Sigma}_h=\hat{\Sigma}_1^{\mathrm{shrink}}$ when constructing
$\hat Q_h$, which is then used in the penalized quadratic program
\eqref{empirical-general-penalized}.

\subsection{Combination Weight Penalization}
\label{e-penalty}

The second implementation choice concerns how to penalize the combination weights. 
In finite samples, the weight vector itself can be a high-dimensional object, since each bottom-level series is associated with one direct forecast of itself and multiple indirect forecasts constructed from  aggregate-level forecasts. 
Estimated optimal weights may therefore have high variance. 
The combination framework makes this source of instability explicit and creates the possibility of regularizing the weights directly, rather than stabilizing only the covariance estimator. 
The key question is then how to design the penalty: that is, what structure should be encouraged in the estimated combination weights.

A convenient choice is to shrink the indirect-forecast weights toward zero. 
Let $\mathbf{w}_{i,2:(n_a + 1)}$ denote the subvector of $\mathbf{w}_i$ that collects the weights assigned to the $n_a$ indirect candidate forecasts for each bottom-level series $i$.
A zero-shrinkage penalty can be written as $\mathcal{P}^0(\mathbf{w})
=
\sum_{i=1}^{n_b}\|\mathbf{w}_{i,2:(n_a + 1)}\|_p^p$, 
where $p=2$ gives a ridge-type penalty and $p=1$ gives a LASSO-type penalty. The ridge penalty smoothly shrinks indirect weights toward zero, whereas the LASSO penalty can set some indirect weights exactly to zero, thereby selecting among indirect candidate forecasts. 
In the limit $\lambda\to\infty$, all indirect weights are forced to zero. The limiting reconciled bottom-level vector therefore equals the original bottom-level base forecasts, and the final multiplication by the summing matrix $S$ aggregates these forecasts to all higher levels. 
Thus, the limiting procedure is exactly the bottom-up method.

In view of the success of equal weighting in forecast combining, \citet{diebold2019machine} propose egalitarian penalization, which shrinks
estimated combination weights toward a simple average rather than toward zero. To implement this, we can set the penalty in~\eqref{empirical-general-penalized} as follows:
\begin{equation}
\label{eq:egalitarian-penalty}
\mathcal{P}^{\mathrm{eq}}(\mathbf{w})
=  \sum_{i=1}^{n_b}
  \|\mathbf{w}_{i,2:(n_a + 1)} - \frac{1}{n_a+1}\mathbf{1}_{n_a}\|_p^p,
\end{equation}
where, again, $p=2$ gives a ridge-type penalty and $p=1$ gives a LASSO-type penalty; we refer to the corresponding egalitarian versions as eRidge and eLASSO, respectively.
For later use, denote the contribution of bottom-level series $i$ to
this penalty by
$
\mathcal{P}_{i}^{\mathrm{eq}}(\mathbf{w}_i)
=\|
\mathbf{w}_{i,2:(n_a+1)}
-\frac{1}{n_a+1}\mathbf{1}_{n_a}\|_p^p.
$
Thus,
$\mathcal{P}^{\mathrm{eq}}(\mathbf{w})
=\sum_{i=1}^{n_b}\mathcal{P}_{i}^{\mathrm{eq}}(\mathbf{w}_i)$.
The penalty shrinks the indirect weights toward $\frac{1}{n_a+1}\mathbf{1}_{n_a}$. 
Through the sum-to-one constraint $\mathbf{1}_{n_a+1}^{\prime}\mathbf{w}_i=1$, this target corresponds exactly to the equal-weight allocation $\frac{1}{n_a+1}\mathbf{1}_{n_a+1}$ over all $n_a+1$ candidate forecasts, including the direct forecast. 
Thus, as $\lambda$ increases, the estimated weights are pulled toward a genuine combination benchmark rather than toward bottom-up reconciliation. 
In empirical implementation, $\lambda$ is treated as a tuning parameter rather than fixed a priori, and selected using holdout validation.

\citet{diebold2019machine} show that shrinkage toward equal weights can perform well in forecast combination, and further develop extensions that combine selection and egalitarian shrinkage, such as partially egalitarian LASSO and related trim-and-average rules. 
For simplicity, we focus here on the egalitarian penalty in~\eqref{eq:egalitarian-penalty}, which is easy to implement and isolates the role of weight-level regularization in reconciliation. 
More broadly, the key point is that the forecast-combination representation makes reconciliation weights explicit objects that can be regularized directly. 
This opens a route for incorporating  successful methods from the forecast combination literature into hierarchical and grouped forecast reconciliation.

\subsection{Joint and Separate Penalized Implementation}
\label{joint-sep-implementation}

We now describe the implementation of both the covariance-side choice in Section~\ref{shrink_estimator}
and the weight-side penalty in Section~\ref{e-penalty}. The covariance
estimator determines the empirical candidate-error matrix $\hat Q_h$, while
the penalty determines how the combination weights are regularized. For example, using the egalitarian penalty
in~\eqref{eq:egalitarian-penalty}, the full finite-sample problem is
\begin{equation}
\label{eq:pool-pen}
\hat{\mathbf w}^{\mathrm{joint}}
\in
\operatorname*{argmin}_{\mathbf w:\,\Pi \mathbf w=\mathbf 1_{n_b}}
\mathbf w^{\prime}\hat Q_h\mathbf w
+
\lambda\mathcal P^{\mathrm{eq}}(\mathbf w).
\end{equation}

We refer to~\eqref{eq:pool-pen} as the \textit{Joint} implementation. It
retains the full estimated  matrix $\hat Q_h$, including both the within-series
blocks $[\hat Q_h]_{i,i}$ and the cross-series blocks $[\hat Q_h]_{i,i'}$.
Thus, \emph{Joint} implementation uses the complete estimated covariance of the
candidate forecast errors.

A second implementation choice is to simplify $\hat Q_h$ before solving the
penalized problem. Motivated by the separability result in
Proposition~\ref{prop:equivalence}, we may retain only the within-series
candidate-error blocks and set all cross-series blocks to zero:
\begin{equation*}
\label{simp}
\hat{Q}_h^{\mathrm{bd}} =
\operatorname{blkdiag}([\hat{Q}_h]_{1,1},\ldots,[\hat{Q}_h]_{n_b,n_b}).
\end{equation*}
Replacing $\hat Q_h$ by $\hat Q_h^{\mathrm{bd}}$ yields the
\textit{Separate} implementation. Because the quadratic objective, the
egalitarian penalty, and the sum-to-one constraints all decompose across
bottom-level series, the problem separates into $n_b$ independent
subproblems. Specifically, for each $i=1,\ldots,n_b$,
\begin{equation}
\label{eq:sep-pen}
\hat{\mathbf{w}}_i^{\mathrm{sep}}
\in
\operatorname*{argmin}_{
\mathbf{w}_i:\,
\mathbf{1}_{n_a+1}^{\prime}\mathbf{w}_i=1}
\mathbf{w}_i^{\prime}[\hat Q_h]_{i,i}\mathbf{w}_i
+
\lambda\mathcal{P}_{i}^{\mathrm{eq}}(\mathbf{w}_i).
\end{equation}
The stacked \emph{Separate} solution obtained from \eqref{eq:sep-pen} is denoted
$\hat{\mathbf{w}}^{\mathrm{sep}}
=((\hat{\mathbf{w}}_1^{\mathrm{sep}})^{\prime},\ldots,
(\hat{\mathbf{w}}_{n_b}^{\mathrm{sep}})^{\prime})^{\prime}$.

The relationship between \emph{Joint} and \emph{Separate} depends on whether penalization is active. 
When $\lambda=0$, Proposition~\ref{prop:equivalence} and the plug-in principle imply that replacing $\hat Q_h$ by its block-diagonal counterpart does not change the solution: \emph{Joint} and \emph{Separate} coincide and recover the same plug-in MinT solution. 
When $\lambda>0$, this equivalence no longer holds in general. 
The penalty changes the optimization problem, and the cross-series blocks in the full $\hat Q_h$ can affect the \emph{Joint} optimum. 
Consequently, \emph{Joint} and \emph{Separate} should be viewed as two distinct penalized implementations rather than merely two computational routes to the same estimator. 
This distinction is made explicit in Section~\ref{divergence}, which provides examples where the two implementations differ under penalization. 
Section~\ref{solut-eRidge} further shows that, for the eRidge penalty, the resulting estimators admit closed-form solutions.

The \emph{Separate} implementation is particularly useful for both computational and statistical reasons. 
Computationally, it replaces a single constrained problem of dimension $n_b(n_a+1)$ with $n_b$ independent constrained problems of dimension $n_a+1$, each involving only the candidate forecasts for one bottom-level target series. These subproblems can be solved separately and in parallel, hence speeding up the computation and improving the scalability. 
Statistically, \emph{Separate} can be interpreted as a covariance-stabilized implementation from the combination perspective. 
It preserves the within-series candidate-error covariance matrix needed to choose the weights for each target series, while discarding cross-series covariance blocks that may be very noisy in finite samples. 
Thus, \emph{Separate} can reduce the computational burden while potentially improving statistical estimation stability.

Figure~\ref{fig:flowchart} summarizes the implementation pipeline: estimate the
base forecast error covariance $\hat{\Sigma}_h = \hat{\Sigma}_1$, construct either the full or
block-diagonal $\hat Q_h$, choose whether to impose weight penalty, and obtain the final coherent forecasts. It also highlights the relationship between \emph{Joint} and
\emph{Separate}: they coincide if there is no weight penalization, but can diverge once penalization is active.

\begin{figure}[htbp]
\centering
\begin{adjustbox}{max width=\textwidth, center}
\begin{tikzpicture}[
  font=\sffamily\footnotesize,
  >=Latex,
  arrow/.style={->, thick, line cap=round, line join=round},
  connector/.style={thick, line cap=round, line join=round},
  stage/.style={draw, rounded corners=2pt, thick, fill=gray!12,
                align=center, minimum width=12.8cm, minimum height=0.72cm},
  cov/.style={draw, rounded corners=2pt, thick, fill=gray!12,
              align=center, minimum width=4.65cm, minimum height=0.90cm},
  newcov/.style={draw, rounded corners=2pt, thick, fill=gray!12,
              align=center, minimum width=4.65cm, minimum height=0.90cm},
  qbox/.style={draw, rounded corners=2pt, thick, fill=gray!12,
              align=center, minimum width=4.30cm, minimum height=0.90cm},
  decision/.style={diamond, draw, thick, fill=gray!16,
              align=center, aspect=2.05, inner sep=2pt, minimum width=1.95cm},
  alg/.style={draw, rounded corners=2pt, thick, fill=gray!12,
              align=center, minimum width=3.55cm, minimum height=0.90cm},
  result/.style={draw, rounded corners=2pt, thick, fill=gray!12,
              align=center, minimum width=4.85cm, minimum height=0.92cm},
  final/.style={draw, rounded corners=2pt, very thick, fill=gray!20,
              align=center, minimum width=7.40cm, minimum height=0.98cm},
  note/.style={font=\sffamily\scriptsize, align=center}
]
\def\enterh{0.45}
\node[stage] (s1) at (0,0)
{\textbf{Step 1: Estimate one-step covariance proxy} $\widehat\Sigma_h = \widehat\Sigma_1$};
\node[cov] (existing) at (-3.55,-1.65)
{\textbf{Existing MinT variants}\\[-1pt]
OLS / WLSs / WLSv / MinT(Sample) / MinT(Shrink)};
\node[newcov] (factor) at (3.55,-1.65)
{\textbf{Proposed factor-based}\\[-1pt]
covariance estimation};
\draw[arrow] (s1.south) |- ($(existing.north)+(0,0.40)$) -- (existing.north);
\draw[arrow] (s1.south) |- ($(factor.north)+(0,0.40)$) -- (factor.north);
\node[stage] (s2) at (0,-3.55)
{\textbf{Step 2: Construct plug-in candidate forecast error covariance estimator}
$\widehat Q_h$};
\draw[arrow] (existing.south) |- ($(s2.north)+(-2.45,0.40)$) -- ($(s2.north)+(-2.45,0)$);
\draw[arrow] (factor.south) |- ($(s2.north)+(2.45,0.40)$) -- ($(s2.north)+(2.45,0)$);
\node[qbox] (fullQ) at (-3.55,-5.20)
{\textbf{Full} $\widehat Q_h$\\[-1pt]
with cross-series blocks};
\node[qbox] (bdQ) at (3.55,-5.20)
{\textbf{Block-diagonal} $\widehat Q_h^{\mathrm{bd}}$\\[-1pt]
per-series blocks};
\draw[arrow] (s2.south) |- ($(fullQ.north)+(0,0.40)$) -- (fullQ.north);
\draw[arrow] (s2.south) |- ($(bdQ.north)+(0,0.40)$) -- (bdQ.north);
\node[decision] (dec) at (0,-6.90) {$\lambda=0?$};
\draw[arrow] (fullQ.south) |- (dec.west);
\draw[arrow] (bdQ.south) |- (dec.east);
\node[alg] (joint0) at (-5.95,-9.45)
{Joint unpenalized QP\\using full $\widehat Q_h$};
\node[alg] (sep0) at (-2.00,-9.45)
{Separate unpenalized QPs\\Bates--Granger closed form};
\node[alg] (jointp) at (2.00,-9.45)
{Joint penalized QP\\full $\widehat Q_h$; eRidge/eLasso};
\node[alg] (sepp) at (5.95,-9.45)
{Separate penalized QPs\\$\widehat Q_h^{\mathrm{bd}}$; eRidge/eLasso};
\coordinate (split)  at (0,-7.45);
\coordinate (yleft)  at (-4.00,-7.45);
\coordinate (yright) at (4.00,-7.45);
\coordinate (algtopL) at ($(joint0.north)+(0,0.40)$);
\coordinate (ydownL) at (-4.00,0 |- algtopL);
\coordinate (algtopR) at ($(jointp.north)+(0,0.40)$);
\coordinate (ydownR) at (4.00,0 |- algtopR);
\draw[connector] (dec.south) -- (split);
\draw[connector] (split) -| (ydownL);
\draw[connector] (split) -| (ydownR);
\node[note] at (-2.05,-7.20) {Yes};
\node[note] at (2.05,-7.20) {No};
\draw[arrow] (ydownL) -| (joint0.north);
\draw[arrow] (ydownL) -| (sep0.north);
\draw[arrow] (ydownR) -| (jointp.north);
\draw[arrow] (ydownR) -| (sepp.north);
\node[result] (same) at (-3.95,-11.26)
{\textbf{Same unpenalized solution}};
\node[result] (diff) at (3.95,-11.26)
{\textbf{No guarantee of the same solution}};
\draw[arrow] (joint0.south) |- ($(same.north)+(-1.25,0.40)$) -- ($(same.north)+(-1.25,0)$);
\draw[arrow] (sep0.south) |- ($(same.north)+(1.25,0.40)$) -- ($(same.north)+(1.25,0)$);
\draw[arrow] (jointp.south) |- ($(diff.north)+(-1.25,0.40)$) -- ($(diff.north)+(-1.25,0)$);
\draw[arrow] (sepp.south) |- ($(diff.north)+(1.25,0.40)$) -- ($(diff.north)+(1.25,0)$);
\node[final] (final) at (0,-13.11)
{\textbf{Coherent reconciled forecasts}\\[-1pt]
$\widetilde{\mathbf y}_{t+h\mid t}
= S\Phi(\widehat{\mathbf w})C\widehat{\mathbf y}_{t+h\mid t}$};
\draw[arrow] (same.south) |- ($(final.north)+(-2.00,0.40)$) -- ($(final.north)+(-2.00,0)$);
\draw[arrow] (diff.south) |- ($(final.north)+(2.00,0.40)$) -- ($(final.north)+(2.00,0)$);
\end{tikzpicture}
\end{adjustbox}
\caption{Workflow of the proposed forecast-combination reconciliation framework.}
\label{fig:flowchart}
\end{figure}

\section{Empirical Evaluation}
\label{sec:empirical}

This section evaluates the proposed  combination-based
reconciliation framework in Section~\ref{extensions} on two aggregation
structures: an Australian electricity generation hierarchy and a grouped
Australian labor force dataset. The empirical analysis has two aims.
First, within the proposed framework, we examine how forecast accuracy is
affected by the covariance specification, including standard MinT-style
estimators and the proposed {factor}-based estimator; by direct
egalitarian penalization of the combination weights; and by the choice
between \textit{Joint} and \textit{Separate} estimation when
penalization is active. Second, we compare the resulting methods with
standard and combination-based benchmarks, including unreconciled
\textit{Base} forecasts, \textit{Bottom-up}, MinT-style reconciliation,
\textit{LCC}, and its variant.

\subsection{Data, Design, Methods, and Evaluation}
\label{sec:empirical-design}
The first application uses the hierarchy of  daily Australian electricity
generation by energy source of~\citet{panagiotelis2023probabilistic}.
The data span June 11, 2019, to June 10, 2020, giving 366 daily observations.
Short-term forecasts of generation by energy source are important for grid
operation, reserve planning, and balancing intermittent renewable supply.
The hierarchy has four aggregation levels. Total generation is split
into renewable and non-renewable sources; renewable generation is
further divided into Batteries, Hydro (inc.\ Pumps), Solar, Wind, and
Biomass, while non-renewable generation is divided into Coal, Gas, and
Distillate. Several source categories are further decomposed into
detailed components, such as Solar (Rooftop) and Solar (Utility), Black
and Brown Coal, and four Gas technologies. Overall, the hierarchy
contains $n=23$ series, of which $n_b=15$ are bottom-level series and
$n_a=8$ are aggregated series. Figure~\ref{elct} displays the
series across hierarchy levels and illustrates the heterogeneous
seasonal and volatility patterns. 

The second application uses the Australian labor force dataset of~\citet{wang2025optimal}. It contains monthly unemployment series from January 2010 to
July 2023, giving 163 observations per series. Unlike the electricity
dataset, this is a grouped time-series structure formed by two
cross-classifying attributes: duration of job search and state or
territory (STT). Duration is divided into six groups, from under one
month to two years or more, and STT covers the eight Australian states
and territories: NSW, VIC, QLD, SA, WA, TAS, NT, and ACT. The resulting
grouped hierarchy contains one top-level series, 6 duration-level
series, 8 STT-level series, and  $n_b = 48$ duration~$\times$~STT bottom-level
series, for a total of $n = 63$ series. Figure~\ref{labour}
illustrates selected series, including the strong top-level seasonality and the more heterogeneous
lower-level grouped dynamics.

For both applications, we first fit ARIMA models from the
\pkg{forecast} package~\citep{Rforecast} separately to each series and
use the resulting forecasts as base forecasts. While this package selects among ARIMA models using an information criterion, in practice we acknowledge that model selection often also involves expert judgment \citep{petropoulos2023forecast}. Because the ARIMA models are estimated independently across the hierarchy or grouped structure, the base forecasts are not coherent. Reconciliation is therefore needed to achieve coherent forecasts: it takes the independently generated base forecasts as inputs, exploits the known aggregation constraints, and produces coherent forecasts across all levels. Within our framework, this leads to a common empirical
design in which different reconciliation procedures are compared through
their choices of covariance estimator, weight-level regularization, and
\textit{Joint} versus \textit{Separate} estimation.

We organize the reconciliation methods into two categories. The first contains unpenalized combining specifications, i.e.,
$\lambda=0$ in~\eqref{empirical-general-penalized}. For this category, the
implementation choice is the covariance estimator used for the one-step proxy
$\hat{\Sigma}_1$, which is then used as $\hat{\Sigma}_h$.
We consider three standard baseline covariance estimators from standard
MinT-style reconciliation in Table~\ref{tab:recon-methods}: the raw sample covariance for
\textit{MinT(Sample)}, the diagonal-shrinkage covariance for
\textit{MinT(Shrink)}, and the identity-based covariance specification for
\textit{OLS}. We label these specifications \textit{Raw Sample},
\textit{Diag Shrink}, and \textit{OLS}, respectively. By
Corollary~\ref{equ-mint-empirical}, their unpenalized combination-based versions
recover the corresponding MinT-style methods exactly. In addition, we include
the factor-based shrinkage estimator introduced in
Section~\ref{shrink_estimator}, labeled \textit{Factor}, as a new covariance
specification.
Because no penalty is imposed on the
combination weights, Proposition~\ref{prop:equivalence} implies that
the \textit{Joint} and \textit{Separate} formulations yield identical
weights. We therefore report a single unpenalized result for each covariance specification, labelled \textit{Joint} in the tables, with the understanding that it is identical to \textit{Separate} because $\lambda=0$.

The second category contains penalized combining weight specifications, which incorporates egalitarian combining weight regularization with each of
the four covariance specifications. For each of \textit{Raw Sample},
\textit{Diag Shrink}, \textit{OLS}, and \textit{Factor}, we consider
both eRidge and eLASSO penalties for the combining weights. With these penalties,
\textit{Joint} and \textit{Separate} are no longer equivalent in
general, as explained in Section~\ref{joint-sep-implementation}. We therefore
evaluate both strategies, yielding four penalized variants for each
covariance specification: \textit{Joint+eRidge},
\textit{Joint+eLASSO}, \textit{Separate+eRidge}, and
\textit{Separate+eLASSO}. 

In addition to these  reconciliation methods within our general framework, we report
four external benchmark methods for comparison: unreconciled \textit{Base}
forecasts, \textit{Bottom-up} forecasts, and two local-combination
methods, \textit{LCC}~\citep{hollyman2021understanding} and its variant labeled as  
\textit{LCC-Variant}~\citep{di2024forecast}, which are implemented
separately.
We evaluate forecast accuracy using the root mean squared error (RMSE) and RMSE skill score, which is the percentage improvement in RMSE
relative to the unreconciled \textit{Base} forecasts. Higher skill scores are therefore better. We report results averaged across forecast horizons and across series within each hierarchy level. Both applications
reserve the last 20\% of observations for rolling origin evaluation:
the electricity study uses 67 rolling windows with horizons
$h=1,\ldots,7$, and the labor force study uses 21 rolling windows with
horizons $h=1,\ldots,12$. Detailed implementation details are reported in Section~\ref{app:empirical-details}.

\subsection{Electricity Generation Results}
\label{sec:energy-results}

\begin{table}[ht]
\refstepcounter{table}
\label{energy_result}

\noindent
\begin{minipage}{\textwidth}
{\small
\renewcommand{\baselinestretch}{1}\selectfont
\textbf{Table~\thetable}\quad
Electricity generation RMSE and RMSE skill score
(\% relative to \textit{Base}) by hierarchy level and overall.
Gray rows denote classical and MinT-style benchmark methods.
Underlining indicates matching or improving on the corresponding
unpenalized \textit{Joint} baseline, boldface marks matching or
improving on all gray-row benchmarks, and $^{\dagger}$ marks
\textit{Separate} penalized variants that match or improve on their
\textit{Joint} penalized counterparts.
\par}
\end{minipage}

\vspace{4pt}

\centering
\scriptsize

  \centering
  \scriptsize
  \begin{adjustbox}{width=\linewidth,center}
  \begin{tabular}{crcccccrccccc}
  \toprule
  \multicolumn{1}{l}{} & \multicolumn{1}{l}{} & \multicolumn{4}{c}{RMSE} & & \multicolumn{1}{c}{} & \multicolumn{4}{c}{RMSE skill score (\%)} & \\
  \cline{3-6} \cline{9-12}
  \multicolumn{1}{l}{} & \multicolumn{1}{l}{} & Level 1 & Level 2 & Level 3 & Level 4 & & \multicolumn{1}{c}{} & Level 1 & Level 2 & Level 3 & Level 4 & \\
  \multicolumn{2}{c}{Method} & Total & \begin{tabular}[c]{@{}c@{}}Source\\type\end{tabular} & \begin{tabular}[c]{@{}c@{}}Energy\\category\end{tabular} & \begin{tabular}[c]{@{}c@{}}Detailed\\energy\\source\end{tabular} & All & \multicolumn{1}{c}{} & Total & \begin{tabular}[c]{@{}c@{}}Source\\type\end{tabular} & \begin{tabular}[c]{@{}c@{}}Energy\\category\end{tabular} & \begin{tabular}[c]{@{}c@{}}Detailed\\energy\\source\end{tabular} & All \\
  \midrule
  \rowcolor[HTML]{E7E6E6}
  \multicolumn{2}{r}{\cellcolor[HTML]{FFFFFF}\textit{Base}} & 18.24 & 21.47 & 11.63 & 8.48 & 11.58 & \cellcolor[HTML]{FFFFFF} & - &-  &- &- &- \\
  \rowcolor[HTML]{E7E6E6}
  \multicolumn{2}{r}{\cellcolor[HTML]{FFFFFF}\textit{Bottom-up}} & 23.50 & 22.54 & 11.52 & 8.48 & 12.12 & \cellcolor[HTML]{FFFFFF} & -28.8 & -5.0 & 0.9 & 0.0 & -4.7 \\
  \rowcolor[HTML]{E7E6E6}

  \rowcolor[HTML]{E7E6E6}
  \multicolumn{2}{r}{\cellcolor[HTML]{FFFFFF}\textit{LCC}} & 20.48 & 21.74 & 11.24 & 8.43 & 11.66 & \cellcolor[HTML]{FFFFFF} & -12.2 & -1.3 & 3.4 & 0.6 & -0.7 \\
   \rowcolor[HTML]{E7E6E6}
    \multicolumn{2}{r}{\cellcolor[HTML]{FFFFFF}\textit{LCC-Variant}} & 19.22 & 21.33 & 11.10 & 8.44 & 11.46 & \cellcolor[HTML]{FFFFFF} & -5.3 & 0.6 & 4.6 & 0.5 & 1.0 \\
  \midrule
  \rowcolor[HTML]{E7E6E6}
  \cellcolor[HTML]{FFFFFF} & \makecell[l]{\textit{Joint}\\\textit{(MinT-Sample)}} & 18.33 & 20.77 & 10.77 & 8.30 & 11.16 & \cellcolor[HTML]{FFFFFF} & -0.5 & 3.3 & 7.4 & 2.2 & 3.6 \\
  \rowcolor[HTML]{FFFFFF}
  \cellcolor[HTML]{FFFFFF} & \textit{Joint + eRidge} & \underline{18.25} & \underline{20.67} & 10.78 & \underline{8.26} & \underline{11.12} & \cellcolor[HTML]{FFFFFF} & \underline{-0.0} & \underline{3.7} & 7.3 & \underline{2.6} & \underline{3.9} \\
  \rowcolor[HTML]{FFFFFF}
  \cellcolor[HTML]{FFFFFF} & \textit{Joint + eLASSO} & \underline{18.25} & 20.79 & 10.78 & \underline{8.28} & \underline{11.15} & \cellcolor[HTML]{FFFFFF} & \underline{-0.0} & 3.2 & 7.3 & \underline{2.3} & \underline{3.7} \\
  \rowcolor[HTML]{FFFFFF}
  \cellcolor[HTML]{FFFFFF} & \textit{Separate + eRidge} & \underline{\textbf{18.07}}$^{\dagger}$ & \underline{20.62}$^{\dagger}$ & \underline{\textbf{10.74}}$^{\dagger}$ & \underline{\textbf{8.25}}$^{\dagger}$ & \underline{\textbf{11.08}}$^{\dagger}$ & \cellcolor[HTML]{FFFFFF} & \underline{\textbf{0.9}}$^{\dagger}$ & \underline{4.0}$^{\dagger}$ & \underline{\textbf{7.6}}$^{\dagger}$ & \underline{\textbf{2.7}}$^{\dagger}$ & \underline{\textbf{4.2}}$^{\dagger}$ \\
  \rowcolor[HTML]{FFFFFF}
  \multirow{-5}{*}{\cellcolor[HTML]{FFFFFF}Raw Sample} & \textit{Separate + eLASSO} & \underline{\textbf{18.00}}$^{\dagger}$ & \underline{20.68}$^{\dagger}$ & \underline{\textbf{10.74}}$^{\dagger}$ & \underline{8.27}$^{\dagger}$ & \underline{11.10}$^{\dagger}$ & \cellcolor[HTML]{FFFFFF} & \underline{\textbf{1.3}}$^{\dagger}$ & \underline{3.7}$^{\dagger}$ & \underline{\textbf{7.6}}$^{\dagger}$ & \underline{2.5}$^{\dagger}$ & \underline{4.1}$^{\dagger}$ \\
  \midrule
  \rowcolor[HTML]{E7E6E6}
  \cellcolor[HTML]{FFFFFF} & \makecell[l]{\textit{Joint}\\\textit{(MinT-Shrink)}} & 18.18 & 20.54 & 10.76 & 8.25 & 11.09 & \cellcolor[HTML]{FFFFFF} & 0.3 & 4.3 & 7.5 & 2.7 & 4.2 \\
  \rowcolor[HTML]{FFFFFF}
  \cellcolor[HTML]{FFFFFF} & \textit{Joint + eRidge} & 18.19 & \underline{\textbf{20.53}} & \underline{\textbf{10.75}} & \underline{\textbf{8.25}} & \underline{\textbf{11.09}} & \cellcolor[HTML]{FFFFFF} & \underline{\textbf{0.3}} & \underline{\textbf{4.4}} & \underline{\textbf{7.6}} & \underline{\textbf{2.7}} & \underline{\textbf{4.2}} \\
  \rowcolor[HTML]{FFFFFF}
  \cellcolor[HTML]{FFFFFF} & \textit{Joint + eLASSO} & 18.19 & \underline{\textbf{20.53}} & \underline{\textbf{10.75}} & \underline{\textbf{8.25}} & \underline{\textbf{11.09}} & \cellcolor[HTML]{FFFFFF} & \underline{\textbf{0.3}} & \underline{\textbf{4.4}} & \underline{\textbf{7.6}} & \underline{\textbf{2.7}} & \underline{\textbf{4.2}} \\
  \rowcolor[HTML]{FFFFFF}
  \cellcolor[HTML]{FFFFFF} & \textit{Separate + eRidge} & \underline{\textbf{18.14}}$^{\dagger}$ & \underline{\textbf{20.53}}$^{\dagger}$ & \underline{\textbf{10.76}} & \underline{\textbf{8.25}}$^{\dagger}$ & \underline{\textbf{11.09}}$^{\dagger}$ & \cellcolor[HTML]{FFFFFF} & \underline{\textbf{0.5}}$^{\dagger}$ & \underline{\textbf{4.4}}$^{\dagger}$ & \underline{\textbf{7.5}} & \underline{\textbf{2.7}}$^{\dagger}$ & \underline{\textbf{4.2}}$^{\dagger}$ \\
  \rowcolor[HTML]{FFFFFF}
  \multirow{-5}{*}{\cellcolor[HTML]{FFFFFF}Diag Shrink} & \textit{Separate + eLASSO} & \underline{\textbf{18.16}}$^{\dagger}$ & \underline{\textbf{20.53}}$^{\dagger}$ & \underline{\textbf{10.76}} & \underline{\textbf{8.25}}$^{\dagger}$ & \underline{\textbf{11.09}}$^{\dagger}$ & \cellcolor[HTML]{FFFFFF} & \underline{\textbf{0.4}}$^{\dagger}$ & \underline{\textbf{4.4}}$^{\dagger}$ & \underline{\textbf{7.5}} & \underline{\textbf{2.7}}$^{\dagger}$ & \underline{\textbf{4.2}}$^{\dagger}$ \\
  \midrule
  \rowcolor[HTML]{E7E6E6}
  \cellcolor[HTML]{FFFFFF} & \makecell[l]{\textit{Joint}\\\textit{(Standard OLS)}} & 18.20 & 21.03 & 11.26 & 8.46 & 11.39 & \cellcolor[HTML]{FFFFFF} & 0.2 & 2.0 & 3.2 & 0.2 & 1.6 \\
  \rowcolor[HTML]{FFFFFF}
  \cellcolor[HTML]{FFFFFF} & \textit{Joint + eRidge} & 18.26 & \underline{20.97} & \underline{11.19} & \underline{8.42} & \underline{11.35} & \cellcolor[HTML]{FFFFFF} & -0.1 & \underline{2.3} & \underline{3.8} & \underline{0.7} & \underline{2.0} \\
  \rowcolor[HTML]{FFFFFF}
  \cellcolor[HTML]{FFFFFF} & \textit{Joint + eLASSO} & \underline{\textbf{18.16}} & 21.04 & \underline{11.22} & \underline{8.44} & \underline{11.36} & \cellcolor[HTML]{FFFFFF} & \underline{\textbf{0.5}} & \underline{2.0} & \underline{3.6} & \underline{0.5} & \underline{1.8} \\
  \rowcolor[HTML]{FFFFFF}
  \cellcolor[HTML]{FFFFFF} & \textit{Separate + eRidge} & 18.28 & \underline{20.92}$^{\dagger}$ & \underline{11.23} & \underline{8.43} & \underline{11.35}$^{\dagger}$ & \cellcolor[HTML]{FFFFFF} & -0.2 & \underline{2.5}$^{\dagger}$ & \underline{3.5} & \underline{0.6} & \underline{1.9} \\
  \rowcolor[HTML]{FFFFFF}
  \multirow{-5}{*}{\cellcolor[HTML]{FFFFFF}OLS} & \textit{Separate + eLASSO} & \underline{\textbf{18.15}}$^{\dagger}$ & \underline{21.00}$^{\dagger}$ & \underline{11.22}$^{\dagger}$ & \underline{8.45} & \underline{11.36}$^{\dagger}$ & \cellcolor[HTML]{FFFFFF} & \underline{\textbf{0.5}}$^{\dagger}$ & \underline{2.2}$^{\dagger}$ & \underline{3.5} & \underline{0.4} & \underline{1.8}$^{\dagger}$ \\
  \midrule
  \rowcolor[HTML]{FFFFFF}
  \cellcolor[HTML]{FFFFFF} & \cellcolor[HTML]{FFFFFF}\textit{Joint} & \textbf{18.18} & 20.57 & \textbf{10.72} & \textbf{8.24} & \textbf{11.08} & \cellcolor[HTML]{FFFFFF} & \textbf{0.3} & 4.2 & \textbf{7.8} & \textbf{2.9} & \textbf{4.3} \\
  \rowcolor[HTML]{FFFFFF}
  \cellcolor[HTML]{FFFFFF} & \textit{Joint + eRidge} & \underline{\textbf{18.17}} & \underline{20.55} & \underline{\textbf{10.71}} & \underline{\textbf{8.23}} & \underline{\textbf{11.07}} & \cellcolor[HTML]{FFFFFF} & \underline{\textbf{0.4}} & \underline{\textbf{4.3}} & \underline{\textbf{7.9}} & \underline{\textbf{2.9}} & \underline{\textbf{4.4}} \\
  \rowcolor[HTML]{FFFFFF}
  \cellcolor[HTML]{FFFFFF} & \textit{Joint + eLASSO} & 18.19 & \underline{20.55} & \underline{\textbf{10.71}} & \underline{\textbf{8.24}} & \underline{\textbf{11.07}} & \cellcolor[HTML]{FFFFFF} & \underline{\textbf{0.3}} & \underline{\textbf{4.3}} & \underline{\textbf{7.9}} & \underline{\textbf{2.9}} & \underline{\textbf{4.4}} \\
  \rowcolor[HTML]{FFFFFF}
  \cellcolor[HTML]{FFFFFF} & \textit{Separate + eRidge} & \underline{\textbf{18.10}}$^{\dagger}$ & \underline{\textbf{20.53}}$^{\dagger}$ & \underline{\textbf{10.72}} & \underline{\textbf{8.23}}$^{\dagger}$ & \underline{\textbf{11.06}}$^{\dagger}$ & \cellcolor[HTML]{FFFFFF} & \underline{\textbf{0.8}}$^{\dagger}$ & \underline{\textbf{4.4}}$^{\dagger}$ & \underline{\textbf{7.8}} & \underline{\textbf{3.0}}$^{\dagger}$ & \underline{\textbf{4.4}}$^{\dagger}$ \\
  \rowcolor[HTML]{FFFFFF}
  \multirow{-5}{*}{\cellcolor[HTML]{FFFFFF}Factor} & \textit{Separate + eLASSO} & \underline{\textbf{18.08}}$^{\dagger}$ & \underline{\textbf{20.54}}$^{\dagger}$ & \underline{\textbf{10.72}} & \underline{\textbf{8.24}}$^{\dagger}$ & \underline{\textbf{11.07}}$^{\dagger}$ & \cellcolor[HTML]{FFFFFF} & \underline{\textbf{0.9}}$^{\dagger}$ & \underline{\textbf{4.3}}$^{\dagger}$ & \underline{\textbf{7.8}} & \underline{\textbf{2.9}}$^{\dagger}$ & \underline{\textbf{4.4}}$^{\dagger}$ \\
  \bottomrule
  \end{tabular}
  \end{adjustbox}

\end{table}

Table~\ref{energy_result} reports RMSE values and skill
scores by hierarchy level and overall. Shaded rows denote classical
benchmarks and MinT-style unpenalized baselines; underlining indicates
performance matching or improving on the corresponding unpenalized
\textit{Joint} baseline; boldface indicates performance matching or
exceeding all gray-row benchmarks; and $^{\dagger}$ marks
\textit{Separate} variants that match or improve on their corresponding
\textit{Joint} penalized result.

\paragraph{Bottom-up and LCC-type benchmarks.}
The proposed methods improve clearly on the non-MinT benchmarks in the first four rows of the table. Interestingly, \textit{LCC} is worse
than \textit{Base} at Levels~1--2, while \textit{LCC-Variant} only
moves ahead of \textit{Base} at Levels~2--4 and remains weak at Level~1.
This comparison is particularly meaningful because LCC-type methods are
the closest existing approaches to a forecast-combination interpretation
of reconciliation. As discussed in
Sections~\ref{sec: connection} and~\ref{sec:LCC}, their local two-level
reconciliation and restricted candidate sets can leave useful
cross-level information unexploited. Our proposed global combination
framework avoids this restriction and delivers uniformly lower overall
RMSE in Table~\ref{energy_result}.

\paragraph{Factor covariance specification.}
Among the unpenalized configurations, \textit{Factor+Joint} gives the
lowest overall RMSE, $11.08$, slightly improving on the
strongest gray-row benchmark, \textit{MinT(Shrink)} at $11.09$. With egalitarian penalization,
\textit{Factor+Separate+eRidge} achieves the best overall result in
Table~\ref{energy_result}, with RMSE $11.06$. It also
matches or improves on the strongest benchmark at every hierarchy level. 
Although the best Level~1 result among all proposed variants is attained
by \textit{Raw Sample+Separate+eLASSO} ($18.00$), the \textit{Factor}
variants are consistently among the strongest methods. Indeed, the table suggests that the factor-based shrinkage covariance
estimation provides a stable basis for reconciliation.

\paragraph{Egalitarian penalization for the combining weights.}
At the overall level, egalitarian penalization matches or improves the
corresponding unpenalized \textit{Joint} baseline in all  configurations. The largest gain occurs under
\textit{Raw Sample}, where the best penalized result reduces overall
RMSE from $11.16$ to $11.08$. Under \textit{OLS}, the best penalized
result improves RMSE from $11.39$ to $11.35$. By contrast,
\textit{Diag Shrink} is already strong and remains essentially
unchanged at the displayed precision ($11.09$ throughout), while
\textit{Factor} improves only slightly from $11.08$ to $11.06$. Thus,
imposing the penalty on the combining weights is most useful when the covariance estimator is less
regularized, and mainly provides small stabilization gains when the
baseline covariance estimator is already strong.

\paragraph{Joint versus Separate under combining weight penalization.}
Under penalization, the \textit{Separate} strategy performs very similarly
to, and often slightly better than, its \textit{Joint} counterpart. This similarity is
consistent with the relatively compact bottom level ($n_b=15$), where
the cross-block components of $\hat{Q}_h$ are present but not sufficiently difficult to estimate. 
Hence, in this application the benefit of \textit{Separate}  
is mainly computational:
because \textit{Separate} decomposes the penalized estimation into
independent bottom-series problems, it is  simpler to solve than the
full \textit{Joint} optimisation over all combination weights.
Table~\ref{energy_result} shows that this computational simplification
does not come at the cost of forecast accuracy.

\subsection{Grouped Labor Force Results}
\label{sec:labour-results}

\begin{table}[ht]
\refstepcounter{table}
\label{labour_result}

\noindent
\begin{minipage}{\textwidth}
{\small
\renewcommand{\baselinestretch}{1}\selectfont
\textbf{Table~\thetable}\quad
Labor force RMSE and RMSE skill score
(\% relative to \textit{Base}) by hierarchy level and overall.
STT denotes State or Territory.
Shading, underlining, boldface, and $^{\dagger}$ follow the same
conventions as in Table~\ref{energy_result}.
\par}
\end{minipage}

\vspace{4pt}

\centering
\scriptsize
    \centering
    \scriptsize
    \begin{adjustbox}{width=\linewidth,center}
    \begin{tabular}{crcccccrccccc}
    \toprule
    \multicolumn{1}{l}{} & \multicolumn{1}{l}{} & \multicolumn{4}{c}{RMSE} & & \multicolumn{1}{c}{} & \multicolumn{4}{c}{RMSE skill score (\%)} & \\
    \cline{3-6} \cline{9-12}
    \multicolumn{1}{l}{} & \multicolumn{1}{l}{} & Level 1 & Level 2 & Level 3 & Level 4 & & \multicolumn{1}{c}{} & Level 1 & Level 2 & Level 3 & Level 4 & \\
    \multicolumn{2}{c}{Method} & Total & \begin{tabular}[c]{@{}c@{}}Duration\end{tabular} & \begin{tabular}[c]{@{}c@{}}STT\end{tabular} & \begin{tabular}[c]{@{}c@{}}Duration\\$\times$ STT \end{tabular} & All & \multicolumn{1}{c}{} & Total & \begin{tabular}[c]{@{}c@{}}Duration\end{tabular} & \begin{tabular}[c]{@{}c@{}}STT\end{tabular} & \begin{tabular}[c]{@{}c@{}}Duration\\$\times$ STT\end{tabular} & All \\
    \midrule
    \rowcolor[HTML]{E7E6E6}
    \multicolumn{2}{r}{\cellcolor[HTML]{FFFFFF}\textit{Base}} & 115.57 & 46.41 & 28.85 & 8.19 & 24.84 & \cellcolor[HTML]{FFFFFF} & - & - &- &- &- \\
    \rowcolor[HTML]{E7E6E6}
    \multicolumn{2}{r}{\cellcolor[HTML]{FFFFFF}\textit{Bottom-up}} & 197.33 & 42.45 & 34.34 & 8.19 & 31.56 & \cellcolor[HTML]{FFFFFF} & -70.7 & 8.5 & -19.0 & 0.0 & -27.1 \\
      \rowcolor[HTML]{E7E6E6}
    \multicolumn{2}{r}{\cellcolor[HTML]{FFFFFF}\textit{LCC}} & 177.10 & 39.45 & 30.98 & 7.74 & 28.57 & \cellcolor[HTML]{FFFFFF} & -53.2 & 15.0 & -7.4 & 5.4 & -15.0 \\
    \rowcolor[HTML]{E7E6E6}
    \multicolumn{2}{r}{\cellcolor[HTML]{FFFFFF}\textit{LCC-Variant}} & 155.80 & 37.20 & 27.72 & 7.47 & 25.69 & \cellcolor[HTML]{FFFFFF} & -34.8 & 19.9 & 3.9 & 8.8 & -3.4 \\

    \midrule
    \rowcolor[HTML]{E7E6E6}
    \cellcolor[HTML]{FFFFFF} & \makecell[l]{\textit{Joint}\\\textit{(MinT-Sample)}} & 139.84 & 36.71 & 26.37 & 8.00 & 24.14 & \cellcolor[HTML]{FFFFFF} & -21.0 & 20.9 & 8.6 & 2.2 & 2.8 \\
    \rowcolor[HTML]{FFFFFF}
    \cellcolor[HTML]{FFFFFF} & \textit{Joint + eRidge} & \underline{134.96} & \underline{35.86} & 26.92 & \underline{7.54} & \underline{23.51} & \cellcolor[HTML]{FFFFFF} & \underline{-16.8} & \underline{22.7} & 6.7 & \underline{7.9} & \underline{5.3} \\
    \rowcolor[HTML]{FFFFFF}
    \cellcolor[HTML]{FFFFFF} & \textit{Joint + eLASSO} & \underline{137.32} & \underline{36.17} & 26.47 & \underline{7.72} & \underline{23.75} & \cellcolor[HTML]{FFFFFF} & \underline{-18.8} & \underline{22.1} & 8.2 & \underline{5.7} & \underline{4.4} \\
    \rowcolor[HTML]{FFFFFF}
    \cellcolor[HTML]{FFFFFF} & \textit{Separate + eRidge} & \underline{129.97}$^{\dagger}$ & \underline{35.57}$^{\dagger}$ & 26.86$^{\dagger}$ & \underline{7.73} & \underline{23.13}$^{\dagger}$ & \cellcolor[HTML]{FFFFFF} & \underline{-12.5}$^{\dagger}$ & \underline{23.4}$^{\dagger}$ & 6.9$^{\dagger}$ & \underline{5.5} & \underline{6.9}$^{\dagger}$ \\
    \rowcolor[HTML]{FFFFFF}
    \multirow{-5}{*}{\cellcolor[HTML]{FFFFFF}Raw Sample} & \textit{Separate + eLASSO} & \underline{125.71}$^{\dagger}$ & \underline{\textbf{35.35}}$^{\dagger}$ & \underline{\textbf{26.09}}$^{\dagger}$ & \underline{7.80} & \underline{\textbf{22.62}}$^{\dagger}$ & \cellcolor[HTML]{FFFFFF} & \underline{-8.8}$^{\dagger}$ & \underline{\textbf{23.8}}$^{\dagger}$ & \underline{\textbf{9.6}}$^{\dagger}$ & \underline{4.7} & \underline{\textbf{8.9}}$^{\dagger}$ \\
    \midrule
    \rowcolor[HTML]{E7E6E6}
    \cellcolor[HTML]{FFFFFF} & \makecell[l]{\textit{Joint}\\\textit{(MinT-Shrink)}} & 148.79 & 35.65 & 27.35 & 7.34 & 24.74 & \cellcolor[HTML]{FFFFFF} & -28.8 & 23.2 & 5.2 & 10.4 & 0.4 \\
    \rowcolor[HTML]{FFFFFF}
    \cellcolor[HTML]{FFFFFF} & \textit{Joint + eRidge} & \underline{148.20} & 35.78 & 27.52 & 7.35 & \underline{24.73} & \cellcolor[HTML]{FFFFFF} & \underline{-28.2} & 22.9 & 4.6 & 10.2 & \underline{0.4} \\
    \rowcolor[HTML]{FFFFFF}
    \cellcolor[HTML]{FFFFFF} & \textit{Joint + eLASSO} & \underline{148.66} & 35.67 & 27.44 & 7.35 & 24.75 & \cellcolor[HTML]{FFFFFF} & \underline{-28.6} & 23.1 & 4.9 & 10.3 & \underline{0.4} \\
    \rowcolor[HTML]{FFFFFF}
    \cellcolor[HTML]{FFFFFF} & \textit{Separate + eRidge} & \underline{143.44}$^{\dagger}$ & \underline{35.50}$^{\dagger}$ & 27.41$^{\dagger}$ & 7.37 & \underline{24.25}$^{\dagger}$ & \cellcolor[HTML]{FFFFFF} & \underline{-24.1}$^{\dagger}$ & \underline{23.5}$^{\dagger}$ & 5.0$^{\dagger}$ & 9.9 & \underline{2.4}$^{\dagger}$ \\
    \rowcolor[HTML]{FFFFFF}
    \multirow{-5}{*}{\cellcolor[HTML]{FFFFFF}Diag Shrink} & \textit{Separate + eLASSO} & \underline{145.11}$^{\dagger}$ & \underline{\textbf{35.43}}$^{\dagger}$ & \underline{27.35}$^{\dagger}$ & 7.35$^{\dagger}$ & \underline{24.37}$^{\dagger}$ & \cellcolor[HTML]{FFFFFF} & \underline{-25.6}$^{\dagger}$ & \underline{\textbf{23.7}}$^{\dagger}$ & \underline{5.2}$^{\dagger}$ & 10.2 & \underline{1.9}$^{\dagger}$ \\
    \midrule
    \rowcolor[HTML]{E7E6E6}
    \cellcolor[HTML]{FFFFFF} & \makecell[l]{\textit{Joint}\\\textit{(Standard OLS)}} & 127.69 & 35.48 & 27.29 & 7.47 & 22.95 & \cellcolor[HTML]{FFFFFF} & -10.5 & 23.6 & 5.4 & 8.8 & 7.6 \\
    \rowcolor[HTML]{FFFFFF}
    \cellcolor[HTML]{FFFFFF} & \textit{Joint + eRidge} & \underline{127.65} & 35.49 & \underline{27.29} & \underline{7.47} & \underline{\textbf{22.95}} & \cellcolor[HTML]{FFFFFF} & \underline{-10.5} & 23.5 & \underline{5.4} & \underline{8.8} & \underline{\textbf{7.6}} \\
    \rowcolor[HTML]{FFFFFF}
    \cellcolor[HTML]{FFFFFF} & \textit{Joint + eLASSO} & \underline{127.69} & \underline{\textbf{35.48}} & \underline{27.29} & \underline{7.47} & \underline{\textbf{22.95}} & \cellcolor[HTML]{FFFFFF} & \underline{-10.5} & \underline{\textbf{23.6}} & \underline{5.4} & \underline{8.8} & \underline{\textbf{7.6}} \\
    \rowcolor[HTML]{FFFFFF}
    \cellcolor[HTML]{FFFFFF} & \textit{Separate + eRidge} & \underline{127.13}$^{\dagger}$ & \underline{\textbf{35.45}}$^{\dagger}$ & \underline{27.26}$^{\dagger}$ & \underline{7.46}$^{\dagger}$ & \underline{\textbf{22.90}}$^{\dagger}$ & \cellcolor[HTML]{FFFFFF} & \underline{-10.0}$^{\dagger}$ & \underline{\textbf{23.6}}$^{\dagger}$ & \underline{5.5}$^{\dagger}$ & \underline{8.8}$^{\dagger}$ & \underline{\textbf{7.8}}$^{\dagger}$ \\
    \rowcolor[HTML]{FFFFFF}
    \multirow{-5}{*}{\cellcolor[HTML]{FFFFFF}OLS} & \textit{Separate + eLASSO} & 127.77 & 35.51 & 27.31 & \underline{7.47}$^{\dagger}$ & 22.96 & \cellcolor[HTML]{FFFFFF} & -10.6 & 23.5 & 5.3 & 8.7 & 7.5 \\
    \midrule
    \rowcolor[HTML]{FFFFFF}
    \cellcolor[HTML]{FFFFFF} & \cellcolor[HTML]{FFFFFF}\textit{Joint} & 130.71 & \textbf{34.39} & \textbf{25.00} & \textbf{7.26} & \textbf{22.60} & \cellcolor[HTML]{FFFFFF} & -13.1 & \textbf{25.9} & \textbf{13.3} & \textbf{11.3} & \textbf{9.0} \\
    \rowcolor[HTML]{FFFFFF}
    \cellcolor[HTML]{FFFFFF} & \textit{Joint + eRidge} & \underline{129.95} & \textbf{34.41} & \textbf{25.31} & \underline{\textbf{7.24}} & \underline{\textbf{22.58}} & \cellcolor[HTML]{FFFFFF} & \underline{-12.4} & \underline{\textbf{25.9}} & \textbf{12.3} & \underline{\textbf{11.6}} & \underline{\textbf{9.1}} \\
    \rowcolor[HTML]{FFFFFF}
    \cellcolor[HTML]{FFFFFF} & \textit{Joint + eLASSO} & \underline{130.37} & \underline{\textbf{34.37}} & \textbf{25.12} & \underline{\textbf{7.25}} & \underline{\textbf{22.59}} & \cellcolor[HTML]{FFFFFF} & \underline{-12.8} & \underline{\textbf{25.9}} & \textbf{12.9} & \underline{\textbf{11.4}} & \underline{\textbf{9.1}} \\
    \rowcolor[HTML]{FFFFFF}
    \cellcolor[HTML]{FFFFFF} & \textit{Separate + eRidge} & \underline{128.43}$^{\dagger}$ & \underline{\textbf{34.29}}$^{\dagger}$ & \textbf{25.19}$^{\dagger}$ & \textbf{7.28} & \underline{\textbf{22.42}}$^{\dagger}$ & \cellcolor[HTML]{FFFFFF} & \underline{-11.1}$^{\dagger}$ & \underline{\textbf{26.1}}$^{\dagger}$ & \textbf{12.7}$^{\dagger}$ & \textbf{11.1} & \underline{\textbf{9.7}}$^{\dagger}$ \\
    \rowcolor[HTML]{FFFFFF}
    \multirow{-5}{*}{\cellcolor[HTML]{FFFFFF}Factor} & \textit{Separate + eLASSO} & \underline{128.02}$^{\dagger}$ & \underline{\textbf{34.23}}$^{\dagger}$ & \underline{\textbf{24.99}}$^{\dagger}$ & \underline{\textbf{7.26}} & \underline{\textbf{22.34}}$^{\dagger}$ & \cellcolor[HTML]{FFFFFF} & \underline{-10.8}$^{\dagger}$ & \underline{\textbf{26.2}}$^{\dagger}$ & \underline{\textbf{13.4}}$^{\dagger}$ & \underline{\textbf{11.3}} & \underline{\textbf{10.1}}$^{\dagger}$ \\
    \bottomrule
    \end{tabular}
    \end{adjustbox}

  \end{table}

  Table~\ref{labour_result} reports the forecasting results for the grouped
  labor force  dataset. Relative to the previous electricity hierarchy,
  this application has significantly larger size
  ($n_b=48$ versus $n_b=15$, $n = 63$ versus $n = 23$). Therefore, this application provides a useful context for assessing whether the proposed framework remains effective in a larger grouped setting.

  We first note that the relative performance of the benchmarks in the gray rows differs sharply from Table~\ref{energy_result}.
  Whereas \textit{MinT(Shrink)} is the strongest benchmark at all
  levels in the electricity study, no single benchmark dominates here:
  \textit{Base} is strongest at Level~1 ($115.57$), \textit{OLS} at
  Level~2 ($35.48$) and overall ($22.95$), \textit{MinT(Sample)} at
  Level~3 ($26.37$), and \textit{MinT(Shrink)} at Level~4 ($7.34$). This
  suggests the labor force application poses a less uniform
  reconciliation problem than the previous electricity application.

  \paragraph{Bottom-up and LCC-type benchmarks.}
The \textit{Bottom-up}, \textit{LCC}, and \textit{LCC-Variant} benchmarks perform unevenly on the labor force dataset. Specifically, they improve on
\textit{Base} at some lower levels, especially Level~2 and Level~4, but
all three are worse than \textit{Base}
overall. This contrasts with Table~\ref{energy_result}, where
\textit{LCC-Variant} achieved a modest positive overall skill score.
 Thus, as in the electricity study but more sharply here,
 the local-combination benchmarks are less competitive overall than the proposed global combination framework.

  \paragraph{Factor covariance specification.}
  The strongest results in Table~\ref{labour_result} again favor the
  \textit{Factor} covariance specification, with the advantage being more
  pronounced than in Table~\ref{energy_result}. Among unpenalized
  configurations, \textit{Factor+Joint}  attains the lowest overall RMSE,
  $22.60$, improving on the strongest gray-row benchmark,
  \textit{OLS}, at $22.95$. With penalization,
  \textit{Factor+Separate+eLASSO} achieves the lowest overall RMSE in the
  table, $22.34$, with
  \textit{Factor+Separate+eRidge} close behind at $22.42$.
  All five \textit{Factor} variants outperform the strongest benchmark at
  Levels~2--4 and overall. The exception is Level~1, where
  no reconciliation method improves on the unreconciled \textit{Base}
  forecasts, suggesting limited room for improvement once aggregation has
  already smoothed the top-level series.

  \paragraph{Egalitarian penalization for the combining weights.}
  The benefit of egalitarian penalization is similar to that in
  Table~\ref{energy_result}, but the gains are larger. At the overall level, penalization matches or improves the corresponding unpenalized \textit{Joint} baseline in most configurations. The largest gain occurs under \textit{Raw Sample}, where \textit{Separate+eLASSO} reduces overall RMSE from $24.14$ to $22.62$. 
  Under \textit{OLS}, the improvement is much smaller, from $22.95$ to
  $22.90$ for \textit{Separate+eRidge}. Under \textit{Factor}, the
  unpenalized baseline is already strong, but penalization combined with
  \textit{Separate} still reduces RMSE from $22.60$ to $22.34$. By
  contrast, \textit{Diag Shrink} remains relatively weak even
  after penalization. These results reinforce the pattern from
  Table~\ref{energy_result}: the penalty is most valuable when the
  covariance estimator provides less stabilization on its own.

  \paragraph{Joint versus Separate under combining weight penalization.}
The contrast between \textit{Joint} and \textit{Separate} is stronger
on the labor force dataset than on the electricity dataset. At the
overall level, \textit{Separate} matches or improves its
\textit{Joint} counterpart in most
cases. The
largest improvement appears for \textit{Raw Sample}, where
\textit{Separate+eLASSO} improves on \textit{Joint+eLASSO} by $1.13$
RMSE. For \textit{Factor}, the two \textit{Separate} variants improve
on their \textit{Joint} counterparts by $0.16$--$0.25$ RMSE. Hence, in
this larger grouped setting, the 
\textit{Separate} strategy is not only computationally simpler but  also more
accurate.

\subsection{Summary of Empirical Findings}

Taken together, the two empirical studies show that the proposed
combination-based reconciliation framework is not merely an alternative
implementation of existing reconciliation methods, but a useful way to
organize and extend them. Across both applications, the framework
recovers familiar MinT-style specifications as special
cases, while also allowing additional modeling choices such as covariance shrinkage,  regularization of combination weights, and \textit{Joint}
versus \textit{Separate} estimation.

The comparison with LCC-type benchmarks illustrates the value of moving
from locally induced combinations to the global combination formulation
developed in this paper. LCC-type methods are closest in spirit to a
combination interpretation of reconciliation, but they operate through
local two-level reconciliation and a more restricted set of candidate
forecasts. Our results show
that the broader candidate construction and globally optimized
combination weights in our framework can translate the combination
perspective into stronger empirical performance.

The \textit{Factor} covariance specification is the most consistently
strong covariance choice. It matches the strongest benchmark in
the electricity study and clearly gains in the labor force
study, where all \textit{Factor} variants improve on the strongest
benchmark at most levels. 

Regularizing the combination weights adds value beyond covariance
estimation alone. The gains are not uniform across all specifications,
but they follow a clear pattern: regularization is most useful when the
covariance estimate is relatively weak or noisy, as under
\textit{Raw Sample}, and more modest when the covariance specification
is already stabilized, as under \textit{Factor} or \textit{Diag
Shrink}. This means that 
covariance stabilization and combining weight regularization play
complementary roles, with the combination representation providing a
natural place to integrate them. 

Without combining weight regularization, \textit{Joint} and \textit{Separate} coincide as
implied by the theory. With weight penalization, however, the two strategies
can differ, and the empirical differences favor \textit{Separate} in
most comparisons. The advantage is small in the compact
electricity hierarchy but noticeably larger in the labor force grouped
hierarchy. Thus, \textit{Separate} is not only a computational
simplification; in larger  settings, it can also act as an
effective form of statistical stabilization.

The results remain level-dependent. For example, in the labor force
application, no reconciliation method improves on the unreconciled
\textit{Base} forecast at Level~1, while the proposed methods deliver
their clearest gains at lower and overall levels. The empirical value
of the framework therefore lies not in guaranteeing uniform dominance
at every aggregation level, but in providing a flexible and
interpretable structure for deciding how information is borrowed across
the hierarchy, how covariance estimates are stabilized, and how
combination weights are regularized. Across the two datasets, these
choices produce competitive or leading overall accuracy
while preserving coherence.

The Electronic Companion complements the accuracy comparisons with two
additional analyses of the proposed framework. Section~\ref{app:weight-interpretation}
visualizes the estimated candidate-combination weights, making transparent how
direct, ancestor, and collateral candidate forecasts contribute to the
reconciled bottom-level forecasts. It also examines how egalitarian
regularization changes these weight profiles. Section~\ref{app:computation-time}
compares the computation times of the \textit{Separate} and \textit{Joint} implementations, showing that \textit{Separate} can deliver
substantial efficiency gains, particularly for the larger penalized labor force dataset.

\section{Conclusion}
\label{conclusion}

This paper develops a forecast-combination framework for hierarchical and grouped forecast reconciliation. 
By constructing a maximal set of structurally interpretable candidate forecasts for each bottom-level series, we show that standard linear reconciliation under the standard unbiasedness constraint can be equivalently represented as a forecast combination problem. 
This equivalence provides a new way to understand reconciliation: reconciled forecasts are not only projections onto the coherent forecast space, but also weighted combinations of direct and indirect forecast sources generated by the aggregation structure. 
In this sense, the framework turns reconciliation from an abstract cross-series adjustment into a transparent and interpretable combination problem.

The framework also uncovers a hidden connection between MinT reconciliation and Bates--Granger optimal combination, two methods that have developed in largely separate  literatures. 
We prove that the MSE-optimal combination weights in our framework induce exactly the MinT reconciliation solution, while the optimal weight problem decomposes into independent per-series subproblems with Bates--Granger solutions. 
Thus, MinT can be understood as a collection of optimal forecast-combination problems over hierarchy-induced candidate forecasts. 
This result deepens the theoretical understanding of existing reconciliation methods by showing that a central method in hierarchical forecasting has an implicit optimal-combination structure.

Building on this representation, we propose a unified finite-sample implementation that accommodates covariance-side stabilization and direct regularization of combination weights. 
Existing MinT variants arise as special cases, while new extensions, such as factor-based covariance estimation, egalitarian weight regularization, and series-wise \emph{Separate} estimation, become natural within the same framework. 
Empirical results on electricity generation and labor force data show that the proposed approach is practically implementable, competitive with existing reconciliation methods, and capable of improving accuracy while preserving coherence. 
Overall, the forecast-combination framework provides a new perspective for understanding forecast reconciliation and expands the methodological toolkit for developing new, principled reconciliation procedures.

\clearpage

\bibliographystyle{informs2014}
\bibliography{HF_combing}

\ECSwitch
\EquationsNumberedBySection

\makeatletter
\renewcommand*{\theHsection}{EC.\arabic{section}}
\renewcommand*{\theHsubsection}
  {EC.\arabic{section}.\arabic{subsection}}
\renewcommand*{\theHsubsubsection}
  {EC.\arabic{section}.\arabic{subsection}.\arabic{subsubsection}}
\makeatother

\begin{center}
{\Large\bfseries Electronic Companion to}\\[8pt]

{\Large\bfseries
``A Forecast Combination Framework for Hierarchical and\\
Grouped Time Series Reconciliation''
}
\end{center}

\vspace{1.5em}

\noindent
This electronic companion contains the following sections:

\vspace{0.8em}

\begin{tabularx}{\textwidth}{@{}>{\bfseries}p{1.6cm}X@{}}

\hyperref[sec:ec-proofs]{\ref*{sec:ec-proofs}:}
&
\hyperref[sec:ec-proofs]{\nameref*{sec:ec-proofs}}
\\[4pt]

\hyperref[app:grouped]{\ref*{app:grouped}:}
&
\hyperref[app:grouped]{\nameref*{app:grouped}}
\\[4pt]

\hyperref[sec:unbalanced-structure]
  {\ref*{sec:unbalanced-structure}:}
&
\hyperref[sec:unbalanced-structure]
  {\nameref*{sec:unbalanced-structure}}
\\[4pt]

\hyperref[sec:LCC]{\ref*{sec:LCC}:}
&
\hyperref[sec:LCC]{\nameref*{sec:LCC}}
\\[4pt]

\hyperref[unbalanced-example]{\ref*{unbalanced-example}:}
&
\hyperref[unbalanced-example]{\nameref*{unbalanced-example}}
\\[4pt]

\hyperref[sec:unconstrained]{\ref*{sec:unconstrained}:}
&
\hyperref[sec:unconstrained]{\nameref*{sec:unconstrained}}
\\[4pt]

\hyperref[app:empirical-details]
  {\ref*{app:empirical-details}:}
&
\hyperref[app:empirical-details]
  {\nameref*{app:empirical-details}}
\\

\end{tabularx}

\clearpage

\section{Proofs and Additional Theoretical Results}
\label{sec:ec-proofs}

\subsection{Proof of Lemma~\ref{indep-set}}
\label{app:proof-indep-set}
By standard results on linear systems (see, e.g.,~\cite{strang2022introduction}),
the solution set of $S'\mathbf{c}_i=\mathbf{e}_i$ is an affine space of the form
\[
\mathbf{c}_i=\mathbf{v}_{0}^{(i)}+\mathbf{v}, \qquad 
\mathbf{v}\in\mathcal{N}(S'),
\]
where $\mathbf{v}_{0}^{(i)}$ is a particular solution corresponding to the
direct forecast of bottom-level series $i$.
Since $S$ has full column rank $n_b$, it follows that
$\dim\mathcal{N}(S')=n-n_b=n_a$, and hence $\mathcal{C}_i$ is an affine subspace of
dimension $n_a$.

To characterize $\mathcal{N}(S')$, note that $\mathbf{v}\in\mathcal{N}(S^{\prime})$ if and only if $\mathbf{v}^{\prime}S=\mathbf{0}$, which is equivalent to requiring $\mathbf{v}^{\prime}\mathbf{y}_{t+h}=0$ for every coherent vector $\mathbf{y}_{t+h}=S\mathbf{b}_{t+h}$ (for all $\mathbf{b}_{t+h} \in\mathbb{R}^{n_b}$).  
Thus any null-space direction must encode an aggregation constraint.  
For each aggregated series $j$, define $\mathbf{v}^{(j)}$ as in~\eqref{v_jl_compute}:  
one unit on the aggregated series and $-1$ on each of its bottom-level descendants, and
zeros elsewhere.  Direct verification shows $S'\mathbf{v}^{(j)}=0$, so $\mathbf{v}^{(j)}\in \mathcal{N}(S')$.

Moreover, the vectors $\left\{\mathbf{v}^{(j)}\right\}$ are linearly independent because the $j$-th vector has a nonzero entry at coordinate $j$ (the position of aggregated series $j$ ) while all other vectors in the set have a zero at that coordinate; hence, if $\sum_{j}\alpha_j\mathbf{v}^{(j)}=\mathbf{0}$, inspecting aggregate coordinate $j$ forces $\alpha_j=0$ for every $j$, which establishes linear independence. As their number equals
$\dim\mathcal{N}(S')$, they form a basis of the null space.

Therefore any $\mathbf{v}\in\mathcal{N}(S')$ admits the representation
$\mathbf{v}=\sum_{j=1}^{n_a}\alpha_j\mathbf{v}^{(j)}$, which yields~\eqref{coeffs}.
Since $S^{\prime}\mathbf{v}_0^{(i)}=\mathbf{e}_i\neq\mathbf{0}$ while
$S^{\prime}\mathbf{v}=\mathbf{0}$ on $\mathcal{N}(S^{\prime})$, we have
$\mathbf{v}_0^{(i)}\notin\mathcal{N}(S^{\prime})$; hence the linear span of
$\mathcal{C}_i$ is
$\operatorname{span}\{\mathbf{v}_0^{(i)}\}\oplus\mathcal{N}(S^{\prime})$, of
dimension $n_a+1$. As candidate forecasts are linearly independent if and only if
their coefficient vectors are, no candidate-generation matrix with admissible
rows can have rank exceeding $n_a+1$, a bound attained by the construction in
\Cref{maximality}.

\subsection{Proof of Proposition~\ref{maximality}}
\label{app:proof-maximality}
By Lemma~\ref{indep-set}, the null space $\mathcal{N}(S^{\prime})$ 
has dimension $n_a$ with basis $\{\mathbf{v}^{(j)}\}_{j=1}^{n_a}
\subset\mathbb{R}^n$ indexed by the aggregated series 
$\mathcal{A} = \{\mathrm{Total}, a_1, \ldots, a_{n_a-1}\}$. 
Algorithm~\ref{algorithm-1} constructs the $n_a$ non-trivial rows 
of $C^{(i)}$ by adding null-space vectors to the particular solution 
$\mathbf{v}_0^{(i)}\in\mathbb{R}^n$: the 
ancestor-based rows contribute 
$\{\mathbf{v}^{(a)}\}_{a\in\mathrm{Anc}(i)}$, and the 
collateral-based rows contribute 
$\{\mathbf{v}^{(\mathrm{Total})}-\mathbf{v}^{(k)}\}_{k\in\mathrm{Col}(i)}$.

We show these $n_a$ null-space contributions form a basis of
$\mathcal{N}(S^{\prime})$. Collect them as
\[
\mathcal{V}:=\{\mathbf{v}^{(a)}\}_{a\in\mathrm{Anc}(i)}\cup
\{\mathbf{v}^{(\mathrm{Total})}-\mathbf{v}^{(k)}\}_{k\in\mathrm{Col}(i)},
\]
each element of which lies in $\mathcal{N}(S^{\prime})$. Under the convention
$\mathrm{Total}\in\mathrm{Anc}(i)$, every ancestor vector $\mathbf{v}^{(a)}$, in
particular $\mathbf{v}^{(\mathrm{Total})}$, already belongs to $\mathcal{V}$, and
for each collateral $k\in\mathrm{Col}(i)$,
\[
\mathbf{v}^{(k)}=\mathbf{v}^{(\mathrm{Total})}
-\bigl(\mathbf{v}^{(\mathrm{Total})}-\mathbf{v}^{(k)}\bigr)\in\operatorname{span}(\mathcal{V})
.
\]
Hence $\operatorname{span}(\mathcal{V})$ contains the standard basis
$\{\mathbf{v}^{(j)}\}_{j=1}^{n_a}$, so $\operatorname{span}(\mathcal{V})=\mathcal{N}(S^{\prime
})$;
since $|\mathcal{V}|=n_a=\dim\mathcal{N}(S^{\prime})$, $\mathcal{V}$ is a basis of
$\mathcal{N}(S^{\prime})$.

Finally, we show that the $n_a+1$ rows of $C^{(i)}$ are linearly independent. They are the direct row $(\mathbf{v}_0^{(i)})^{\prime}$, the ancestor-based rows $(\mathbf{v}_0^{(i)}+\mathbf{v}^{(a)})^{\prime}$ for $a\in\mathrm{Anc}(i)$, and the collateral-based rows $(\mathbf{v}_0^{(i)}+\mathbf{v}^{(\mathrm{Total})}-\mathbf{v}^{(k)})^{\prime}$ for $k\in\mathrm{Col}(i)$. Subtracting the direct row from each non-direct row is an invertible row operation, after which the rows become $(\mathbf{v}_0^{(i)})^{\prime}$ together with the $n_a$ null-space contributions $\{\mathbf{v}^{(a)}\}_{a\in\mathrm{Anc}(i)}$ and $\{\mathbf{v}^{(\mathrm{Total})}-\mathbf{v}^{(k)}\}_{k\in\mathrm{Col}(i)}$. Since $S^{\prime}\mathbf{v}_0^{(i)}=\mathbf{e}_i\neq\mathbf{0}$, we have $\mathbf{v}_0^{(i)}\notin\mathcal{N}(S^{\prime})$, while those contributions form a basis of $\mathcal{N}(S^{\prime})$; hence the transformed rows are linearly independent. As row operations preserve rank, the original rows of $C^{(i)}$ are linearly independent, so $\operatorname{rank}(C^{(i)})=n_a+1$.

Moreover, the row space of $C^{(i)}$ equals $\operatorname{span}\{\mathbf{v}_0^{(i)}\}+\mathcal{N}(S^{\prime})$, which contains the admissible set $\mathcal{C}_i=\mathbf{v}_0^{(i)}+\mathcal{N}(S^{\prime})$. Hence every $\mathbf{c}\in\mathcal{C}_i$ satisfies $\mathbf{c}^{\prime}=\boldsymbol{\beta}^{\prime}C^{(i)}$ for some $\boldsymbol{\beta}\in\mathbb{R}^{n_a+1}$, so the corresponding candidate forecast $\mathbf{c}^{\prime}\hat{\mathbf{y}}_{t+h\mid t}=\boldsymbol{\beta}^{\prime}C^{(i)}\hat{\mathbf{y}}_{t+h\mid t}$ is a linear combination of the structured candidates in $\mathcal{F}_{t+h\mid t}^{\mathrm{str}}(i)$. This proves the maximality claim.

\subsection{Proof of Theorem~\ref{Equivalence}: Equivalence of Reconciliation and Combination}
\label{equiv}

\paragraph{Proof of part~(i) (Sufficiency).}
We first show that any $\Phi(\mathbf{w})$ of the form~\eqref{eq:phi},
with weights satisfying $\mathbf{1}_{n_a+1}^\prime \mathbf{w}_i = 1$
for each $i$, together with any candidate-generation matrix $C$
in~\eqref{C}, yields a reconciliation matrix
$P = \Phi(\mathbf{w})C$ satisfying $PS = I_{n_b}$.

By~\eqref{linear-equality}, every row $\mathbf{c}^\prime$ of $C^{(i)}$
satisfies $\mathbf{c}^\prime S = \mathbf{e}_i^\prime$, so
$C^{(i)}S = \mathbf{1}_{n_a+1}\mathbf{e}_i^\prime$ for each $i$,
and therefore $CS = I_{n_b} \otimes \mathbf{1}_{n_a+1}$
by the block structure of $C$ in~\eqref{C}.
Due to the block-diagonal structure of $\Phi(\mathbf{w})$
in~\eqref{eq:phi} and the summing-to-one constraint on the combination
weights, it follows that $\Phi(\mathbf{w})CS = I_{n_b}$, so
$P := \Phi(\mathbf{w})C$ satisfies the unbiasedness condition.

\paragraph{Proof of part~(ii) (Necessity).}
Conversely, suppose $P\in\mathbb{R}^{n_b\times n}$ satisfies $PS = I_{n_b}$,
and let $\mathbf{p}_i^\prime$ denote its $i$-th row. Writing the block
structure of $C$  in~\eqref{C} and $\Phi(\mathbf{w})$ in~\eqref{eq:phi}, the identity $P=\Phi(\mathbf{w})C$ holds row-by-row if
and only if
\begin{equation}
\label{eq:row-system}
\mathbf{w}_i^\prime C^{(i)} = \mathbf{p}_i^\prime,
\qquad i=1,\ldots,n_b,
\end{equation}
so constructing $\Phi(\mathbf{w})$ reduces to solving these $n_b$ decoupled
 linear systems in $\mathbf{w}_i\in\mathbb{R}^{n_a+1}$. 

We first verify that the system has a solution. From $PS=I_{n_b}$, 
$\mathbf{p}_i$ satisfies $S^{\prime}\mathbf{p}_i=\mathbf{e}_i$, so 
$\mathbf{p}_i\in\mathbf{v}_0^{(i)}+\mathcal{N}(S^{\prime})$. The rows of 
$C^{(i)}$ include $(\mathbf{v}_0^{(i)})^{\prime}$ (the particular-solution row) 
and span all of $\mathcal{N}(S^{\prime})$ as additive directions (by 
Proposition~\ref{maximality}'s null-space basis structure), so 
$\mathbf{v}_0^{(i)}+\mathcal{N}(S^{\prime})\subseteq \operatorname{col}(C^{(i)\prime})$, 
giving $\mathbf{p}_i\in \operatorname{col}(C^{(i)\prime})$. Hence~\eqref{eq:row-system} 
admits a solution; uniqueness follows from $C^{(i)\prime}$ having full 
column rank $n_a+1$ (Proposition~\ref{maximality}). Since 
$C^{(i)}C^{(i)\prime}$ is invertible by full-row-rank of $C^{(i)}$, 
left-multiplying $C^{(i)\prime}\mathbf{w}_i=\mathbf{p}_i$ by $C^{(i)}$ 
gives the closed form
\begin{equation}
\label{eq:w_i-block}
\mathbf{w}_i = \bigl(C^{(i)}C^{(i)\prime}\bigr)^{-1}\,C^{(i)}\mathbf{p}_i.
\end{equation}

It remains to verify $\mathbf{1}_{n_a+1}^\prime\mathbf{w}_i = 1$.
By~\eqref{eq:row-system}, $\mathbf{w}_i^\prime C^{(i)} = \mathbf{p}_i^\prime$;
right-multiplying by $S$ and using $\mathbf{p}_i^\prime S = \mathbf{e}_i^\prime$
(from $PS = I_{n_b}$) gives
\[
\mathbf{w}_i^\prime C^{(i)} S = \mathbf{p}_i^\prime S = \mathbf{e}_i^\prime.
\]
Applying~\eqref{linear-equality} to each row of $C^{(i)}$ gives $C^{(i)} S = \mathbf{1}_{n_a+1}\mathbf{e}_i^\prime$,
so the left-hand side equals
$(\mathbf{1}_{n_a+1}^\prime\mathbf{w}_i)\,\mathbf{e}_i^\prime$.
Since $\mathbf{e}_i^\prime \neq \mathbf{0}$, we conclude
$\mathbf{1}_{n_a+1}^\prime\mathbf{w}_i = 1$.
Stacking all $\mathbf{w}_i^\prime$ in~\eqref{eq:w_i-block} into a
block-diagonal matrix produces a $\Phi(\mathbf{w})$ of the
form~\eqref{eq:phi}, ensuring $P = \Phi(\mathbf{w})C$.

\subsection{Proof of Theorem~\ref{quadratic-theorem}: Quadratic Programming Formulation}
\label{quadratic-reform}
We show that the objective function in~\eqref{obj1} admits an equivalent
quadratic form in the combination weights.
From~\eqref{obj1} and~\eqref{C}, substituting
$\hat{\mathbf{y}}_{\mathrm{cand},t+h|t} = C\hat{\mathbf{y}}_{t+h|t}$,
the optimization problem can be written as
\begin{equation}
\label{obj2}
\min_{\mathbf{w}}\;
\mathbb{E}[
\|
\mathbf{y}_{t+h}
- S\,\Phi(\mathbf{w})\,\hat{\mathbf{y}}_{\mathrm{cand},t+h|t}\|_2^2
\;\big|\;\mathcal{I}_t].
\end{equation}
Recalling that $\mathbf{y}_{t+h} = S\mathbf{b}_{t+h}$
from~\eqref{coherence}, define the vector of combined $h$-step-ahead
forecast errors at the bottom level as
$
\tilde{\mathbf{e}}_{t+h|t}
:= \mathbf{b}_{t+h} - \Phi(\mathbf{w})\hat{\mathbf{y}}_{\mathrm{cand},t+h|t}
\in \mathbb{R}^{n_b}.
$
Since $\mathbf{y}_{t+h} - S\Phi(\mathbf{w})\hat{\mathbf{y}}_{\mathrm{cand},t+h|t}
= S\tilde{\mathbf{e}}_{t+h|t}$,
applying the identity $\|M \mathbf{x}\|_2^2 = \mathbf{x}^{\prime}M^{\prime}M\mathbf{x}$ with
$M = S$ and $\mathbf{x} = \tilde{\mathbf{e}}_{t+h|t}$ yields
\begin{equation}
\label{obj3}
\mathbb{E}[
\|S\tilde{\mathbf{e}}_{t+h|t}\|_2^2
\,|\,\mathcal{I}_t]
=
\mathbb{E}[
\sum_{i=1}^{n_b}\sum_{i^{\prime}=1}^{n_b}
z_{i,i^{\prime}}\,
\tilde{e}_{t+h|t}^{(i)}
\tilde{e}_{t+h|t}^{(i^{\prime})}
\,|\,\mathcal{I}_t].
\end{equation}
where $z_{i,i^{\prime}}$ denotes the $(i,i^{\prime})$-th entry of $S^{\prime}S$.

For each bottom-level series $i$, write the combined bottom-level error using the
candidate-error vector
$\hat{\boldsymbol{\xi}}_{t+h|t}^{(i)}
:= b_{t+h}^{(i)}\mathbf{1}_{n_a+1} - \hat{\mathbf{y}}_{\mathrm{cand},t+h|t}^{(i)}
\in \mathbb{R}^{n_a+1}$.
Because the combination weights sum to one,
$\mathbf{1}_{n_a+1}^{\prime}\mathbf{w}_i = 1$, we have
$b_{t+h}^{(i)} = \mathbf{w}_i^{\prime}\bigl(b_{t+h}^{(i)}\mathbf{1}_{n_a+1}\bigr)$,
so the combined error is a linear combination of the candidate errors:
\begin{equation}
\label{error}
\begin{aligned}
    \tilde{e}_{t+h|t}^{(i)}
&= b_{t+h}^{(i)} - \mathbf{w}_i^{\prime}\hat{\mathbf{y}}_{\mathrm{cand},t+h|t}^{(i)}
\\&= \mathbf{w}_i^{\prime}\bigl(b_{t+h}^{(i)}\mathbf{1}_{n_a+1}
   - \hat{\mathbf{y}}_{\mathrm{cand},t+h|t}^{(i)}\bigr)
= \mathbf{w}_i^{\prime}\hat{\boldsymbol{\xi}}_{t+h|t}^{(i)}.
\end{aligned}
\end{equation}
By admissibility, $C^{(i)}S = \mathbf{1}_{n_a+1}\mathbf{e}_i^{\prime}$, so
$C^{(i)}\mathbf{y}_{t+h} = C^{(i)}S\mathbf{b}_{t+h} = b_{t+h}^{(i)}\mathbf{1}_{n_a+1}$;
together with $\hat{\mathbf{y}}_{\mathrm{cand},t+h|t}^{(i)} = C^{(i)}\hat{\mathbf{y}}_{t+h|t}$
from~\eqref{eq:candidates}, the candidate error in~\eqref{error} satisfies
\[
\hat{\boldsymbol{\xi}}_{t+h|t}^{(i)}
= C^{(i)}\bigl(\mathbf{y}_{t+h} - \hat{\mathbf{y}}_{t+h|t}\bigr)
= C^{(i)}\hat{\mathbf{e}}_{t+h|t}.
\]
Since the base forecast errors are conditionally unbiased,
$\mathbb{E}[\hat{\mathbf{e}}_{t+h|t}\mid\mathcal{I}_t]=\mathbf{0}$, it follows that
$\mathbb{E}[\hat{\boldsymbol{\xi}}_{t+h|t}^{(i)}\mid\mathcal{I}_t]=\mathbf{0}$.
Substituting~\eqref{error} into~\eqref{obj3} and taking expectations yields
\begin{equation}
\label{obj4}
\min_{\mathbf{w}_1,\ldots,\mathbf{w}_{n_b}}
\sum_{i=1}^{n_b}\sum_{i^{\prime}=1}^{n_b}
\mathbf{w}_i^{\prime}
[Q_h]_{i,i^{\prime}}
\mathbf{w}_{i^{\prime}},
\end{equation}
where, using $\hat{\boldsymbol{\xi}}_{t+h|t}^{(i)} = C^{(i)}\hat{\mathbf{e}}_{t+h|t}$ and
$\mathrm{Var}(\hat{\mathbf{e}}_{t+h|t}\mid\mathcal{I}_t)=\Sigma_h$,
\begin{align*}
[Q_h]_{i,i^{\prime}}
=& z_{i,i^{\prime}}\,\operatorname{Cov}\!\bigl(\hat{\boldsymbol{\xi}}_{t+h|t}^{(i)},\,\hat{\boldsymbol{\xi}}_{t+h|t}^{(i^{\prime})}\mid\mathcal{I}_t\bigr)
\\=& z_{i,i^{\prime}}\,C^{(i)}\Sigma_h C^{(i^{\prime})\prime}
\in \mathbb{R}^{(n_a+1)\times(n_a+1)},
\end{align*}
which coincides with the block form in Theorem~\ref{quadratic-theorem}.
The objective~\eqref{obj4} defines a convex quadratic form in the
combination weights, which leads directly to the quadratic program
in~\eqref{quadratic-formulation}.
\subsection{Proof of Proposition~\ref{closed_solution}: Closed-Form Combination Weights}
\label{closed-solution}
Consider the constrained quadratic program
\[
\min_{\mathbf{w}}\ \mathbf{w}^{\prime} Q_h \mathbf{w}
\quad\text{s.t.}\quad \Pi\mathbf{w}=\mathbf{1}_{n_b}.
\]

\emph{Positive definiteness of $Q_h$ and $\Pi Q_h^{-1}\Pi'$.}
Let $\bar{C}=\mathrm{blkdiag}(C^{(1)},\dots,C^{(n_b)})$. The block 
definition of $Q_h$ in Theorem~\ref{quadratic-theorem}, combined with 
the Kronecker product structure of $S'S\otimes\Sigma_h$ and block-wise 
matrix multiplication, yields
\begin{equation}
Q_h=\bar{C}\,(S'S\otimes\Sigma_h)\,\bar{C}'.
\label{eq:Qh-sandwich}
\end{equation}
The standing error conditions give $\Sigma_h\succ 0$, and $S$ has full 
column rank, so $S'S\succ 0$; hence $S'S\otimes\Sigma_h\succ 0$ 
(the Kronecker product of two positive-definite matrices is positive 
definite). Proposition~\ref{maximality} gives that each $C^{(i)}$ has 
full row rank, and since the rank of a block-diagonal matrix equals 
the sum of its block ranks, $\mathrm{rank}(\bar{C})=n_b(n_a+1)$, so 
$\bar{C}$ has full row rank. Applying the sandwich rule for 
positive-definite matrices\footnote{If $M\succ 0$ and $N$ has full row 
rank, then $NMN'\succ 0$.} to~\eqref{eq:Qh-sandwich} yields 
$Q_h\succ 0$. The constraint matrix $\Pi$ has full row rank by 
construction, so another sandwich application gives 
$\Pi Q_h^{-1}\Pi'\succ 0$. In particular, $Q_h^{-1}$ and $(\Pi Q_h^{-1}\Pi')^{-1}$ 
both exist.

\emph{Closed-form solution.} 
With $Q_h\succ 0$, the objective is strictly convex, so the quadratic 
program admits a unique minimizer characterized by its first-order 
conditions. The Lagrangian is
\begin{equation}
\label{lagr-m}
\mathcal{L}(\mathbf{w}, \boldsymbol{\eta}) 
= \mathbf{w}^{\prime}Q_h\mathbf{w} 
- \boldsymbol{\eta}^{\prime}(\Pi \mathbf{w}-\mathbf{1}_{n_b}),
\end{equation}
with $\boldsymbol{\eta}$ the Lagrange multiplier vector. The first-order 
conditions are
\begin{equation}
\label{ktt-conds}
2Q_h\mathbf{w} = {\Pi}^{\prime}\boldsymbol{\eta}, 
\qquad 
\Pi \mathbf{w} =\mathbf{1}_{n_b}.
\end{equation}
From the first equation, $\mathbf{w}=\tfrac12 Q_h^{-1}\Pi'\boldsymbol{\eta}$. 
Substituting into the second and solving for $\boldsymbol{\eta}$,
\[
\Pi Q_h^{-1}{\Pi}^{\prime}\boldsymbol{\eta} = 2\mathbf{1}_{n_b}
\quad\Longrightarrow\quad
\boldsymbol{\eta} = 2\,(\Pi Q_h^{-1}{\Pi}^{\prime})^{-1}\mathbf{1}_{n_b}.
\]
Substituting back yields the optimal closed-form solution
\begin{equation}
\label{w}
\mathbf{w}^\ast = Q_h^{-1}{\Pi}^{\prime}(\Pi Q_h^{-1}{\Pi}^{\prime})^{-1}\mathbf{1}_{n_b}.
\end{equation}

\subsection{Proof of Proposition~\ref{prop:equivalence}, Part I:
Bates--Granger Formula under Block-Diagonal $Q_h^{\mathrm{bd}}$}
\label{app:proof-sep}

As established in the main text, under the block-diagonal specification
$Q_h^{\mathrm{bd}}$, problem~\eqref{quadratic-formulation-bd}
decomposes into the $n_b$ independent subproblems
in~\eqref{eq:subproblem}. For each $i=1,\ldots,n_b$, the KKT conditions
for the $i$-th subproblem are
\[
2[Q_h]_{i,i}\mathbf{w}_i
=
\eta_i\mathbf{1}_{n_a+1},
\qquad
\mathbf{1}_{n_a+1}^{\prime}\mathbf{w}_i=1,
\]
where $\eta_i$ is the Lagrange multiplier. Since $[Q_h]_{i,i}$ is
positive definite, these conditions are necessary and sufficient and
imply
\begin{equation}
\label{eq:sep-block-solution}
\mathbf{w}_i^{\mathrm{sep}}
=
\frac{
[Q_h]_{i,i}^{-1}\mathbf{1}_{n_a+1}
}{
\mathbf{1}_{n_a+1}^{\prime}
[Q_h]_{i,i}^{-1}
\mathbf{1}_{n_a+1}
}.
\end{equation}

Using
$[Q_h]_{i,i}
=
z_{i,i}C^{(i)}\Sigma_h C^{(i)\prime}$,
with $z_{i,i}>0$, the scalar factor $z_{i,i}^{-1}$ cancels from the
numerator and denominator of~\eqref{eq:sep-block-solution}. Therefore,
\[
\mathbf{w}_i^{\mathrm{sep}}
=
\frac{
\bigl(C^{(i)}\Sigma_h C^{(i)\prime}\bigr)^{-1}
\mathbf{1}_{n_a+1}
}{
\mathbf{1}_{n_a+1}^{\prime}
\bigl(C^{(i)}\Sigma_h C^{(i)\prime}\bigr)^{-1}
\mathbf{1}_{n_a+1}
},
\]
which is exactly the Bates--Granger formula in~\eqref{eq:BG}.

\subsection{Proof of Proposition~\ref{prop:equivalence}, Part II: Equivalence of Joint and Separate Optima via KKT}
\label{app:proof}

The proof proceeds via two steps: we first establish a key lemma on the 
proportionality of cross-covariance terms, and then use it to verify that 
the block-diagonal solution $\mathbf{w}^{\mathrm{sep}}$ satisfies the KKT 
conditions of the full problem.

\begin{mylemma}[Proportionality of cross-covariance terms]
\label{lem:proportionality}
For any fixed $i' \in \{1, \ldots, n_b\}$ and any $i \in \{1, \ldots, n_b\}$,
\[
C^{(i)}\Sigma_h C^{(i^{\prime})\prime}
\bigl(C^{(i^{\prime})}\Sigma_h C^{(i^{\prime})\prime}\bigr)^{-1}
\mathbf{1}_{n_a+1}
= \frac{k_i^{(i')}}{k_{i^{\prime}}^{(i')}}\,\mathbf{1}_{n_a+1},
\]
where $k_1^{(i')}, \ldots, k_{n_b}^{(i')}$ are scalars depending on
both $S$ and $\Sigma_h$, with $k_{i^{\prime}}^{(i')} \neq 0$. In
particular, the left-hand side is always proportional to
$\mathbf{1}_{n_a+1}$.
\end{mylemma}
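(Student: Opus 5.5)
The plan is to show that the vector $\mathbf{u}:=\Sigma_h C^{(i')\prime}\bigl(C^{(i')}\Sigma_h C^{(i')\prime}\bigr)^{-1}\mathbf{1}_{n_a+1}\in\mathbb{R}^n$ lies in the column space of $S$. Once that is known, the claim follows from the admissibility identity $C^{(i)}S=\mathbf{1}_{n_a+1}\mathbf{e}_i^{\prime}$ established in the proof of Theorem~\ref{Equivalence}. The inverse exists because $\Sigma_h\succ0$ and $C^{(i')}$ has full row rank by Proposition~\ref{maximality}. Note that the left-hand side of the lemma is exactly $C^{(i)}\mathbf{u}$.

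\emph{Step 1: characterize $\mathbf{u}$ by linear conditions.} By construction,
\[
C^{(i')}\mathbf{u}=C^{(i')}\Sigma_h C^{(i')\prime}\bigl(C^{(i')}\Sigma_h C^{(i')\prime}\bigr)^{-1}\mathbf{1}_{n_a+1}=\mathbf{1}_{n_a+1}.
\]
So every row $\mathbf{c}^{\prime}$ of $C^{(i')}$ satisfies $\mathbf{c}^{\prime}\mathbf{u}=1$. The first row is $\mathbf{v}_0^{(i')\prime}$, so $\mathbf{v}_0^{(i')\prime}\mathbf{u}=1$. Subtracting it from each other row shows that $\mathbf{v}^{\prime}\mathbf{u}=0$ for every $\mathbf{v}\in\mathcal{V}$. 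Here $\mathcal{V}$ is the set of null-space contributions from the proof of Proposition~\ref{maximality}, which forms a basis of $\mathcal{N}(S^{\prime})$. Hence $\mathbf{u}\perp\mathcal{N}(S^{\prime})$.

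\emph{Step 2: conclude $\mathbf{u}\in\operatorname{col}(S)$.} Since $\mathcal{N}(S^{\prime})^{\perp}=\operatorname{col}(S)$, there is a unique $\mathbf{g}\in\mathbb{R}^{n_b}$ with $\mathbf{u}=S\mathbf{g}$; uniqueness holds because $S$ has full column rank. Using $S^{\prime}\mathbf{v}_0^{(i')}=\mathbf{e}_{i'}$, the condition $\mathbf{v}_0^{(i')\prime}\mathbf{u}=1$ becomes $g_{i'}=\mathbf{e}_{i'}^{\prime}\mathbf{g}=1$.

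\emph{Step 3: apply $C^{(i)}$.} Using $C^{(i)}S=\mathbf{1}_{n_a+1}\mathbf{e}_i^{\prime}$,
\[
C^{(i)}\mathbf{u}=C^{(i)}S\mathbf{g}=(\mathbf{e}_i^{\prime}\mathbf{g})\,\mathbf{1}_{n_a+1}=g_i\,\mathbf{1}_{n_a+1}.
\]
Set $k_j^{(i')}:=g_j$ for $j=1,\ldots,n_b$. Then $k_{i'}^{(i')}=1\neq0$ and the left-hand side equals $\bigl(k_i^{(i')}/k_{i'}^{(i')}\bigr)\mathbf{1}_{n_a+1}$, as claimed. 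The scalars depend on $S$ and $\Sigma_h$ through $\mathbf{g}$.

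If an explicit expression is wanted, use that $\mathbf{x}=\Sigma_h^{-1}\mathbf{u}$ lies in the row space of $C^{(i')}$, namely $\operatorname{span}\{\mathbf{v}_0^{(i')}\}\oplus\mathcal{N}(S^{\prime})$. Writing $\mathbf{x}=\beta\mathbf{v}_0^{(i')}+\mathbf{n}$ with $\mathbf{n}\in\mathcal{N}(S^{\prime})$ gives $S^{\prime}\Sigma_h^{-1}S\mathbf{g}=S^{\prime}\mathbf{x}=\beta\mathbf{e}_{i'}$. Hence $\mathbf{g}\propto(S^{\prime}\Sigma_h^{-1}S)^{-1}\mathbf{e}_{i'}$, normalized so that its $i'$-th entry is one. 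This step is optional, but it foreshadows the link to $P_{\mathrm{mint}}^{\ast}$ in Proposition~\ref{equ-mint}.

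The main obstacle is Step~1. The point is to recognize that $\mathbf{u}$, rather than $\mathbf{x}$, is the right object, and that $C^{(i')}\mathbf{u}=\mathbf{1}_{n_a+1}$ forces orthogonality to $\mathcal{N}(S^{\prime})$. This in turn relies on the rows of $C^{(i')}$ differing from the direct row by a spanning set of $\mathcal{N}(S^{\prime})$. The remaining steps are direct algebra.
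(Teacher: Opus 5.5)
Your proof is correct, and it shares the paper's endgame: both arguments place the vector that $C^{(i)}$ acts on inside the column space of $S$, then read off the result from $C^{(i)}S=\mathbf{1}_{n_a+1}\mathbf{e}_i^{\prime}$. Your $\mathbf{u}$ equals $\Sigma_h^{1/2}\boldsymbol{\nu}_{i'}$ in the paper's notation, so $\mathbf{u}\in\operatorname{col}(S)$ is the same fact as $\boldsymbol{\nu}_{i'}\in\operatorname{col}(\tilde{S})$.

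Where you differ is in how that membership is established. The paper proceeds in four steps:
\begin{itemize}
\item it whitens with $\Sigma_h^{1/2}$;
\item it uses a dimension count to find a nonzero $\boldsymbol{\zeta}\in\operatorname{col}(\tilde{C}^{(i')\prime})\cap\operatorname{col}(\tilde{S})$;
\item it proves $k_{i'}^{(i')}\neq0$ by contradiction;
\item it invokes uniqueness of the solution of $\tilde{C}^{(i')}\boldsymbol{\nu}=\mathbf{1}_{n_a+1}$ within $\operatorname{col}(\tilde{C}^{(i')\prime})$ to conclude $\boldsymbol{\nu}_{i'}=\boldsymbol{\zeta}/k_{i'}^{(i')}$.
\end{itemize}
You instead observe that the equations $C^{(i')}\mathbf{u}=\mathbf{1}_{n_a+1}$ are equivalent to $\mathbf{v}_0^{(i')\prime}\mathbf{u}=1$ together with $\mathbf{u}\perp\mathcal{N}(S^{\prime})$. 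Hence \emph{every} solution lies in $\mathcal{N}(S^{\prime})^{\perp}=\operatorname{col}(S)$ with $i'$-th coordinate equal to one. You need no square root, no intersection argument and no uniqueness step, and the normalization $k_{i'}^{(i')}=1$ comes for free.

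Your optional computation also gives something the paper's proof does not: explicit constants,
\[
k_i^{(i')}/k_{i'}^{(i')}=[(S^{\prime}\Sigma_h^{-1}S)^{-1}]_{i,i'}/[(S^{\prime}\Sigma_h^{-1}S)^{-1}]_{i',i'}.
\]
This makes the claimed dependence on $S$ and $\Sigma_h$ concrete and exposes the MinT matrix behind Proposition~\ref{equ-mint}. In the paper the $k$'s are defined only implicitly, through a choice of $\boldsymbol{\zeta}$ that is unique up to scale.

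In exchange, the paper's dimension count uses only the full row rank of $C^{(i')}$ and the identity $C^{(i')}S=\mathbf{1}_{n_a+1}\mathbf{e}_{i'}^{\prime}$, not the explicit ``direct row plus null-space basis'' layout from Algorithm~\ref{algorithm-1}. Your Step~1 can be rephrased to need only that much as well. Write any $\mathbf{n}\in\mathcal{N}(S^{\prime})$ as $\boldsymbol{\alpha}^{\prime}C^{(i')}$; then $\mathbf{n}^{\prime}S=\mathbf{0}$ forces $\boldsymbol{\alpha}^{\prime}\mathbf{1}_{n_a+1}=0$, and so $\mathbf{n}^{\prime}\mathbf{u}=0$.
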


\paragraph{Geometric meaning.}
The vector $(C^{(i')}\Sigma_h C^{(i')'})^{-1}\mathbf{1}_{n_a+1}$ is 
(up to normalization) the Bates--Granger optimal combination weight 
for series $i'$, i.e., $\mathbf{w}_{i'}^{\mathrm{sep}}$ 
in~\eqref{eq:BG}. Lemma~\ref{lem:proportionality} therefore states 
a specific geometric fact: the off-diagonal cross-covariance block 
$C^{(i)}\Sigma_h C^{(i')'}$ maps the Bates--Granger direction 
of series $i'$ to the $\mathbf{1}_{n_a+1}$ direction. This mapping 
is not arbitrary, as we show in the proof of 
Proposition~\ref{prop:equivalence}, it is precisely what allows 
the \emph{Joint} and \emph{Separate} optima to coincide: the off-diagonal terms 
in the \emph{Joint} first-order conditions, once contracted with the 
\emph{Separate} solution, align with the Lagrange multiplier direction and 
are absorbed without affecting the optimum.
\begin{myproof}
Fix $i' \in \{1, \ldots, n_b\}$ throughout; the scalars
$k_1^{(i')}, \ldots, k_{n_b}^{(i')}$ constructed below all depend on
this choice of $i'$.

\emph{Proof strategy: denote the target vector by 
$\boldsymbol{\nu}_{i'}$; we show that $\boldsymbol{\nu}_{i'}$ lies 
in $\operatorname{col}(\tilde{S})$ after reparameterization, so that 
applying $\tilde{C}^{(i)}$ reduces to extracting the $i$-th column 
of $\tilde{C}^{(i)}\tilde{S}$, which has a sparse structure by 
construction of $C^{(i)}$.}

\medskip\noindent\textit{Absorbing $\Sigma_h$ via reparameterization.}
Since $\Sigma_h \succ 0$, it admits a symmetric positive
definite square root $\Sigma_h^{1/2}$ satisfying
$\Sigma_h^{1/2}\Sigma_h^{1/2} = \Sigma_h$, with inverse $\Sigma_h^{-1/2}$.
Define
$\tilde{C}^{(i)} := C^{(i)}\Sigma_h^{1/2}$ and
$\tilde{S} := \Sigma_h^{-1/2}S$. Then
$C^{(i)}\Sigma_h C^{(i^{\prime})\prime}
= \tilde{C}^{(i)}\tilde{C}^{(i^{\prime})\prime}$
and
$C^{(i^{\prime})}\Sigma_h C^{(i^{\prime})\prime}
= \tilde{C}^{(i^{\prime})}\tilde{C}^{(i^{\prime})\prime}$,
so the left-hand side of the lemma equals
$\tilde{C}^{(i)}\tilde{C}^{(i^{\prime})\prime}
\bigl(\tilde{C}^{(i^{\prime})}\tilde{C}^{(i^{\prime})\prime}\bigr)^{-1}
\mathbf{1}_{n_a+1}$.
The structural property
$\tilde{C}^{(i)}\tilde{S}
= C^{(i)}\Sigma_h^{1/2}\Sigma_h^{-1/2}S
= C^{(i)}S
= \mathbf{1}_{n_a+1}\mathbf{e}_i^{\prime}$
is preserved under the reparameterization
(using~\eqref{linear-equality} applied row-wise to $C^{(i)}$ and
stacked). It therefore suffices to prove the lemma for $\tilde{C}^{(i)}$
and $\tilde{S}$.

\medskip\noindent\textit{Step 1: Reformulation.}
\emph{Strategy: the target expression equals 
$\tilde{C}^{(i)}\boldsymbol{\nu}_{i'}$ for a specific vector 
$\boldsymbol{\nu}_{i'}$ defined below. We characterize 
$\boldsymbol{\nu}_{i'}$ as the unique vector in 
$\operatorname{col}(\tilde{C}^{(i')\prime})$ that $\tilde{C}^{(i')}$ maps 
to $\mathbf{1}_{n_a+1}$.}

Define
$
\boldsymbol{\nu}_{i^{\prime}} :=
\tilde{C}^{(i^{\prime})\prime}
\bigl(\tilde{C}^{(i^{\prime})}\tilde{C}^{(i^{\prime})\prime}\bigr)^{-1}
\mathbf{1}_{n_a+1}.
$
Since $\boldsymbol{\nu}_{i^{\prime}}$ is a linear combination of columns of
$\tilde{C}^{(i^{\prime})\prime}$, it lies in
$\operatorname{col}(\tilde{C}^{(i^{\prime})\prime})$.
Premultiplying by $\tilde{C}^{(i^{\prime})}$ gives
$
\tilde{C}^{(i^{\prime})}\boldsymbol{\nu}_{i^{\prime}}
= \tilde{C}^{(i^{\prime})}\tilde{C}^{(i^{\prime})\prime}
  \bigl(\tilde{C}^{(i^{\prime})}\tilde{C}^{(i^{\prime})\prime}\bigr)^{-1}
  \mathbf{1}_{n_a+1}
= \mathbf{1}_{n_a+1}.
$
Since $\tilde{C}^{(i^{\prime})}\tilde{C}^{(i^{\prime})\prime}$ is invertible
by the full row rank of $\tilde{C}^{(i^{\prime})}$,
$\boldsymbol{\nu}_{i^{\prime}}$ is the unique vector in
$\operatorname{col}(\tilde{C}^{(i^{\prime})\prime})$ satisfying
$\tilde{C}^{(i^{\prime})}\boldsymbol{\nu}_{i^{\prime}} = \mathbf{1}_{n_a+1}$.
The target expression after reparameterization equals
$\tilde{C}^{(i)}\boldsymbol{\nu}_{i^{\prime}}$, so it suffices
to show
$\tilde{C}^{(i)}\boldsymbol{\nu}_{i^{\prime}}
= (k_i^{(i')}/k_{i^{\prime}}^{(i')})\mathbf{1}_{n_a+1}$.

\medskip\noindent\textit{Step 2: Existence of a nonzero vector in
$\operatorname{col}(\tilde{C}^{(i^{\prime})\prime}) \cap
\operatorname{col}(\tilde{S})$.}
\emph{Strategy: we seek a vector that is simultaneously in 
$\operatorname{col}(\tilde{C}^{(i')\prime})$ (where $\boldsymbol{\nu}_{i'}$ 
lives) and in $\operatorname{col}(\tilde{S})$ (where the sparse structure 
$\tilde{C}^{(i)}\tilde{\mathbf{s}}_{i''} = \mathbf{0}$ for $i'' \neq i$ 
can be exploited). Dimension counting establishes a non-trivial 
intersection.}

The subspaces $\operatorname{col}(\tilde{C}^{(i^{\prime})\prime})$ and
$\operatorname{col}(\tilde{S})$ of $\mathbb{R}^n$ have dimensions $n_a+1$
and $n_b$, respectively, with $n = n_a + n_b$. Their dimensions sum
to $n+1 > n$, so
$
\dim\bigl(\operatorname{col}(\tilde{C}^{(i^{\prime})\prime})
\cap \operatorname{col}(\tilde{S})\bigr)
\geq (n_a+1) + n_b - n = 1,
$
and there exists a nonzero
$\boldsymbol{\zeta} \in \operatorname{col}(\tilde{C}^{(i^{\prime})\prime})
\cap \operatorname{col}(\tilde{S})$. Since
$\boldsymbol{\zeta} \in \operatorname{col}(\tilde{S})$, write
\begin{equation}
\label{eq:zeta-decomp}
\boldsymbol{\zeta}
= \sum_{i^{\prime \prime}=1}^{n_b} k_{i^{\prime \prime}}^{(i')} \tilde{\mathbf{s}}_{i^{\prime \prime}},
\end{equation}
where $\tilde{\mathbf{s}}_{i^{\prime \prime}}$ denotes the ${i^{\prime \prime}}$-th column of
$\tilde{S}$ and $k_1^{(i')}, \ldots, k_{n_b}^{(i')}$ are real
scalars (depending on $\boldsymbol{\zeta}$, hence on $i'$).

\medskip\noindent\textit{Step 3: Showing
$\boldsymbol{\nu}_{i^{\prime}}
= \boldsymbol{\zeta}/k_{i^{\prime}}^{(i')}
\in \operatorname{col}(\tilde{S})$.}
\emph{Strategy: we compute $\tilde{C}^{(i')}\boldsymbol{\zeta}$ 
explicitly; the sparse structure of $\tilde{C}^{(i')}\tilde{S}$ 
picks out only the $i'$-th term, giving $k_{i'}^{(i')}\mathbf{1}_{n_a+1}$. 
Rescaling $\boldsymbol{\zeta}$ by $k_{i'}^{(i')}$ yields 
$\tilde{C}^{(i')}(\boldsymbol{\zeta}/k_{i'}^{(i')}) = \mathbf{1}_{n_a+1}$, 
and uniqueness from Step 1 forces 
$\boldsymbol{\nu}_{i'} = \boldsymbol{\zeta}/k_{i'}^{(i')}$---thereby 
placing $\boldsymbol{\nu}_{i'}$ in $\operatorname{col}(\tilde{S})$.}

From~\eqref{linear-equality},
$\tilde{C}^{(i^{\prime})}\tilde{S}
= \mathbf{1}_{n_a+1}\mathbf{e}_{i^{\prime}}^{\prime}$, so
$\tilde{C}^{(i^{\prime})}\tilde{\mathbf{s}}_{i^{\prime}}
= \mathbf{1}_{n_a+1}$ and
$\tilde{C}^{(i^{\prime})}\tilde{\mathbf{s}}_{i^{\prime \prime}} = \mathbf{0}_{n_a+1}$
for all ${i^{\prime \prime}} \neq i^{\prime}$. Applying $\tilde{C}^{(i^{\prime})}$
to~\eqref{eq:zeta-decomp} gives
$
\tilde{C}^{(i^{\prime})}\boldsymbol{\zeta}
= \sum_{{i^{\prime \prime}}=1}^{n_b} k_{i^{\prime \prime}}^{(i')}\,
  \tilde{C}^{(i^{\prime})}\tilde{\mathbf{s}}_{i^{\prime \prime}}
= k_{i^{\prime}}^{(i')}\,\mathbf{1}_{n_a+1}.
$
We first show $k_{i^{\prime}}^{(i')} \neq 0$. If
$k_{i^{\prime}}^{(i')} = 0$, the display above gives
$\tilde{C}^{(i^{\prime})}\boldsymbol{\zeta} = \mathbf{0}$. Writing
$\boldsymbol{\zeta} = \tilde{C}^{(i^{\prime})\prime}\mathbf{x}$ (since
$\boldsymbol{\zeta} \in \operatorname{col}(\tilde{C}^{(i^{\prime})\prime})$),
we obtain
$\tilde{C}^{(i^{\prime})}\tilde{C}^{(i^{\prime})\prime}\mathbf{x}
= \mathbf{0}$, and invertibility of
$\tilde{C}^{(i^{\prime})}\tilde{C}^{(i^{\prime})\prime}$ forces
$\mathbf{x} = \mathbf{0}$, hence
$\boldsymbol{\zeta} = \mathbf{0}$, contradicting
$\boldsymbol{\zeta} \neq \mathbf{0}$. Thus
$k_{i^{\prime}}^{(i')} \neq 0$, and
$\boldsymbol{\zeta}/k_{i^{\prime}}^{(i')}$ is well-defined.

It remains to show
$\boldsymbol{\nu}_{i^{\prime}}
= \boldsymbol{\zeta}/k_{i^{\prime}}^{(i')}$. Both vectors lie in
$\operatorname{col}(\tilde{C}^{(i^{\prime})\prime})$
($\boldsymbol{\nu}_{i^{\prime}}$ by Step~1 and
$\boldsymbol{\zeta}/k_{i^{\prime}}^{(i')}$ by construction), and
$
\tilde{C}^{(i^{\prime})}
\bigl(\boldsymbol{\nu}_{i^{\prime}}
- \boldsymbol{\zeta}/k_{i^{\prime}}^{(i')}\bigr)
= \mathbf{1}_{n_a+1} - \mathbf{1}_{n_a+1}
= \mathbf{0}.
$
Writing
$\boldsymbol{\nu}_{i^{\prime}}
- \boldsymbol{\zeta}/k_{i^{\prime}}^{(i')}
= \tilde{C}^{(i^{\prime})\prime}\mathbf{z}$ for some $\mathbf{z}$,
applying $\tilde{C}^{(i^{\prime})}$ gives
$\tilde{C}^{(i^{\prime})}\tilde{C}^{(i^{\prime})\prime}\mathbf{z}
= \mathbf{0}$; invertibility forces $\mathbf{z} = \mathbf{0}$, so
$\boldsymbol{\nu}_{i^{\prime}}
= \boldsymbol{\zeta}/k_{i^{\prime}}^{(i')}$. Combining
with~\eqref{eq:zeta-decomp},
\begin{equation}
\label{eq:nu-decomp}
\boldsymbol{\nu}_{i^{\prime}}
= \frac{\boldsymbol{\zeta}}{k_{i^{\prime}}^{(i')}}
= \sum_{{i^{\prime \prime}}=1}^{n_b}
  \frac{k_{i^{\prime \prime}}^{(i')}}{k_{i^{\prime}}^{(i')}}\tilde{\mathbf{s}}_{i^{\prime \prime}}
\in \operatorname{col}(\tilde{S}).
\end{equation}

\medskip\noindent\textit{Step 4: Computing
$\tilde{C}^{(i)}\boldsymbol{\nu}_{i^{\prime}}$.}
\emph{Strategy: having placed $\boldsymbol{\nu}_{i'}$ in 
$\operatorname{col}(\tilde{S})$ via~\eqref{eq:nu-decomp}, applying 
$\tilde{C}^{(i)}$ reduces to the sparse structure 
$\tilde{C}^{(i)}\tilde{\mathbf{s}}_{i''} = \mathbf{1}_{n_a+1}$ if 
$i'' = i$ and $\mathbf{0}$ otherwise; only the $i''=i$ term in the 
sum survives, giving the claimed scalar 
$k_i^{(i')}/k_{i'}^{(i')}$.}

From~\eqref{linear-equality},
$\tilde{C}^{(i)}\tilde{S}
= \mathbf{1}_{n_a+1}\mathbf{e}_i^{\prime}$, so
$\tilde{C}^{(i)}\tilde{\mathbf{s}}_i = \mathbf{1}_{n_a+1}$
and $\tilde{C}^{(i)}\tilde{\mathbf{s}}_{i^{\prime \prime}} = \mathbf{0}_{n_a+1}$
for all ${i^{\prime \prime}} \neq i$. Applying $\tilde{C}^{(i)}$
to~\eqref{eq:nu-decomp}, only the ${i^{\prime \prime}} = i$ term survives:
\[
\tilde{C}^{(i)}\boldsymbol{\nu}_{i^{\prime}}
= \sum_{{i^{\prime \prime}}=1}^{n_b}
  \frac{k_{i^{\prime \prime}}^{(i')}}{k_{i^{\prime}}^{(i')}}\,
  \tilde{C}^{(i)}\tilde{\mathbf{s}}_{i^{\prime \prime}}
= \frac{k_i^{(i')}}{k_{i^{\prime}}^{(i')}}\,\mathbf{1}_{n_a+1}.
\]

\medskip\noindent\textit{Geometric takeaway.}
The vector 
$\boldsymbol{\nu}_{i'} = \tilde{C}^{(i')\prime}
(\tilde{C}^{(i')}\tilde{C}^{(i')\prime})^{-1}\mathbf{1}_{n_a+1}$
is the $n$-dimensional coefficient representation obtained by applying
$\tilde{C}^{(i')\prime}$ to the unnormalised Bates--Granger direction
for series $i'$. The lemma shows that
$\tilde{C}^{(i)}\boldsymbol{\nu}_{i'}$ is proportional to
$\mathbf{1}_{n_a+1}$. Equivalently, since
$\tilde{C}^{(i)}\tilde{C}^{(i')\prime}
=
C^{(i)}\Sigma_h C^{(i')\prime}$,
the cross-covariance block maps the unnormalised Bates--Granger
direction for series $i'$ to a vector in the
$\mathbf{1}_{n_a+1}$ direction. This alignment is the key property
behind the separability result in Proposition~\ref{prop:equivalence}:
it allows the off-diagonal contributions in the \emph{Joint} first-order
conditions to be absorbed into the Lagrange multiplier without
changing the Separate optimum. 

\end{myproof}

\begin{myproof}
The proof strategy is to show that $\mathbf{w}^{\mathrm{sep}}$
satisfies the KKT conditions of the full program~\eqref{quadratic-formulation};
since~\eqref{quadratic-formulation} is strictly convex, its KKT conditions
are both necessary and sufficient, which then forces
$\mathbf{w}^\ast = \mathbf{w}^{\mathrm{sep}}$.

\medskip\noindent\textit{Step 1: KKT conditions of the full program.}
The KKT conditions~\eqref{ktt-conds} hold if and only if every block of
$Q_h\mathbf{w}$ is proportional to $\mathbf{1}_{n_a+1}$, since
${\Pi}^{\prime} = \mathrm{blkdiag}(\mathbf{1}_{n_a+1}, \ldots,
\mathbf{1}_{n_a+1})$.
It therefore suffices to verify this proportionality for
$\mathbf{w} = \mathbf{w}^{\mathrm{sep}}$.

\medskip\noindent\textit{Step 2: KKT conditions of the block-diagonal program.}
Since $\mathbf{w}^{\mathrm{sep}}$ solves the block-diagonal program,
its $i^{\prime}$-th block satisfies the KKT condition
\begin{equation}
\label{eq:KKT-block}
C^{(i^{\prime})}\Sigma_h C^{(i^{\prime})\prime}
\mathbf{w}_{i^{\prime}}^{\mathrm{sep}}
= \frac{\tilde{\eta}_{i^{\prime}}}{2z_{i^{\prime},i^{\prime}}}
  \mathbf{1}_{n_a+1},
\qquad i^{\prime} = 1, \ldots, n_b,
\end{equation}
where $\tilde{\eta}_{i^{\prime}}$ is the Lagrange multiplier for series
$i^{\prime}$ in the block-diagonal program.
Rearranging~\eqref{eq:KKT-block} gives
\begin{equation}
\label{eq:w-sep-explicit}
\mathbf{w}_{i^{\prime}}^{\mathrm{sep}}
= \frac{\tilde{\eta}_{i^{\prime}}}{2z_{i^{\prime},i^{\prime}}}
  \bigl(C^{(i^{\prime})}\Sigma_h C^{(i^{\prime})\prime}\bigr)^{-1}
  \mathbf{1}_{n_a+1}.
\end{equation}

The $i$-th block $\bigl[Q_h\mathbf{w}^{\mathrm{sep}}\bigr]_{i}
\in \mathbb{R}^{n_a+1}$ expands as
\begin{equation}
\label{eq:block-expand}
\begin{aligned}
\bigl[Q_h\mathbf{w}^{\mathrm{sep}}\bigr]_{i}
&= z_{i,i}\,C^{(i)}\Sigma_h C^{(i)\prime}\mathbf{w}_i^{\mathrm{sep}}
\\&\quad + \sum_{i^{\prime} \neq i}
  z_{i,i^{\prime}}\,C^{(i)}\Sigma_h C^{(i^{\prime})\prime}
  \mathbf{w}_{i^{\prime}}^{\mathrm{sep}}.    
\end{aligned}
\end{equation}
For the diagonal term in~\eqref{eq:block-expand},
substituting~\eqref{eq:KKT-block} with $i^{\prime} = i$ gives
$z_{i,i}\,C^{(i)}\Sigma_h C^{(i)\prime}\mathbf{w}_i^{\mathrm{sep}}
= \frac{\tilde{\eta}_i}{2}\,\mathbf{1}_{n_a+1}$.
For each off-diagonal term in~\eqref{eq:block-expand},
substituting~\eqref{eq:w-sep-explicit} gives
\begin{align*}
&z_{i,i^{\prime}}\,C^{(i)}\Sigma_h C^{(i^{\prime})\prime}
\mathbf{w}_{i^{\prime}}^{\mathrm{sep}}
\\&\quad = \frac{z_{i,i^{\prime}}\,\tilde{\eta}_{i^{\prime}}}{2z_{i^{\prime},i^{\prime}}}
\underbrace{
  C^{(i)}\Sigma_h C^{(i^{\prime})\prime}
  \bigl(C^{(i^{\prime})}\Sigma_h C^{(i^{\prime})\prime}\bigr)^{-1}
  \mathbf{1}_{n_a+1}
}_{\displaystyle=\,\frac{k_i^{(i')}}{k_{i^{\prime}}^{(i')}}\mathbf{1}_{n_a+1}
  \text{ by Lemma~\ref{lem:proportionality}}},
\end{align*}
which is proportional to $\mathbf{1}_{n_a+1}$.

\medskip\noindent\textit{Step 3: Conclusion.}
Summing the diagonal and off-diagonal contributions,
every block $\bigl[Q_h\mathbf{w}^{\mathrm{sep}}\bigr]_{i}$ is proportional
to $\mathbf{1}_{n_a+1}$, so $2Q_h\mathbf{w}^{\mathrm{sep}} = {\Pi}^{\prime}\boldsymbol{\eta}$
holds for some $\boldsymbol{\eta} \in \mathbb{R}^{n_b}$.
The constraint $\Pi \mathbf{w}^{\mathrm{sep}} = \mathbf{1}_{n_b}$ holds by
construction of the block-diagonal program.
Hence $\mathbf{w}^{\mathrm{sep}}$ satisfies both KKT conditions
of~\eqref{ktt-conds}, and since~\eqref{quadratic-formulation} is
strictly convex, $\mathbf{w}^\ast = \mathbf{w}^{\mathrm{sep}}$.
\end{myproof}

\subsection{Proof of Proposition~\ref{equ-mint}: MinT as a Combination Procedure}
\label{eqv-mint}
\begin{proof}
Let $P=\Phi(\mathbf{w})C$ be the reconciliation matrix induced by the combination
weights $\mathbf{w}$, where $\Phi(\mathbf{w})$ is the block-diagonal matrix
in~\eqref{eq:phi} with $i$-th row block $\mathbf{w}_i^{\prime}$. The proof
identifies the combination weight problem with the MinT problem: the feasible
weights induce exactly the unbiased reconciliation matrices $\{P:PS=I_{n_b}\}$
(Steps~1--2), the combination and MinT objectives agree on this class (Step~3),
and its unique minimiser $P_{\mathrm{mint}}^\ast$ determines the optimal weights
(Step~4).

\emph{Step 1 (structure of $CS$).}
For each block, $C^{(i)}S=\mathbf{1}_{n_a+1}\mathbf{e}_i^{\prime}$; stacking over
$i$ gives $CS=I_{n_b}\otimes\mathbf{1}_{n_a+1}$, so by the block-diagonal form of
$\Phi(\mathbf{w})$,
\[
\Phi(\mathbf{w})CS
=\operatorname{diag}\bigl(\mathbf{1}_{n_a+1}^{\prime}\mathbf{w}_1,\ldots,
\mathbf{1}_{n_a+1}^{\prime}\mathbf{w}_{n_b}\bigr).
\]

\emph{Step 2 (feasible sets coincide).}
Consequently,
\[
PS=I_{n_b}
\ \Longleftrightarrow\ 
\mathbf{1}_{n_a+1}^{\prime}\mathbf{w}_i=1\ \ \forall i
\ \Longleftrightarrow\ 
\Pi \mathbf{w}=\mathbf{1}_{n_b},
\]
so $\mathbf{w}\mapsto\Phi(\mathbf{w})C$ maps
$\{\mathbf{w}:\Pi \mathbf{w}=\mathbf{1}_{n_b}\}$ into the unbiased class
$\{P:PS=I_{n_b}\}$. Conversely, by Theorem~\ref{Equivalence}, every $P$ with
$PS=I_{n_b}$ admits a representation $P=\Phi(\mathbf{w})C$ for some $\mathbf{w}$
with $\Pi \mathbf{w}=\mathbf{1}_{n_b}$. Hence the map is a surjection of
$\{\Pi \mathbf{w}=\mathbf{1}_{n_b}\}$ onto $\{P:PS=I_{n_b}\}$.

\emph{Step 3 (objectives coincide).}
Substituting $P=\Phi(\mathbf{w})C$, the combination
objective~\eqref{obj1} becomes
\begin{align*}
&\mathbb{E}\bigl[(\mathbf{y}_{t+h}-SP\hat{\mathbf{y}}_{t+h\mid t})^{\prime}
(\mathbf{y}_{t+h}-SP\hat{\mathbf{y}}_{t+h\mid t})\mid\mathcal{I}_t\bigr]
\\&\quad =\operatorname{tr}(
\mathbb{E}\bigl[(\mathbf{y}_{t+h}-SP\hat{\mathbf{y}}_{t+h\mid t})
(\mathbf{y}_{t+h}-SP\hat{\mathbf{y}}_{t+h\mid t})^{\prime}\mid\mathcal{I}_t\bigr]),
\end{align*}
which is exactly the MinT criterion in~\eqref{mint-loss} evaluated at the same
$P$ (equivalently $\mathbf{w}^{\prime}Q_h\mathbf{w}$ by
Theorem~\ref{quadratic-theorem}).

\emph{Step 4 (uniqueness and conclusion).}
By Steps~2--3 the two formulations minimise the same objective over the same class
$\{P:PS=I_{n_b}\}$. Since $\Sigma_h\succ0$ and $S$ has full column rank, this
objective is strictly convex in $P$ and the MinT problem has the unique minimiser
$P_{\mathrm{mint}}^\ast=(S^{\prime}\Sigma_h^{-1}S)^{-1}S^{\prime}\Sigma_h^{-1}$
in~\eqref{mint-solution}. Hence the optimal combination weights $\mathbf{w}^\ast$
in~\eqref{solution-w} satisfy $\Phi(\mathbf{w}^\ast)C=P_{\mathrm{mint}}^\ast$, and by
Proposition~\ref{prop:equivalence}, $\mathbf{w}^\ast=\mathbf{w}^{\mathrm{sep}}$, so
$\Phi(\mathbf{w}^{\mathrm{sep}})C=P_{\mathrm{mint}}^\ast$ as well. Combining,
\[
\Phi(\mathbf{w}^\ast)C=\Phi(\mathbf{w}^{\mathrm{sep}})C=P_{\mathrm{mint}}^\ast. \qedhere
\]
\end{proof}

\subsection{Proof of Proposition~\ref{prop:divergence}:
            Joint and Separate Optima under Egalitarian Penalization}
\label{divergence}

\begin{myproposition}[Divergence of \textit{Joint} and \textit{Separate}
  under egalitarian penalization]
  \label{prop:divergence}
  Under egalitarian penalization the Joint and Separate penalized solutions need
  not coincide:
  \begin{itemize}
  \item[\textup{(i)}] For the eRidge penalty, there exist
  $(\hat{Q}_h,\lambda)$ such that
  $\hat{\mathbf{w}}^{\mathrm{joint}}\neq
  \hat{\mathbf{w}}^{\mathrm{sep}}$.
  \item[\textup{(ii)}] For the eLASSO penalty, there exist
  $(\hat{Q}_h,\lambda)$ such that
  $\hat{\mathbf{w}}^{\mathrm{joint}}\neq
  \hat{\mathbf{w}}^{\mathrm{sep}}$.
  \end{itemize}
  Explicit constructions, using a $\hat{Q}_h$ realizable within the framework,
  are provided in Example~\ref{ex:ridge-divergence} and
  Example~\ref{ex:lasso-divergence} below.
  \end{myproposition}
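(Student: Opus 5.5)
Both parts are existence claims, so I will prove each with an explicit counterexample on the smallest nontrivial hierarchy. Take one aggregate (Total) over two bottom series, so $n_a=1$, $n_b=2$, $n=3$ and $S=[\mathbf{1}_2^{\prime};I_2]$. Then $S^{\prime}S=\begin{bmatrix}2&1\\1&2\end{bmatrix}$, so $z_{1,2}=1\neq0$. Each $C^{(i)}$ has two rows: the direct row $\mathbf{v}_0^{(i)}$ and the ancestor row $\mathbf{v}_0^{(i)}+\mathbf{v}^{(\mathrm{Total})}$.

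I will choose $\hat\Sigma_h$ diagonal and positive definite, say $\operatorname{diag}(\sigma_T^2,\sigma_1^2,\sigma_2^2)$. This choice makes $\hat Q_h$ realizable within the framework and gives explicit blocks via Theorem~\ref{quadratic-theorem}. In particular, the cross block $C^{(1)}\hat\Sigma_h C^{(2)\prime}$ is generally nonzero.

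I will reparameterize each $\mathbf{w}_i=(1-u_i,u_i)^{\prime}$, where $u_i$ is the indirect weight. The eRidge objective then becomes an unconstrained quadratic in $(u_1,u_2)$. It has a quadratic term $\mathbf{u}^{\prime}G\mathbf{u}$, a linear term, and the penalty $\lambda\sum_i(u_i-1/2)^2$.

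\textbf{Part (i), eRidge.} The Separate solution solves the $2\times2$ system using only the diagonal of $G$. The Joint solution solves $(G+\lambda I)\mathbf{u}=\mathbf{r}+\lambda\tfrac12\mathbf{1}$, where $\mathbf{r}$ collects the linear terms.

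At $\lambda=0$, Proposition~\ref{prop:equivalence} guarantees the two solutions agree. For $\lambda>0$, I will show that the Separate solution violates the Joint first-order condition. The residual is $G_{12}(u_2^{\mathrm{sep}}-1/2)$ up to the analogous linear-term correction. Equivalently, it is the off-diagonal block applied to the deviation of $\mathbf{w}^{\mathrm{sep}}_2$ from equal weights. This is no longer a Bates--Granger direction, so Lemma~\ref{lem:proportionality} gives no cancellation.

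Concretely, I will pick numbers such as $\sigma_T^2=1,\sigma_1^2=1,\sigma_2^2=4$ and $\lambda=1$, and evaluate both closed forms. I expect a nonzero difference.

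\textbf{Part (ii), eLASSO.} I plan to reuse the same $\hat Q_h$ and use subgradient optimality. I will pick $\lambda$ small enough that, in the Separate problem, both $u_i^{\mathrm{sep}}$ stay strictly away from $1/2$, so the penalty is locally linear. The Separate solution is then the unpenalized solution shifted by $\lambda\,\mathrm{sign}(u_i-1/2)/(2G_{ii})$.

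Substituting this into the Joint stationarity condition leaves the extra term $G_{12}\cdot\lambda\,\mathrm{sign}(\cdot)/(2G_{22})$, which is nonzero. Hence Joint stationarity fails at the Separate solution, and the two solutions differ.

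The main obstacle is the bookkeeping: I must confirm, for the specific numbers chosen, that the relevant off-diagonal entry $G_{12}$ is nonzero. I must also confirm, in the LASSO case, that no weight sits at the kink $1/2$. If the diagonal $\hat\Sigma_h$ makes $G_{12}$ vanish, I will instead add a small off-diagonal covariance between the two bottom errors and recompute.
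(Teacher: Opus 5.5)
Your proposal is correct. For part (ii) it takes a genuinely different route from the paper.

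\textbf{How the two routes differ.} The paper uses the same two-bottom hierarchy but a non-diagonal $\hat\Sigma_h$ chosen to give an integer $\hat Q_h$. In both parts it solves the Joint and Separate programs explicitly. For eLASSO this means locating the Joint optimum at the kink $w_2=\tfrac12$ and certifying it through the subdifferential condition.

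You never solve the nonsmooth Joint problem. You take the Separate point, note that its first coordinate is off the kink, so the corresponding partial subdifferential of the Joint objective is a single value, and show that this value is nonzero. Hence $0\notin\partial F^{\mathrm{joint}}(\mathbf u^{\mathrm{sep}})$, and since the Joint minimizer is unique, the two solutions differ.

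This is less computation, and it exposes the mechanism. The Joint stationarity residual at the Separate point is $G_{12}\,(u_2^{\mathrm{sep}}-u_2^{\mathrm{BG}})$, where $u_2^{\mathrm{BG}}$ is the unpenalized Bates--Granger weight. So divergence is the generic situation whenever the penalty moves a Separate weight away from its Bates--Granger value. The paper's explicit route instead yields concrete Joint weights and shows that eLASSO pins a Joint weight exactly at the egalitarian target.

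\textbf{Your numbers work.} Take $(\sigma_T^2,\sigma_1^2,\sigma_2^2)=(1,1,4)$ and $\lambda=1$.
\begin{itemize}
\item[(i)] eRidge gives $\mathbf u^{\mathrm{sep}}=(\tfrac5{26},\tfrac{17}{26})$ versus $\mathbf u^{\mathrm{joint}}=(\tfrac{55}{266},\tfrac{169}{266})$.
\item[(ii)] eLASSO gives $\mathbf u^{\mathrm{sep}}=(\tfrac5{24},\tfrac58)$, both off the kink. At this point the $u_1$-partial derivative of $\mathbf w'\hat Q_h\mathbf w+\lambda\mathcal P^{\mathrm{eq}}$ equals $-\tfrac12\neq 0$.
\end{itemize}

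\textbf{Bookkeeping to fix in the write-up.}
\begin{itemize}
\item[(a)] Discarding the cross block of $\hat Q_h$ removes linear terms as well as the coupling $G_{12}$. Writing the smooth part as $\mathbf u'G\mathbf u-2\mathbf r'\mathbf u$, Separate has $\mathbf r=(2\sigma_1^2,2\sigma_2^2)$, while Joint has $\mathbf r=(2\sigma_1^2+\sigma_2^2,\,2\sigma_2^2+\sigma_1^2)$. So Separate is not ``the same system with $G$ replaced by its diagonal.''
\item[(b)] The residual is $G_{12}$ times the deviation of $u_2^{\mathrm{sep}}$ from $u_2^{\mathrm{BG}}$, not from $\tfrac12$. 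The linear-term difference cancels exactly because Joint and Separate coincide at $\lambda=0$ (Proposition~\ref{prop:equivalence}). You should invoke this explicitly when you ``substitute into the Joint stationarity condition'' in part (ii).
\item[(c)] The eLASSO shift has the opposite sign to what you wrote: $u_i^{\mathrm{sep}}=u_i^{\mathrm{BG}}-\lambda\,\operatorname{sign}(u_i^{\mathrm{sep}}-\tfrac12)/(2G_{ii})$, i.e.\ toward $\tfrac12$. This is harmless for the non-vanishing argument.
\item[(d)] Your fallback is unnecessary. In this hierarchy $G_{12}=z_{12}\,\mathbf v^{(\mathrm{Total})\prime}\hat\Sigma_h\mathbf v^{(\mathrm{Total})}>0$ and $G_{ii}=2G_{12}$ for every positive-definite $\hat\Sigma_h$.
\end{itemize}
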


Both parts assert the existence of $(\hat{Q}_h,\lambda)$ with
$\hat{\mathbf{w}}^{\mathrm{joint}}\neq\hat{\mathbf{w}}^{\mathrm{sep}}$, and both
are witnessed by a single structured instance, ensuring divergence occurs for a
$\hat{Q}_h$ the framework can actually produce. Take the hierarchy
$\mathrm{Total}=b_1+b_2$ in the order $(\mathrm{Total},b_1,b_2)$, the summing
matrix $S$ and candidate-generation blocks
\[
S=\begin{pmatrix}1&1\\1&0\\0&1\end{pmatrix},\quad
C^{(1)}=\begin{pmatrix}0&1&0\\1&0&-1\end{pmatrix},\quad
C^{(2)}=\begin{pmatrix}0&0&1\\1&-1&0\end{pmatrix},
\]
and the positive-definite base covariance
$\hat{\Sigma}_h=\bigl(\begin{smallmatrix}18&14&-2\\14&15&-5\\-2&-5&6\end{smallmatrix}\bigr)$.
Recalling $[\hat{Q}_h]_{i,i'}=z_{i,i'}\,C^{(i)}\hat{\Sigma}_h C^{(i')\prime}$ with
$z_{i,i'}=(S^{\prime}S)_{i,i'}$, this yields
\begin{equation}\label{eq:struct-witness}
\hat{Q}_h=\begin{pmatrix}30&38&-5&-1\\38&56&-8&1\\-5&-8&12&6\\-1&1&6&10\end{pmatrix}\succ0,
\qquad \lambda=1.
\end{equation}
This $\hat{Q}_h$ is realizable by construction, being built from a genuine
$\hat{\Sigma}_h\succ0$ and the candidate-generation blocks $C^{(i)}$; in
particular it satisfies the structural identity
$[\hat{Q}_h]_{1,2}[\hat{Q}_h]_{2,2}^{-1}\mathbf{1}_2\propto\mathbf{1}_2$ of
Lemma~\ref{lem:proportionality}. Example~\ref{ex:ridge-divergence} treats the
smooth eRidge penalty by direct comparison of closed forms on
\eqref{eq:struct-witness}; Example~\ref{ex:lasso-divergence} treats the
non-smooth eLASSO penalty on the same $\hat{Q}_h$ via subdifferential KKT
conditions.

\begin{myexample}[eRidge divergence]
\label{ex:ridge-divergence}
For the structured $\hat{Q}_h$ of~\eqref{eq:struct-witness}, the eRidge closed
forms of Proposition~\ref{prop:closed-form} are
\begin{align*}
\hat{\mathbf{w}}^{\mathrm{joint}}
&=\bigl(\tfrac{79}{48},-\tfrac{31}{48},\tfrac{23}{48},\tfrac{25}{48}\bigr)^{\prime},
\qquad\\
\hat{\mathbf{w}}^{\mathrm{sep}}
&=\bigl(\tfrac{37}{22},-\tfrac{15}{22},\tfrac{9}{22},\tfrac{13}{22}\bigr)^{\prime}.
\end{align*}
These differ, for instance the indirect weight of series~$1$ differs by
$-\tfrac{31}{48}+\tfrac{15}{22}=\tfrac{19}{528}\neq0$, which proves part~(i).
\end{myexample}

\begin{myexample}[eLASSO divergence]
\label{ex:lasso-divergence}
We analyse the same structured $\hat{Q}_h$ of~\eqref{eq:struct-witness} under the
non-smooth eLASSO penalty. With $n_a=1$, the constraint
$\mathbf{1}_2^{\prime}\mathbf{w}_i=1$ lets us write $\mathbf{w}_i=(1-w_i,\,w_i)^{\prime}$
with indirect weight $w_i:=J\mathbf{w}_i$ and egalitarian target
$\tfrac{1}{n_a+1}=\tfrac12$. Writing
$[\hat{Q}_h]_{i,i}=\bigl(\begin{smallmatrix}a_i&q_i\\ q_i&d_i\end{smallmatrix}\bigr)$
and $[\hat{Q}_h]_{1,2}=\bigl(\begin{smallmatrix}u_1&u_2\\ u_3&u_4\end{smallmatrix}\bigr)$,
each diagonal block contributes $g_iw_i^2+2\psi_iw_i$ with $g_i:=a_i-2q_i+d_i>0$
and $\psi_i:=q_i-a_i$, while the cross term
$2\mathbf{w}_1^{\prime}[\hat{Q}_h]_{1,2}\mathbf{w}_2$ contributes
$2g_{12}w_1w_2+2\theta_1w_1+2\theta_2w_2$ with $g_{12}:=u_1-u_2-u_3+u_4$,
$\theta_1:=u_3-u_1$ and $\theta_2:=u_2-u_1$. The Sep program uses only the
diagonal blocks; the Joint program additionally feels $[\hat{Q}_h]_{1,2}$
through both the coupling $g_{12}$ and the linear shifts $\theta_1,\theta_2$. For
\eqref{eq:struct-witness} this gives $g_1=g_2=10$, $\psi_1=8$, $\psi_2=-6$,
$g_{12}=5$, $\theta_1=-3$ and $\theta_2=4$. Each objective below is convex on all
of $\mathbb{R}$ (resp.\ $\mathbb{R}^2$), a positive-definite quadratic plus a
convex $\ell_1$ term, so any point meeting the (subdifferential) stationarity
conditions is the global minimizer.

\medskip\noindent\textit{Sep solution.}
Series~$1$ minimizes $10w_1^2+16w_1+\lambda|w_1-\tfrac12|$; its unconstrained
optimum lies below $\tfrac12$, where the penalty derivative is $-1$, giving
$20w_1+16-\lambda=0$, i.e. $\hat{w}_1^{\mathrm{sep}}=-\tfrac34$. Series~$2$
minimizes $10w_2^2-12w_2+\lambda|w_2-\tfrac12|$; its optimum lies above
$\tfrac12$ (derivative $+1$), giving $20w_2-12+\lambda=0$, i.e.
$\hat{w}_2^{\mathrm{sep}}=\tfrac{11}{20}$.

\medskip\noindent\textit{Joint solution.}
The Joint objective reduces to
$10w_1^2+10w_2^2+10w_1w_2+10w_1-4w_2+\lambda(|w_1-\tfrac12|+|w_2-\tfrac12|)$.
We seek an optimum with $w_1<\tfrac12$ and $w_2=\tfrac12$ (the latter at the
kink). The $w_1$-stationarity condition (penalty derivative $-1$) reads
$20w_1+10w_2+10-\lambda=0$; with $w_2=\tfrac12$ this gives
$\hat{w}_1^{\mathrm{joint}}=-\tfrac{7}{10}$. At the kink, subdifferential
optimality of $w_2$ requires the smooth part's $w_2$-derivative $20w_2+10w_1-4$
to lie in $[-\lambda,\lambda]$; here it equals $-1$, and $|-1|\le\lambda=1$, so
the condition holds and
$(\hat{w}_1^{\mathrm{joint}},\hat{w}_2^{\mathrm{joint}})=(-\tfrac{7}{10},\tfrac12)$
is the global Joint optimum.

\medskip\noindent\textit{Comparison.}
$\hat{w}_1^{\mathrm{joint}}-\hat{w}_1^{\mathrm{sep}}=-\tfrac{7}{10}+\tfrac34
=\tfrac1{20}\neq0$, so the indirect weights differ and hence
$\hat{\mathbf{w}}^{\mathrm{joint}}\neq\hat{\mathbf{w}}^{\mathrm{sep}}$, proving
part~(ii). The $\ell_1$ penalty pins $\hat{w}_2^{\mathrm{joint}}$ exactly at the
egalitarian target, illustrating how the off-diagonal block, discarded by the
Sep program, reshapes the Joint optimum.
\end{myexample}

\begin{proof}[Proof of Proposition~\ref{prop:divergence}]
Part~(i) follows from Example~\ref{ex:ridge-divergence}, and
part~(ii) follows from Example~\ref{ex:lasso-divergence}.
\end{proof}

\subsection{Closed-Form Solutions for eRidge}
\label{solut-eRidge}

For the auxiliary results below, define the indirect-weight selector
\begin{align*}
J &= \bigl[\,\mathbf{0}_{n_a\times 1}\ \ I_{n_a}\,\bigr]
\in\mathbb{R}^{n_a\times(n_a+1)},
\\ 
\tilde{J} &= I_{n_b}\otimes J
\in\mathbb{R}^{n_b n_a\times n_b(n_a+1)}.
\end{align*}
Thus $J\mathbf{w}_i$ collects all entries of $\mathbf{w}_i$ except the
direct-forecast weight, and $\tilde{J}\mathbf{w}$ stacks these indirect
weights across bottom-level series.

\begin{myproposition}[Closed-form solutions for eRidge]
  \label{prop:closed-form}
  Suppose $\hat{Q}_h\succ0$ and $\lambda\ge0$, and define
  \begin{align*}
  &\tilde{G}:=\hat{Q}_h+\lambda\tilde{J}^{\prime}\tilde{J},
  \quad
 \tilde{\mathbf{g}}:=\tfrac{\lambda}{n_a+1}\tilde{J}^{\prime}\mathbf{1}_{n_bn_a},
  \\\qquad
  &G_i:=[\hat{Q}_h]_{i,i}+\lambda J^{\prime}J,
  \quad
  \mathbf{g}:=\tfrac{\lambda}{n_a+1}J^{\prime}\mathbf{1}_{n_a}.
  \end{align*}
  Then for $p=2$ the Joint and Separate penalized programs admit the closed-form
  solutions

  \begin{align}
  &\hat{\mathbf{w}}^{\mathrm{joint}}
  =\tilde{G}^{-1}\tilde{\mathbf{g}}
  +\tilde{G}^{-1}{\Pi}^{\prime}\bigl(\Pi \tilde{G}^{-1}{\Pi}^{\prime}\bigr)^{-1}
   \bigl(\mathbf{1}_{n_b}-\Pi \tilde{G}^{-1}\tilde{\mathbf{g}}\bigr),
  \label{eq:joint-closed-form}\\
  &\hat{\mathbf{w}}_i^{\mathrm{sep}}
  =G_i^{-1}\mathbf{g}
  +\frac{1-\mathbf{1}_{n_a+1}^{\prime}G_i^{-1}\mathbf{g}}
        {\mathbf{1}_{n_a+1}^{\prime}G_i^{-1}\mathbf{1}_{n_a+1}}\,G_i^{-1}\mathbf{1}_{n_a+1},
  \quad i=1,\ldots,n_b.
  \label{eq:sep-closed-form}
  \end{align}
  When $\lambda=0$, these reduce to the unpenalized plug-in solution of
  Proposition~\ref{closed_solution} and the Bates--Granger solution
  in~\eqref{eq:BG}.
  \end{myproposition}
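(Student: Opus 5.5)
The plan is to treat both programs as equality-constrained quadratic programs and solve their KKT systems, just as in the proof of Proposition~\ref{closed_solution}. First we rewrite the eRidge objective. For $p=2$ we have $\mathcal{P}^{\mathrm{eq}}(\mathbf w)=\|\tilde J\mathbf w-\tfrac{1}{n_a+1}\mathbf 1_{n_bn_a}\|_2^2$. Expanding, the Joint objective becomes
\[
\mathbf w'\tilde G\mathbf w-2\tilde{\mathbf g}'\mathbf w+\mathrm{const}.
\]
Since $\hat Q_h\succ0$ and $\lambda\tilde J'\tilde J\succeq0$, we get $\tilde G\succ0$. Hence the problem is strictly convex with a unique minimizer. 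Also $\Pi$ has full row rank, so $\Pi\tilde G^{-1}\Pi'\succ0$ by the same sandwich argument used in Proposition~\ref{closed_solution}.

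Next we write the Lagrangian $\mathbf w'\tilde G\mathbf w-2\tilde{\mathbf g}'\mathbf w-\boldsymbol\eta'(\Pi\mathbf w-\mathbf 1_{n_b})$. Its stationarity condition is $2\tilde G\mathbf w=2\tilde{\mathbf g}+\Pi'\boldsymbol\eta$, which gives
\[
\mathbf w=\tilde G^{-1}\tilde{\mathbf g}+\tfrac12\tilde G^{-1}\Pi'\boldsymbol\eta.
\]
Substituting this into $\Pi\mathbf w=\mathbf 1_{n_b}$ yields $\tfrac12\boldsymbol\eta=(\Pi\tilde G^{-1}\Pi')^{-1}(\mathbf 1_{n_b}-\Pi\tilde G^{-1}\tilde{\mathbf g})$. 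Plugging $\boldsymbol\eta$ back gives \eqref{eq:joint-closed-form}. Strict convexity makes the KKT point the global minimizer.

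For the Separate case we repeat this blockwise on \eqref{eq:sep-pen}. There the objective is $\mathbf w_i'G_i\mathbf w_i-2\mathbf g'\mathbf w_i+\mathrm{const}$ with $G_i=[\hat Q_h]_{i,i}+\lambda J'J$. The block $G_i$ is positive definite because $[\hat Q_h]_{i,i}$ is a principal block of $\hat Q_h\succ0$. The constraint is now a single scalar, so the multiplier is a scalar. The same elimination gives \eqref{eq:sep-closed-form}, where the denominator $\mathbf 1'G_i^{-1}\mathbf 1>0$.

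For $\lambda=0$ we have $\tilde G=\hat Q_h$, $\tilde{\mathbf g}=\mathbf 0$, $G_i=[\hat Q_h]_{i,i}$ and $\mathbf g=\mathbf 0$. The formulas then collapse to $\hat Q_h^{-1}\Pi'(\Pi\hat Q_h^{-1}\Pi')^{-1}\mathbf 1_{n_b}$ and to \eqref{eq:sep-block-solution}. Cancelling $z_{i,i}$ as in the proof of Proposition~\ref{prop:equivalence} Part I recovers \eqref{eq:BG}. The only step needing care is the positive definiteness of $\tilde G$, $G_i$ and the Schur-type matrices, and that follows directly from the sandwich rule.
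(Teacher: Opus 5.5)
Your proposal is correct and follows essentially the same route as the paper's proof. Like the paper, you expand the eRidge penalty into $\mathbf w'\tilde G\mathbf w-2\tilde{\mathbf g}'\mathbf w+\mathrm{const}$, note strict convexity, and solve the Lagrangian first-order conditions for the Joint program and blockwise for the Separate program, then set $\lambda=0$. You also spell out details the paper leaves implicit: $G_i\succ0$ because $[\hat Q_h]_{i,i}$ is a principal block of $\hat Q_h\succ0$, and $\Pi\tilde G^{-1}\Pi'\succ0$ by the sandwich rule.
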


\begin{proof}
The joint eRidge objective is
$\mathbf{w}^{\prime}\hat{Q}_h\mathbf{w}
+\lambda\|\tilde{J}\mathbf{w}-\tfrac{1}{n_a+1}\mathbf{1}_{n_bn_a}\|_2^2$.
Expanding the penalty,
$\lambda \|\tilde{J}\mathbf{w}-\tfrac{1}{n_a+1}\mathbf{1}_{n_bn_a} \|_2^2
=\mathbf{w}^{\prime}(\lambda\tilde{J}^{\prime}\tilde{J})\mathbf{w}
-2\tilde{\mathbf{g}}^{\prime}\mathbf{w}+\mathrm{const}$, so the objective equals
$\mathbf{w}^{\prime}\tilde{G}\mathbf{w}-2\tilde{\mathbf{g}}^{\prime}\mathbf{w}+\mathrm{const}$.
Since $\hat{Q}_h\succ0$ and $\lambda\tilde{J}^{\prime}\tilde{J}\succeq0$ we have
$\tilde{G}\succ0$, and likewise each block $G_i\succ0$, so the program is a
strictly convex equality-constrained quadratic program. Following the proof of
Proposition~\ref{closed_solution}, with Lagrange multiplier vector
$\boldsymbol{\eta}$ the Lagrangian
$\mathcal{L}(\mathbf{w},\boldsymbol{\eta})
=\mathbf{w}^{\prime}\tilde{G}\mathbf{w}-2\tilde{\mathbf{g}}^{\prime}\mathbf{w}
-\boldsymbol{\eta}^{\prime}(\Pi \mathbf{w}-\mathbf{1}_{n_b})$
has first-order conditions
$2\tilde{G}\mathbf{w}-2\tilde{\mathbf{g}}={\Pi}^{\prime}\boldsymbol{\eta}$,
$\Pi \mathbf{w}=\mathbf{1}_{n_b}$. Thus
$\mathbf{w}=\tilde{G}^{-1}\bigl(\tilde{\mathbf{g}}+\tfrac12 \Pi^{\prime}\boldsymbol{\eta}\bigr)$;
substituting into the constraint gives
$\boldsymbol{\eta}=2 (\Pi \tilde{G}^{-1}{\Pi}^{\prime})^{-1}
(\mathbf{1}_{n_b}-\Pi \tilde{G}^{-1}\tilde{\mathbf{g}})$,
and back-substitution yields~\eqref{eq:joint-closed-form}. The separable program
decouples into $n_b$ blocks; for block $i$, with scalar multiplier $\eta_i$ for the
constraint $\mathbf{1}_{n_a+1}^{\prime}\mathbf{w}_i=1$, the same step gives
$2G_i\mathbf{w}_i-2\mathbf{g}=\eta_i\mathbf{1}_{n_a+1}$, hence
$\eta_i=2(1-\mathbf{1}_{n_a+1}^{\prime}G_i^{-1}\mathbf{g})/
(\mathbf{1}_{n_a+1}^{\prime}G_i^{-1}\mathbf{1}_{n_a+1})$ and~\eqref{eq:sep-closed-form}.
At $\lambda=0$, $\tilde{\mathbf{g}}=\mathbf{0}$ and $\tilde{G}=\hat{Q}_h$, so the
solutions collapse to the plug-in solution of Proposition~\ref{closed_solution} and
the Bates--Granger solution~\eqref{eq:BG} of Proposition~\ref{prop:equivalence}.
\end{proof}

\section{Grouped Time Series Forecast Reconciliation}
\label{app:grouped}

The main text develops the framework using hierarchical time series for
expositional clarity. This section clarifies that grouped time series require
no separate formalism: once their aggregation structure is encoded by the same
summing matrix $S$, all definitions, reconciliation maps, and combination-based
constructions apply unchanged. 

\paragraph{Grouped versus hierarchical structures.}
In a hierarchical (nested) structure, each aggregate splits into disjoint subgroups, so the series
form a single tree and the ancestors of each bottom-level series form a single chain. A grouped time
series instead classifies the bottom-level series by two or more \emph{crossed} attributes: the
bottom-level series are indexed by the observed combinations of the attribute levels, and each
aggregate is obtained by marginalizing over one or more attributes
\citep{athanasopoulos2024forecast}. A bottom-level series may therefore belong to several non-nested
aggregates. Unlike in a tree hierarchy, there is no unique aggregation path from a
bottom-level series to the total series. This needs no new machinery: both cases use the same relation
$\mathbf{a}_t=A_{\mathrm{agg}}\mathbf{b}_t$ and summing matrix $S$ in~\eqref{coherence}; the grouped structure
only changes the pattern of $A_{\mathrm{agg}}$, whose rows are $0/1$ membership indicators over the
bottom-level series. 

\paragraph{A small grouped example.}
Let $4$ bottom-level series be indexed by two attributes with classes $\{A,B\}$ and
$\{X,Y\}$ respectively, so $\mathcal{B}=\{AX,AY,BX,BY\}$. They aggregate along attribute~1 into $A,B$, along
attribute~2 into $X,Y$, and along both into $\mathrm{Total}$, giving $\mathcal{A}=\{\mathrm{Total},A,B,X,Y\}$. In this grouped system, we have 
$n_a=5$ aggregate series, $n_b = 4$ bottom-level series, and $n=9$ series in total:
\begin{align*}
  \mathbf{a}_t=\big(&y_t^{\mathrm{Total}},y_t^{A},y_t^{B},y_t^{X}, y_t^{Y}\big)^{\prime}, \\\mathbf{b}_t=\big(&y_t^{AX},y_t^{AY},y_t^{BX},y_t^{BY}\big)^{\prime},
\\\mathbf{y}_t=\big(&y_t^{\mathrm{Total}},y_t^{A},y_t^{B},y_t^{X},y_t^{Y},\\&y_t^{AX},y_t^{AY},y_t^{BX},y_t^{BY}\big)^{\prime}.
\end{align*}
This example has two crossed classification dimensions, so each bottom-level
cell contributes to one aggregate along each dimension as well as to the total. As in Section~2 we stack aggregates above bottom-level series, so that the rows of
$S$ follow the order of $\mathbf{y}_t$ and its columns follow the order of $\mathbf{b}_t$. With this fixed ordering, the aggregation matrix is
\[
A_{\mathrm{agg}}=
\begin{bmatrix}
1&1&1&1\\ 1&1&0&0\\ 0&0&1&1\\ 1&0&1&0\\ 0&1&0&1
\end{bmatrix},
\]
and the summing matrix is, as in~\eqref{coherence},
\[
S=\begin{bmatrix}A_{\mathrm{agg}}\\ I_4\end{bmatrix}\in\mathbb{R}^{9\times 4},
\qquad S\mathbf{b}_t=\mathbf{y}_t.
\]
Here $A_{\mathrm{agg}}$ records membership: $[A_{\mathrm{agg}}]_{ji}=1$ exactly when bottom-level series $i$ belongs to
aggregate $j$. The product $\mathbf{a}_t=A_{\mathrm{agg}}\mathbf{b}_t$ then performs the corresponding additive
aggregation, summing each aggregate over its member series. The crossing shows up as overlapping
membership: the rows for $A$ and $X$ both mark $AX$ as a member, yet neither row's member set is
contained in the other, so $AX$ belongs to two non-nested aggregates. Such a two-attribute
structure can be viewed informally as the union of two hierarchical trees that share their top- and
bottom-level series \citep{athanasopoulos2024forecast}; see Figure~\ref{fig:grouped}.

The complete $2\times2$ design is used only for illustration. The framework
also covers grouped structures with more attributes, more levels, or missing
attribute combinations, as long as the observed bottom-level series and their
aggregates can be encoded through the membership rows of $A_{\mathrm{agg}}$.

\begin{figure}[t]
\centering
\resizebox{\textwidth}{!}{\begin{tikzpicture}[
  >=latex,
  gnode/.style={draw,ellipse,fill=white,minimum width=12mm,minimum height=7.5mm,
                inner sep=1pt,font=\small},
  edge/.style={draw,line width=0.5pt},
]
\node[gnode] (gT)  at (0,4.0)   {$\mathrm{Total}$};
\node[gnode] (gX)  at (1.7,4.0) {$X$};
\node[gnode] (gY)  at (3.4,4.0) {$Y$};
\node[gnode] (gA)  at (0,2.6)   {$A$};
\node[gnode] (gAX) at (1.7,2.6) {$AX$};
\node[gnode] (gAY) at (3.4,2.6) {$AY$};
\node[gnode] (gB)  at (0,1.2)   {$B$};
\node[gnode] (gBX) at (1.7,1.2) {$BX$};
\node[gnode] (gBY) at (3.4,1.2) {$BY$};
\node[font=\Large] at (4.3,2.6) {$=$};
\node[gnode] (T1)  at (7.1,4.4) {$\mathrm{Total}$};
\node[gnode] (A1)  at (5.7,2.6) {$A$};
\node[gnode] (B1)  at (8.5,2.6) {$B$};
\node[gnode] (AX1) at (5.0,0.7) {$AX$};
\node[gnode] (AY1) at (6.4,0.7) {$AY$};
\node[gnode] (BX1) at (7.8,0.7) {$BX$};
\node[gnode] (BY1) at (9.2,0.7) {$BY$};
\draw[edge] (T1)--(A1); \draw[edge] (T1)--(B1);
\draw[edge] (A1)--(AX1); \draw[edge] (A1)--(AY1);
\draw[edge] (B1)--(BX1); \draw[edge] (B1)--(BY1);
\node[font=\Large] at (10.1,2.6) {$\bigcup$};
\node[gnode] (T2)  at (13.0,4.4) {$\mathrm{Total}$};
\node[gnode] (X2)  at (11.6,2.6) {$X$};
\node[gnode] (Y2)  at (14.4,2.6) {$Y$};
\node[gnode] (AX2) at (10.9,0.7) {$AX$};
\node[gnode] (BX2) at (12.3,0.7) {$BX$};
\node[gnode] (AY2) at (13.7,0.7) {$AY$};
\node[gnode] (BY2) at (15.1,0.7) {$BY$};
\draw[edge] (T2)--(X2); \draw[edge] (T2)--(Y2);
\draw[edge] (X2)--(AX2); \draw[edge] (X2)--(BX2);
\draw[edge] (Y2)--(AY2); \draw[edge] (Y2)--(BY2);
\end{tikzpicture}}
\caption{The $2\times2$ grouped time series of the example: the left panel displays the crossed
classification table, with rows indexed by attribute~1 and columns indexed by attribute~2, viewed
informally as the union of two hierarchical trees with common top- and bottom-level series, one per
attribute. This depiction is a visual device following \citet{athanasopoulos2024forecast}, not a
graph-theoretic equivalence.}
\label{fig:grouped}
\end{figure}

\paragraph{Ancestors and collaterals on a DAG.}
A grouped structure is a directed acyclic graph (DAG) rather than a tree: a bottom-level series
may have several parents (here, one per attribute). The membership notation of Section~2 still
specializes directly, because it is defined through set membership rather than tree position. The
bottom-level descendant sets are $\mathrm{Desc}_{\mathrm{bot}}(\mathrm{Total})=\{AX,AY,BX,BY\}$, $\mathrm{Desc}_{\mathrm{bot}}(A)=\{AX,AY\}$,
$\mathrm{Desc}_{\mathrm{bot}}(B)=\{BX,BY\}$, $\mathrm{Desc}_{\mathrm{bot}}(X)=\{AX,BX\}$, and $\mathrm{Desc}_{\mathrm{bot}}(Y)=\{AY,BY\}$. Since
$\mathrm{Anc}(i)=\{j:i\in\mathrm{Desc}_{\mathrm{bot}}(j)\}$ and $\mathrm{Col}(i)=\mathcal{A}\setminus\mathrm{Anc}(i)$ depend only on these
membership sets, they remain well-defined on the DAG; for $AX$, $\mathrm{Anc}(AX)=\{\mathrm{Total},A,X\}$ and
$\mathrm{Col}(AX)=\{B,Y\}$. Tree-specific notions should therefore be interpreted with care in the
grouped case. In particular, the ancestors of $AX$ are not totally ordered, since $A$ and $X$ are
incomparable. Accordingly, the
tree-based description of $\mathrm{Anc}(i)$ in Section~2 should be read, in the grouped case, simply as the
set of aggregates that contain $i$.

\paragraph{The reconciliation framework.}
The example satisfies the coherence relation~\eqref{coherence} with a binary $S$, so it is a
special case of the Section~2 aggregation structure, differing only in the pattern of $A_{\mathrm{agg}}$.
Since every construction in the paper is expressed through $S$, all carry over unchanged: coherent
forecasts (Definition~\ref{def-coherence}), the linear reconciliation map~\eqref{linear-recon-1}
under $PS=I_{n_b}$, the MinT solution~\eqref{mint-solution}, and the combination framework
developed later all apply analogously. This matches the literature, where the same reconciliation
methods are applied to hierarchical and grouped structures through a common summing matrix
\citep{hyndman2016fast,wickramasuriya2019optimal}; the empirical study uses a grouped dataset on
exactly this basis. Consequently, the grouped empirical application in
Section~\ref{sec:empirical} is not an extension of the method, but a direct application of the same
method.

\section{Unbalanced Hierarchical Structures}
\label{sec:unbalanced-structure}

This section illustrates how the notation and aggregation representation
introduced in the main text apply to an unbalanced hierarchy. A hierarchy
is balanced if all paths from the top-level series to the bottom-level
series have the same depth. It is unbalanced if bottom-level or terminal
series occur at different depths, so that the top-to-bottom path lengths
are unequal~\citep{di2024forecast}.

Specifically, consider the unbalanced hierarchy shown in
Figure~\ref{fig:hierarchy}, with
$
\mathbf{y}_t
=
\bigl(
y_t^{U_1},
y_t^{U_2},
y_t^{U_3},
y_t^{U_4},
y_t^{B_1},
y_t^{B_2},
y_t^{B_3},
y_t^{B_4},
y_t^{B_5}
\bigr)^{\prime},
$
$
\mathbf{a}_t
=
\bigl(
y_t^{U_1},
y_t^{U_2},
y_t^{U_3},
y_t^{U_4}
\bigr)^{\prime},
$
and
$
\mathbf{b}_t
=
\bigl(
y_t^{B_1},
y_t^{B_2},
y_t^{B_3},
y_t^{B_4},
y_t^{B_5}
\bigr)^{\prime}.
$
The corresponding summing matrix is
\[
S=
\begin{bmatrix}
1&1&1&1&1\\
1&1&1&0&0\\
0&0&0&1&1\\
0&1&1&0&0\\
& & I_5 & &
\end{bmatrix}.
\]

\begin{figure}[htbp]
\centering
\begin{tikzpicture}[
  every node/.style={font=\small,inner sep=1.5pt},
  line width=0.5pt,
  x=1cm,y=1cm
]
\node (U1) at (3.6,4.2){$U_1$};
\node (U2) at (1.4,2.8){$U_2$};
\node (U3) at (5.8,2.8){$U_3$};
\node (B1) at (0.5,1.4){$B_1$};
\node (U4) at (2.2,1.4){$U_4$};
\node (B4) at (5.0,1.4){$B_4$};
\node (B5) at (6.8,1.4){$B_5$};
\node (B2) at (1.7,0.0){$B_2$};
\node (B3) at (2.8,0.0){$B_3$};

\draw (U1)--(U2);
\draw (U1)--(U3);
\draw (U2)--(B1);
\draw (U2)--(U4);
\draw (U3)--(B4);
\draw (U3)--(B5);
\draw (U4)--(B2);
\draw (U4)--(B3);
\end{tikzpicture}
\caption{\centering A hierarchical structure with unbalanced depth.}
\label{fig:hierarchy}
\end{figure}

For example, the aggregate ancestors of bottom-level series $B_1$ are
$
\mathrm{Anc}(B_1)=\{U_1,U_2\},
$
while its collateral aggregate series are
$
\mathrm{Col}(B_1)=\{U_3,U_4\}.
$
Hence, the definitions of ancestor and collateral aggregate series remain
well defined even though the hierarchy is unbalanced. The notation,
coherence representation, and summing-matrix formulation introduced in
the main text therefore apply without requiring a balanced structure.
\section{The Local Combination Interpretation of LCC}\label{sec:LCC}
This section provides a concrete example of the combination interpretation
implicit in LCC, which is used in the main text as the closest existing
connection between reconciliation and forecast combination. In the context of
hierarchical forecasting, \citet{hollyman2021understanding} established this
connection through their level-conditional coherent (LCC) method.
LCC decomposes a multi-level hierarchy into a sequence of two-level subsystems, each consisting of the aggregate series at one selected upper level and the bottom-level series that reconcile to those aggregates. For example, the three-level hierarchy in Figure~\ref{example_1}, with aggregated series $\{\mathrm{Total}, A, B\}$ and bottom-level series $\{AA, AB, BA, BB\}$, can be decomposed into two two-level systems:
(i) the intermediate system $\{A,B\}$ versus all bottom-level series, shown in the lower panel of Figure~\ref{two_level_syst}; and
(ii) the top system $\{\mathrm{Total}\}$ versus all bottom-level series, shown in the upper panel of Figure~\ref{two_level_syst}.
\begin{figure}[htbp]
\centering
\includegraphics[width=0.4\textwidth]{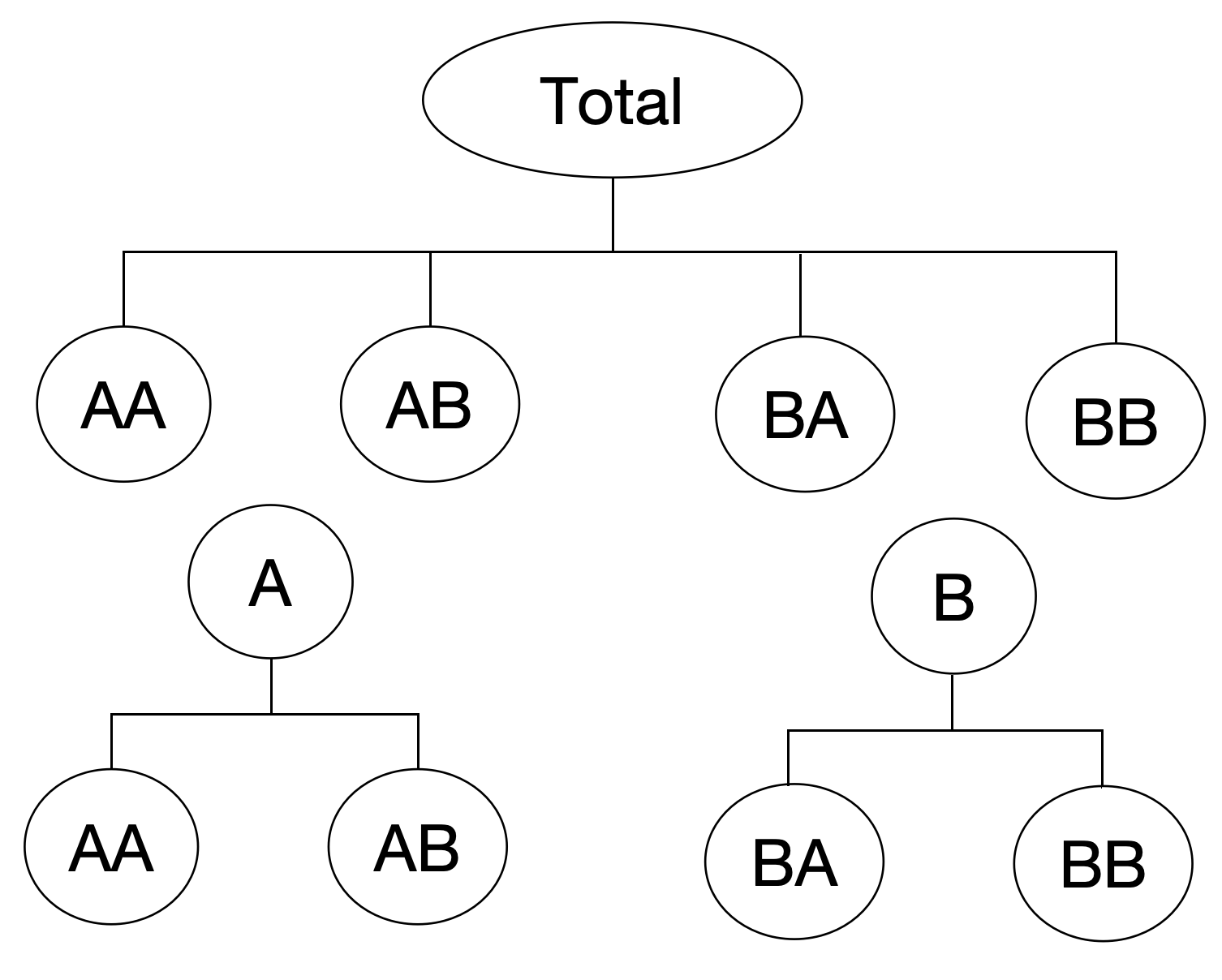}
\vspace{-10pt}
\caption{\centering Two two-level subsystems derived from the three-level hierarchy in Figure~\ref{example_1}.}
\label{two_level_syst}
\end{figure}

At each selected upper level, LCC produces bottom-level forecasts that are
locally coherent with the aggregate base forecasts at that level. LCC relies on
the simplifying assumption that bottom-level base forecast errors are
uncorrelated, so that the error covariance matrix in each subproblem is
diagonal. Under this assumption, each adjusted bottom-level forecast admits the
form of a linear combination of its base forecast and an indirect forecast
obtained by taking the relevant aggregate forecast and subtracting the base
forecasts of the other bottom-level descendants in the same local subsystem.
After the level-specific bottom-level forecasts are averaged, the final
bottom-level vector is mapped through $S$ to obtain a globally coherent forecast
vector.

To illustrate, at the intermediate level in the lower panel of
Figure~\ref{two_level_syst}, reconciliation is performed within this subsystem,
subject to the constraints that the adjusted forecasts for the descendants of
$A$ and $B$ sum to $\hat{y}_{t+h|t}^{A}$ and $\hat{y}_{t+h|t}^{B}$,
respectively. Under the assumption of uncorrelated errors, the reconciled
forecast for series $AA$ can be written as
\[
\tilde{b}_{t+h|t}^{(AA,A)}
=
w_{AA}^{(A)}\hat{y}_{t+h|t}^{AA}
+(1-w_{AA}^{(A)})
\bigl(\hat{y}_{t+h|t}^{A}-\hat{y}_{t+h|t}^{AB}\bigr),
\]
where $w_{AA}^{(A)}$ is the corresponding combination weight specific to
reconciling series $AA$ with its ancestor series $A$. The term
$\hat{y}_{t+h|t}^{A}-\hat{y}_{t+h|t}^{AB}$ serves as an indirect forecast of
$AA$, obtained by adjusting the aggregate forecast of its ancestor series $A$ to
remove the contribution of its sibling series $AB$. Analogously, the reconciled
forecast for $BA$ is
\[
\tilde{b}_{t+h|t}^{(BA,B)}
=
w_{BA}^{(B)}\hat{y}_{t+h|t}^{BA}
+(1-w_{BA}^{(B)})
\bigl(\hat{y}_{t+h|t}^{B}-\hat{y}_{t+h|t}^{BB}\bigr),
\]
where $w_{BA}^{(B)}$ is the level-specific combination weight assigned to the
base forecast of $BA$ with its ancestor series $B$.

A second reconciliation step is performed at the top level between 
$\{\mathrm{Total}\}$ and all bottom-level series in the upper panel of
Figure~\ref{two_level_syst}. This yields another set of revised bottom-level
forecasts. For series $AA$, the top-level reconciled forecast is
$\tilde{b}_{t+h|t}^{(AA,\mathrm{Total})}
=
w_{AA}^{(\mathrm{Total})}\hat{y}_{t+h|t}^{AA}
+
(1-w_{AA}^{(\mathrm{Total})})
(
\hat{y}_{t+h|t}^{\mathrm{Total}}
-\hat{y}_{t+h|t}^{AB}
-\hat{y}_{t+h|t}^{BA}
-\hat{y}_{t+h|t}^{BB})$,
where $w_{AA}^{(\mathrm{Total})}$ is a distinct top-level combination weight.
The term
$\hat{y}_{t+h|t}^{\mathrm{Total}}-\hat{y}_{t+h|t}^{AB}
-\hat{y}_{t+h|t}^{BA}-\hat{y}_{t+h|t}^{BB}$ is the corresponding top-level
indirect forecast for $AA$, obtained by removing the contributions of all
sibling series $\{AB, BA, BB\}$ from the aggregate forecast of series
$\mathrm{Total}$.

Combining the two level-specific adjustments, the final reconciled forecast $\tilde{b}_{t+h \mid t}^{(AA)}$
for series $AA$ is obtained by averaging the intermediate- and top-level
adjustments, yielding a linear combination of three candidate forecasts:
\begin{equation}
\label{averaging}
\begin{aligned}
\tilde{b}_{t+h \mid t}^{(AA)}
&= \underbrace{\frac{w_{AA}^{(A)}+w_{AA}^{(\mathrm{Total})}}{2}}_{\text{direct weight}}\,
   \underbrace{\hat{y}_{t+h \mid t}^{AA}}_{\text{direct forecast}}
 \\&+ \underbrace{\frac{1-w_{AA}^{(A)}}{2}}_{\text{weight (}A\text{)}}\,
   \underbrace{\bigl(\hat{y}_{t+h \mid t}^{A}-\hat{y}_{t+h \mid t}^{AB}\bigr)}_{\text{indirect forecast (ancestor }A\text{)}} \\
\quad
 &+ \underbrace{\frac{1-w_{AA}^{(\mathrm{Total})}}{2}}_{\text{weight (}\mathrm{Total}\text{)}}\,
   \underbrace{\bigl(\hat{y}_{t+h \mid t}^{\mathrm{Total}}-\hat{y}_{t+h \mid t}^{AB}
   -\hat{y}_{t+h \mid t}^{BA}-\hat{y}_{t+h \mid t}^{BB}\bigr)}_{\text{indirect forecast (ancestor }\mathrm{Total}\text{)}}.
\end{aligned}
\end{equation}

\noindent
As shown in~\eqref{averaging}, the direct base forecast $\hat{y}_{t+h|t}^{AA}$
receives a weight equal to the average of the two level-specific weights
$w_{AA}^{(A)}$ and $w_{AA}^{(\mathrm{Total})}$, while each indirect forecast receives
the complementary weight  from its corresponding ancestor level.
The indirect forecast from ancestor $A$ is obtained by subtracting the sibling
forecast $\hat{y}_{t+h|t}^{AB}$ from the aggregate forecast $\hat{y}_{t+h|t}^{A}$;
the indirect forecast from ancestor $\mathrm{Total}$ is obtained analogously by removing
the contributions of all sibling series $\{AB, BA, BB\}$.
Because each level-specific adjustment preserves the weight-sum-to-one property
and the final forecast is their average, the three coefficients automatically
sum to one. The same procedure applies to all bottom-level series, yielding
the vector of reconciled bottom-level forecasts $\tilde{\mathbf{b}}_{t+h \mid t}$,
from which coherent forecasts for all hierarchical series are recovered via
$\tilde{\mathbf{y}}_{t+h\mid t} = S\,\tilde{\mathbf{b}}_{t+h\mid t}$.

This example shows why LCC can be interpreted as a forecast-combination
procedure: its final bottom-level forecast is a weighted average of the direct
base forecast and ancestor-based indirect forecasts. For series $AA$, the
implicit candidate set consists of
$\hat{y}_{t+h|t}^{AA}$,
$\hat{y}_{t+h|t}^{A}-\hat{y}_{t+h|t}^{AB}$, and
$\hat{y}_{t+h|t}^{\mathrm{Total}}-\hat{y}_{t+h|t}^{AB}
-\hat{y}_{t+h|t}^{BA}-\hat{y}_{t+h|t}^{BB}$.
However, this set does not exhaust the structurally valid candidate forecasts
generated by the aggregation constraints. For example, the non-ancestor
aggregate series $B$ can also be used to replace the bottom-level sibling sum
$\hat{y}_{t+h|t}^{BA}+\hat{y}_{t+h|t}^{BB}$ in the total-level identity,
yielding another valid indirect forecast for $AA$: $\hat{y}_{t+h|t}^{\mathrm{Total}}
-\hat{y}_{t+h|t}^{AB}
-\hat{y}_{t+h|t}^{B}$. 
This candidate is valid because, under coherence,
$y_{t+h}^{B}=y_{t+h}^{BA}+y_{t+h}^{BB}$, so the expression still targets
$y_{t+h}^{AA}$. It is nevertheless absent from LCC, because LCC constructs
candidate forecasts only through local two-level reconciliations involving the
target series' ancestor aggregates.

Furthermore, in LCC, the final weights are induced
by solving separate local reconciliation problems and then averaging the
resulting level-specific forecasts. They are not obtained from a
single optimization problem over the full set of structurally valid candidate
forecasts. By contrast, our framework first constructs a maximal linearly
independent set of valid candidates, including both ancestor-based and
collateral aggregate-based candidates such as the one above, and then estimates
combination weights directly through a global objective, without the simplifying error uncorrelatedness assumption. Overall, LCC provides a
useful local combination interpretation of reconciliation under simplifying assumptions, whereas our approach
gives a general combination formulation that preserves the full candidate space
and optimizes the weights jointly.

\section{Application of Algorithm~\ref{algorithm-1}
to an Unbalanced Hierarchy}
\label{unbalanced-example}

We now apply Algorithm~\ref{algorithm-1} to the unbalanced hierarchy
introduced in Section~\ref{sec:unbalanced-structure}. We take the
bottom-level series $B_1$ as an example and construct its candidate
forecasts step by step.

\paragraph{Step 1.}
Following line~1 of Algorithm~\ref{algorithm-1}, we first compute the
four vectors $\mathbf{v}^{(U_1)},\ldots,\mathbf{v}^{(U_4)}$, each
corresponding to the aggregation constraint associated with one
aggregate series. These vectors are
\[
\begin{aligned}
\mathbf{v}^{(U_1)}
&=(1,0,0,0,-1,-1,-1,-1,-1)^\prime,\\
\mathbf{v}^{(U_2)}
&=(0,1,0,0,-1,-1,-1,0,0)^\prime,\\
\mathbf{v}^{(U_3)}
&=(0,0,1,0,0,0,0,-1,-1)^\prime,\\
\mathbf{v}^{(U_4)}
&=(0,0,0,1,0,-1,-1,0,0)^\prime.
\end{aligned}
\]
They correspond, respectively, to the aggregation constraints
\[
\begin{aligned}
y_{t+h}^{U_1}
&=
y_{t+h}^{B_1}
+y_{t+h}^{B_2}
+y_{t+h}^{B_3}
+y_{t+h}^{B_4}
+y_{t+h}^{B_5},\\
y_{t+h}^{U_2}
&=
y_{t+h}^{B_1}
+y_{t+h}^{B_2}
+y_{t+h}^{B_3},\\
y_{t+h}^{U_3}
&=
y_{t+h}^{B_4}
+y_{t+h}^{B_5},\\
y_{t+h}^{U_4}
&=
y_{t+h}^{B_2}
+y_{t+h}^{B_3}.
\end{aligned}
\]

\paragraph{Step 2.}
Following line~3 of Algorithm~\ref{algorithm-1}, the direct-forecast
coefficient vector for $B_1$ is
\[
\mathbf{v}_0^{(B_1)}
=
(0,0,0,0,1,0,0,0,0)^{\prime}.
\]

\paragraph{Step 3.}
Using
$\mathrm{Anc}(B_1)=\{U_1,U_2\}$ and following line 3 of Algorithm 1, the two ancestor-based candidate
coefficient vectors are
\[
\begin{aligned}
\mathbf{v}_0^{(B_1)}+\mathbf{v}^{(U_1)}
&=
(1,0,0,0,0,-1,-1,-1,-1)^\prime,\\
\mathbf{v}_0^{(B_1)}+\mathbf{v}^{(U_2)}
&=
(0,1,0,0,0,-1,-1,0,0)^\prime.
\end{aligned}
\]

\paragraph{Step 4.}
Using
$\mathrm{Col}(B_1)=\{U_3,U_4\}$ and  following line 4 of Algorithm 1, the two collateral-based candidate
coefficient vectors are
\[
\begin{aligned}
\mathbf{v}_0^{(B_1)}
+\mathbf{v}^{(U_1)}
-\mathbf{v}^{(U_3)}
&=
(1,0,-1,0,0,-1,-1,0,0)^\prime,\\
\mathbf{v}_0^{(B_1)}
+\mathbf{v}^{(U_1)}
-\mathbf{v}^{(U_4)}
&=
(1,0,0,-1,0,0,0,-1,-1)^\prime.
\end{aligned}
\]

\paragraph{Step 5.}
Applying the candidate coefficient vectors to
$\hat{\mathbf{y}}_{t+h\mid t}$ produces the following candidate
forecasts:
\[
\begin{aligned}
&\hat{y}^{B_1}_{t+h\mid t},\\
&\hat{y}^{U_1}_{t+h\mid t}
-\hat{y}^{B_2}_{t+h\mid t}
-\hat{y}^{B_3}_{t+h\mid t}
-\hat{y}^{B_4}_{t+h\mid t}
-\hat{y}^{B_5}_{t+h\mid t},\\
&\hat{y}^{U_2}_{t+h\mid t}
-\hat{y}^{B_2}_{t+h\mid t}
-\hat{y}^{B_3}_{t+h\mid t},\\
&\hat{y}^{U_1}_{t+h\mid t}
-\hat{y}^{U_3}_{t+h\mid t}
-\hat{y}^{B_2}_{t+h\mid t}
-\hat{y}^{B_3}_{t+h\mid t},\\
&\hat{y}^{U_1}_{t+h\mid t}
-\hat{y}^{U_4}_{t+h\mid t}
-\hat{y}^{B_4}_{t+h\mid t}
-\hat{y}^{B_5}_{t+h\mid t}.
\end{aligned}
\]

These five candidate forecasts are linearly independent and maximal:
all other valid candidate forecasts for series $B_1$ can be expressed
as their linear combinations. This example illustrates how
Algorithm~\ref{algorithm-1} constructs a maximal set of interpretable
candidate forecasts in an unbalanced hierarchical structure,
complementing the balanced hierarchy considered in the main text.

\section{Unconstrained Optimization Reformulation for Practical Implementation}
\label{sec:unconstrained}
The optimization problems in~\eqref{quadratic-formulation} and
\eqref{empirical-general-penalized} are  formulated as
equality-constrained quadratic programs. Although these constraints are useful
for theoretical characterization, they can make practical implementation
less convenient, especially when estimating penalized weights with standard
optimization routines. We therefore use a simple reparameterization that
removes the equality constraints while preserving the feasible set exactly.
The key idea is to express each weight vector as an affine transformation
of a free parameter vector, so that the sum-to-one constraint is satisfied
by construction.

Specifically, write
$\mathbf{w}_i=Z\boldsymbol{\rho}_i+\mathbf{d}$, where
\[
Z= \begin{bmatrix}
- \mathbf{1}_{n_a}^{\prime} \\
{I}_{n_a}
\end{bmatrix}
, \quad
\mathbf{d} = \begin{bmatrix}
1 \\
\mathbf{0}_{n_a}
\end{bmatrix},
\]
and $\boldsymbol{\rho}_i\in\mathbb{R}^{n_a}$ give free unconstrained reparameterization. This reparameterization automatically enforces
$\mathbf{1}_{n_a+1}^\prime\mathbf{w}_i=1$. Let
$\boldsymbol{\rho}=(\boldsymbol{\rho}_1^\prime,\ldots,\boldsymbol{\rho}_{n_b}^\prime)^\prime$.

\begin{myproposition}[Unconstrained optimization reformulation]
\label{unconstrained-q}
Under the above reparameterization, problem~\eqref{empirical-general-penalized} 
is equivalent to the following unconstrained optimization problem
\begin{equation}
\label{q-r}
\min_{\boldsymbol{\rho}} \quad
\boldsymbol{\rho}^{\prime} (\tilde{Z}^{\prime} \hat Q_h \tilde{Z}) \boldsymbol{\rho}
+ (2\, \tilde{Z}^{\prime} \hat Q_h\tilde{\mathbf{d}})^{\prime} \boldsymbol{\rho} + \lambda \mathcal{P}(\tilde Z\boldsymbol{\rho}+ \tilde{\mathbf{d}}),
\end{equation}
where $\tilde{Z} = I_{n_b} \otimes Z $ is a block-diagonal transformation matrix, $\tilde{\mathbf{d}} = \mathbf{1}_{n_b} \otimes \mathbf{d}$ is an offset vector,  and $\otimes$ denotes the Kronecker product. If $\hat Q_h\succeq0$ and $\mathcal{P}$ is convex, the problem is convex.
\end{myproposition}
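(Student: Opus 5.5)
The argument has three parts: the parameterization is a bijection onto the feasible set, the objective rewrites as displayed, and convexity is preserved under affine composition.

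\emph{Bijection.} For each $i$, note that $\mathbf{1}_{n_a+1}^{\prime}Z=-\mathbf{1}_{n_a}^{\prime}+\mathbf{1}_{n_a}^{\prime}=\mathbf{0}^{\prime}$ and $\mathbf{1}_{n_a+1}^{\prime}\mathbf{d}=1$. Hence every $\mathbf{w}_i=Z\boldsymbol{\rho}_i+\mathbf{d}$ satisfies the sum-to-one constraint.

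Conversely, take any $\mathbf{w}_i$ with $\mathbf{1}_{n_a+1}^{\prime}\mathbf{w}_i=1$ and set $\boldsymbol{\rho}_i=J\mathbf{w}_i$, the indirect weights. Then $Z\boldsymbol{\rho}_i+\mathbf{d}$ has first entry $1-\mathbf{1}_{n_a}^{\prime}J\mathbf{w}_i=[\mathbf{w}_i]_1$ and remaining entries $J\mathbf{w}_i$, so it equals $\mathbf{w}_i$. The map is injective because $Z$ contains $I_{n_a}$ as a sub-block and so has full column rank.

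Stacking over $i$, the map $\boldsymbol{\rho}\mapsto\tilde Z\boldsymbol{\rho}+\tilde{\mathbf{d}}$ with $\tilde Z=I_{n_b}\otimes Z$ and $\tilde{\mathbf{d}}=\mathbf{1}_{n_b}\otimes\mathbf{d}$ is therefore a bijection from $\mathbb{R}^{n_bn_a}$ onto $\{\mathbf{w}:\Pi\mathbf{w}=\mathbf{1}_{n_b}\}$. Minimizers then correspond one-to-one, and optimal values coincide.

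\emph{Objective.} Substitute $\mathbf{w}=\tilde Z\boldsymbol{\rho}+\tilde{\mathbf{d}}$ into the quadratic term and use the symmetry of $\hat Q_h$:
\[
\mathbf{w}^{\prime}\hat Q_h\mathbf{w}=\boldsymbol{\rho}^{\prime}\tilde Z^{\prime}\hat Q_h\tilde Z\boldsymbol{\rho}+2\tilde{\mathbf{d}}^{\prime}\hat Q_h\tilde Z\boldsymbol{\rho}+\tilde{\mathbf{d}}^{\prime}\hat Q_h\tilde{\mathbf{d}}.
\]
The last term does not depend on $\boldsymbol{\rho}$ and can be dropped without changing the minimizers. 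The middle term equals $(2\tilde Z^{\prime}\hat Q_h\tilde{\mathbf{d}})^{\prime}\boldsymbol{\rho}$. The penalty becomes $\lambda\mathcal{P}(\tilde Z\boldsymbol{\rho}+\tilde{\mathbf{d}})$. Together these give the objective in \eqref{q-r}.

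\emph{Convexity.} If $\hat Q_h\succeq0$, then $\tilde Z^{\prime}\hat Q_h\tilde Z\succeq0$, so the quadratic-plus-linear part is convex in $\boldsymbol{\rho}$. A convex $\mathcal{P}$ composed with an affine map is convex, and $\lambda\ge0$ preserves this, so the sum is convex.

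No step is a real obstacle. The one point needing care is the surjectivity of the bijection, since the equivalence of minimizers depends on it, not only on feasibility of the parameterized points.
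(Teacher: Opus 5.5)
Your proof is correct and follows the same route as the paper, which only sketches the argument: it notes that $\mathbf{w}_i=Z\boldsymbol{\rho}_i+\mathbf{d}$ enforces the sum-to-one constraint by construction and "preserves the feasible set exactly," then substitutes into the objective. Your explicit surjectivity check via $\boldsymbol{\rho}_i=J\mathbf{w}_i$, and your note that dropping $\tilde{\mathbf{d}}^{\prime}\hat Q_h\tilde{\mathbf{d}}$ shifts the optimal value by a constant without changing the minimizers, supply exactly the details the paper leaves implicit.
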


The reparameterization is especially simple for the egalitarian regularization
introduced in Section~\ref{e-penalty}. Since
\[
\mathbf{w}_i
=
\begin{bmatrix}
1-\mathbf{1}_{n_a}^{\prime}\boldsymbol{\rho}_i\\
\boldsymbol{\rho}_i
\end{bmatrix},
\]
the vector $\boldsymbol{\rho}_i$ is exactly the subvector of indirect-forecast
weights $\mathbf{w}_{i,2:(n_a+1)}$. Hence the egalitarian penalty in
\eqref{eq:egalitarian-penalty} becomes
\[
\mathcal{P}^{\mathrm{eq}}(\tilde Z\boldsymbol{\rho}+\tilde{\mathbf{d}})
=
\sum_{i=1}^{n_b}
\|
\boldsymbol{\rho}_i
-\frac{1}{n_a+1}\mathbf{1}_{n_a}\|_p^p .
\]
Therefore the unconstrained form of the egalitarian problem is
\begin{equation}
\label{q-r-egalitarian}
\min_{\boldsymbol{\rho}} \quad
\boldsymbol{\rho}^{\prime} H \boldsymbol{\rho}
+2{\mathbf{q}}^{\prime}\boldsymbol{\rho}
+ \lambda
\sum_{i=1}^{n_b}
\|
\boldsymbol{\rho}_i
-\frac{1}{n_a+1}\mathbf{1}_{n_a}\|_p^p,
\end{equation}
where
\[
H=\tilde{Z}^{\prime}\hat Q_h\tilde{Z},
\qquad
\mathbf{q}=\tilde{Z}^{\prime}\hat Q_h\tilde{\mathbf{d}}.
\]
For $p=2$, \eqref{q-r-egalitarian} is an unconstrained smooth convex quadratic
problem. Let
$\boldsymbol{\tau}
=\mathbf{1}_{n_b}\otimes
\frac{1}{n_a+1}\mathbf{1}_{n_a}$.
Ignoring the additive constant $\lambda\boldsymbol{\tau}^{\prime}\boldsymbol{\tau}$,
the objective can be written as
\[
\boldsymbol{\rho}^{\prime}(H+\lambda I_{n_bn_a})\boldsymbol{\rho}
+2(\mathbf{q}-\lambda\boldsymbol{\tau})^{\prime}\boldsymbol{\rho}.
\]
Thus, when $H+\lambda I_{n_bn_a}$ is nonsingular, the eRidge solution is
\[
\hat{\boldsymbol{\rho}}
=-(H+\lambda I_{n_bn_a})^{-1}(\mathbf{q}-\lambda\boldsymbol{\tau}),
\qquad
\hat{\mathbf{w}}=\tilde Z\hat{\boldsymbol{\rho}}+\tilde{\mathbf{d}}.
\]
For $p=1$, \eqref{q-r-egalitarian} is an unconstrained convex but nonsmooth
problem. It has no equality constraints after reparameterization and can be
solved using standard algorithms for shifted LASSO-type objectives, such as
proximal-gradient, coordinate-descent, or ADMM methods. The \emph{Separate}
implementation is obtained by replacing $\hat Q_h$ with
$\hat Q_h^{\mathrm{bd}}$ in the definitions of $H$ and $\mathbf{q}$; in that case
\eqref{q-r-egalitarian} decomposes into $n_b$ independent low-dimensional
problems in the vectors $\boldsymbol{\rho}_i$.

\section{Additional Details for the Empirical Evaluation}
\label{app:empirical-details}

\subsection{Dataset Plots}

Figure~\ref{elct} displays the electricity generation series across
hierarchical levels. Aggregate total generation has pronounced weekly
patterns and seasonal peaks in winter and summer, while several
bottom-level series, such as wind and distillate, display more irregular
patterns.

\begin{figure}[htbp]
\centering
\includegraphics[width=1\textwidth]{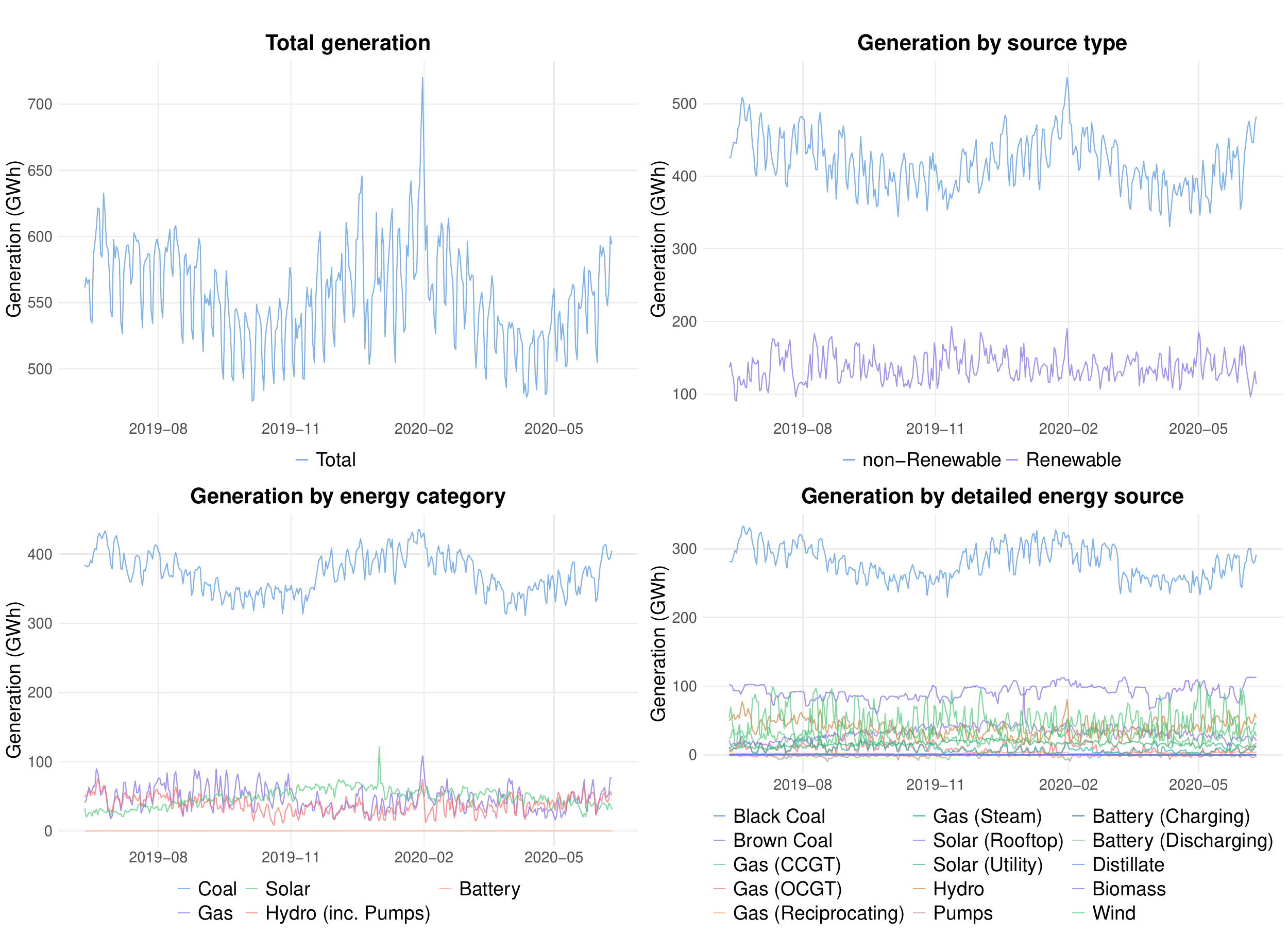}
\vspace{-10pt}
\caption{\centering Australian electricity generation from different sources of energy.}
\label{elct}
\end{figure}

Figure~\ref{labour} shows selected series from the grouped labor force
dataset. The top-level unemployment series has regular seasonality and a
sharp rise around the early-2020 COVID-19 restrictions, while lower-level
duration and state/territory series display more heterogeneous dynamics.

\begin{figure}[ht]
\centering
\includegraphics[width=1\textwidth]{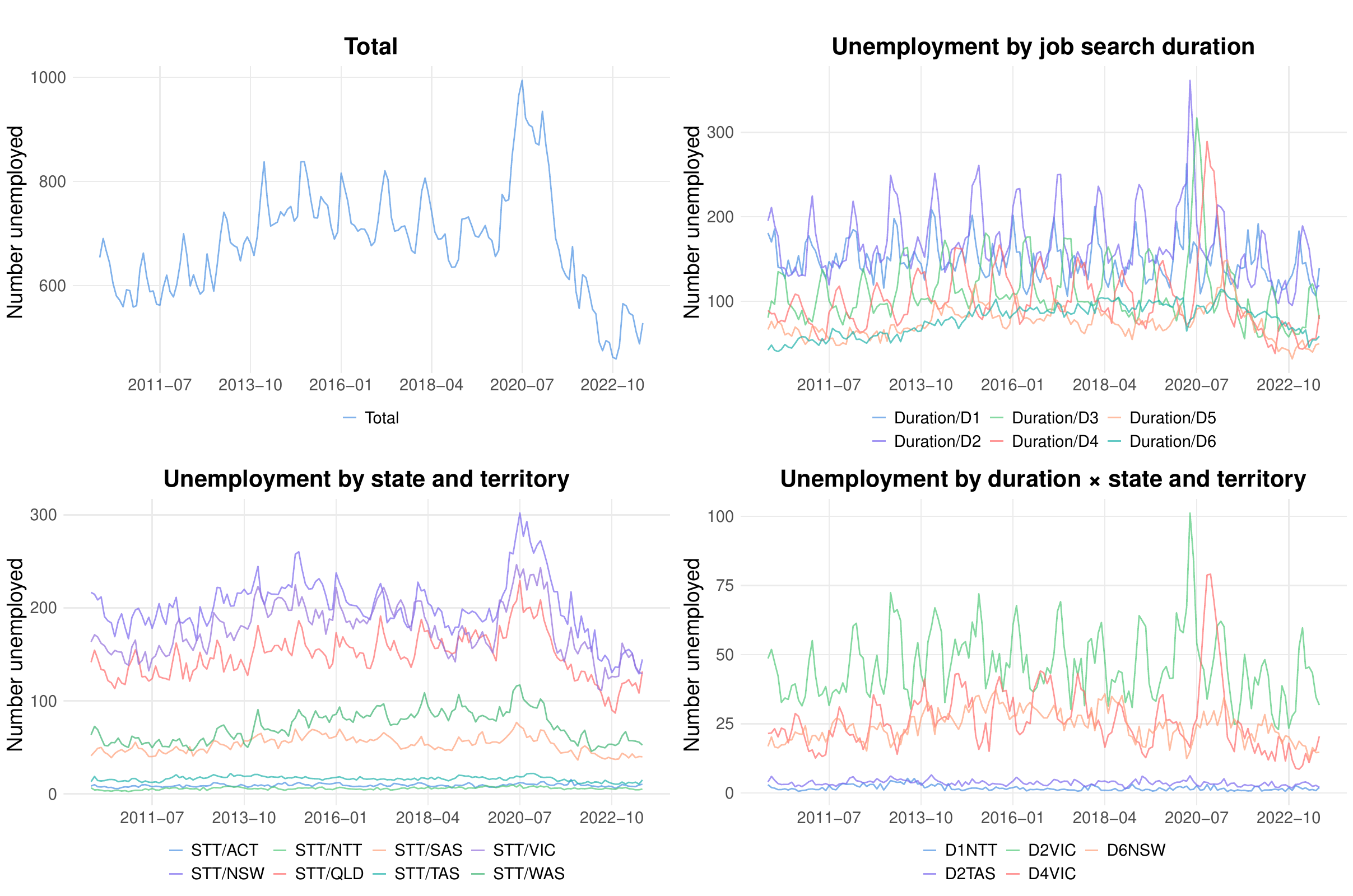}
\vspace{-10pt}
\caption{\centering Australian unemployment, disaggregated by duration of
job search and by state and territory}
\label{labour}
\end{figure}

\subsection{Implementation Details for the Empirical Evaluation}
\label{implem-details}

This subsection explains how forecasts are generated, reconciled, tuned, and
evaluated in the empirical studies. We first describe the rolling-origin
scheme and the role played by each part of the data. We then explain the
training-window estimation and tuning steps used to obtain the reconciliation
inputs at each origin. Finally, we describe how the testing data are forecast,
reconciled, and evaluated by RMSE.

\paragraph{Rolling-origin forecasting design.}
The last 20\% of observations were reserved as the outer test period. We use a
fixed-length rolling window: for origin $r$, models are estimated using
$\{\mathbf{y}_r,\mathbf{y}_{r+1},\ldots,\mathbf{y}_{r+T-1}\}$, and forecasts are
evaluated over the following $H$ periods
$\{\mathbf{y}_{r+T},\ldots,\mathbf{y}_{r+T+H-1}\}$. The origins are chosen so
that these forecast evaluation periods lie in the final holdout portion of the
sample. For the electricity dataset, we use $T=293$, $H=7$, and $R=67$
rolling origins. For the labor force dataset, we use $T=131$, $H=12$, and
$R=21$.

This rolling scheme separates the roles of the data. The rolling training
window is used for model fitting, covariance estimation, weight estimation, and
inner validation. The $H$ periods following the training window form the outer
holdout horizon at origin $r$; they are used only for the final testing-data
evaluation. Thus, the inner validation split is part of the
reconciliation-weight tuning step, whereas all reported RMSE values are
computed on the outer rolling-origin holdout periods.

\paragraph{Base forecasts and reconciliation at each origin.}
Within each training window, base forecasts are generated independently for every series using \code{auto.arima()} from the \proglang{R} package \pkg{forecast}~\citep{Rforecast}, with its default automatic model selection~\citep{hyndman2008automatic}. Each series is modelled with a weekly seasonal period of 7 for the daily electricity series and an annual period of 12 for the monthly labor force series, matching the pronounced weekly cycles and winter/summer peaks of the electricity data and the strong annual seasonality of the labor force data shown in Figures~\ref{elct} and~\ref{labour}. The procedure first selects the differencing orders, taking the number of seasonal differences from an STL-based seasonal-strength measure~\citep{wang2006characteristic} and the number of ordinary differences from repeated KPSS stationarity tests~\citep{kwiatkowski1992testing}; the seasonal differencing absorbs the recurring weekly and annual cycles, while the ordinary differencing removes the slow level shifts in electricity generation and the abrupt 2020 COVID-19 break in the labor force series, leaving an approximately stationary input. Conditional on the differencing, \code{auto.arima()} performs a stepwise search over the remaining orders, namely the numbers of non-seasonal and seasonal autoregressive and moving-average terms (together with the inclusion of a constant), restricted to stationary and invertible specifications: each candidate model is fitted by maximum likelihood and scored by its AICc, and the specification with the smallest AICc is selected. These fitted models generate one-step-ahead in-sample base forecasts $\{\hat{\mathbf{y}}_{r+1|r},\ldots,\hat{\mathbf{y}}_{r+T-1|r+T-2}\}$ and their errors $\{\hat{\mathbf{e}}_{r+1|r},\ldots,\hat{\mathbf{e}}_{r+T-1|r+T-2}\}$. These in-sample errors are used to estimate covariance matrices and, for penalized procedures, to tune $\lambda$.

The combination weights solve the penalized quadratic
program~\eqref{empirical-general-penalized}. For covariance estimation, we use
the proportional-covariance approximation reviewed in Section~\ref{framework}:
$\Sigma_h=\kappa_h\Sigma_1$ with $\kappa_h=1$. Thus the same estimated
covariance matrix $\hat{\Sigma}_1$ is used for all forecast horizons. Within
each rolling window, $\hat{\Sigma}_1$ is estimated from the one-step-ahead
in-sample base forecast errors. With $\hat{\Sigma}_h=\hat{\Sigma}_1$ for all
$h$, the induced matrix $\hat Q_h$ is common across horizons; we retain the
subscript $h$ only to match the notation in the theoretical formulation.

Among the four covariance specifications, the \textit{Factor} estimator follows
the shrinkage estimator of \citet{ledoit2003improved},
obtained from \code{covEstimation(..., control = list(type = "large"))} in the
\pkg{RiskPortfolios} package~\citep{ardia2017riskportfolios}. When
$\lambda=0$, weights are estimated with the full matrix $\hat Q_h$; by
Proposition~\ref{prop:equivalence}, the \textit{Separate} specification yields
identical weights, so a single set of unpenalized results is reported under
\textit{Joint} for all four covariance specifications. The unpenalized and
eRidge cases are obtained in closed form. The eLASSO case is reformulated into an equivalent unconstrained optimization problem in Section~\ref{sec:unconstrained} and solved using the
\pkg{OSQP} quadratic-program solver~\citep{osqp}. As benchmarks, \textit{CCC}
and \textit{LCC} are computed with \code{cslcc()} from the \pkg{FoReco}
package~\citep{FoReco}.

For penalized methods, the regularization parameter $\lambda$ is tuned
separately at each rolling origin and for each method. Because the eRidge and
eLASSO penalties operate on different scales, each uses its own logarithmically
spaced grid,
$\lambda_{\mathrm{eRidge}}\in\{0\}\cup[10^{-1},10^{5}]$ and
$\lambda_{\mathrm{eLASSO}}\in\{0\}\cup[10^{-2},10^{4}]$, each with 26 values at a
step of $0.25$ in $\log_{10}$.

For each rolling origin, the one-step-ahead in-sample forecast-error dates
within the training window are randomly split into an inner training set and an
inner validation set in an 80/20 proportion. This random split is used for
computational convenience in tuning the reconciliation penalty, treating the
in-sample one-step-ahead error dates as validation observations within the
rolling training window. It does not introduce data leakage into the reported
accuracy measures, because no observations from the outer testing horizon are
used in this tuning step and all final accuracy results are computed on the
out-of-sample rolling-origin holdout periods. For each candidate $\lambda$, the
inner training set is used to estimate $\hat{\Sigma}_1$ and the corresponding
combination weights. These weights are then applied to the one-step-ahead base
forecasts in the inner validation set, and $\lambda$ is selected by minimizing
the average per-series validation RMSE, computed separately for each series and
then averaged across all $n$ series so that every series receives equal weight
regardless of scale. The final weights are then re-estimated on the full
training window using the selected $\lambda$. These steps produce the
procedure-specific reconciliation inputs and final weights for origin $r$,
using only the rolling training window.

\paragraph{Testing data evaluation.}
After the training-window estimation and tuning steps are completed, the fitted
ARIMA models produce out-of-sample base forecasts for the outer holdout horizon,
\[
\{\hat{\mathbf{y}}_{r+T\mid r+T-1},\ldots,
\hat{\mathbf{y}}_{r+T+H-1\mid r+T-1}\}.
\]
These forecasts are then transformed into the final testing forecasts for each
procedure. Let $\mathcal{G}$ denote the set of forecasting or reconciliation
procedures evaluated in the empirical analysis. We write $g\in\mathcal{G}$ for
a procedure label, such as $g=\mathrm{Base}$,
$g=\mathrm{Bottom\text{-}up}$, $g=\mathrm{LCC}$, or one of the proposed
combination-based reconciliation variants. For each origin $r$ and horizon
$h=1,\ldots,H$, let $\tilde{\mathbf{y}}_{r,h}^{[g]}$ denote the final testing
forecast vector for all series produced by procedure $g$. This compact notation
corresponds to the longer time-indexed forecast
$\tilde{\mathbf{y}}_{r+T+h-1\mid r+T-1}^{[g]}$, whose target time is
$r+T+h-1$ and whose forecast origin is $r+T-1$. The square-bracket superscript
$[g]$ is a procedure label.

For the unreconciled \textit{Base} procedure,
\[
\tilde{\mathbf{y}}_{r,h}^{[\mathrm{Base}]}
:=
\hat{\mathbf{y}}_{r+T+h-1\mid r+T-1}.
\]
Here the tilde denotes the final output used for evaluation; for
$g=\mathrm{Base}$, it means no reconciliation is applied. For a proposed
combination-based reconciliation procedure $g$, the evaluated forecast takes the form
\[
\tilde{\mathbf{y}}_{r,h}^{[g]}
=
S\Phi(\hat{\mathbf{w}}_{r}^{[g]})C
\hat{\mathbf{y}}_{r+T+h-1\mid r+T-1},
\]
where $\hat{\mathbf{w}}_{r}^{[g]}$ is the final weight vector estimated from
the rolling training window at origin $r$. Other benchmark procedures are
treated analogously: regardless of their internal construction, their final
forecasts are represented generically by $\tilde{\mathbf{y}}_{r,h}^{[g]}$ for
evaluation.

We assess forecast accuracy using RMSE at each hierarchy level. For rolling
origin $r=1,\ldots,R$, hierarchy level $\ell=1,\ldots,L$, and procedure
$g\in\mathcal{G}$, let
$E_{r,\ell}^{[g]}\in\mathbb{R}^{n_\ell\times H}$ collect the forecast errors
for all $n_\ell$ series at level $\ell$ and all $H$ forecast horizons. Writing
$\mathbf{y}_{t}^{\ell}$ for the subvector of realized values at level $\ell$
and $\tilde{\mathbf{y}}_{r,h,\ell}^{[g]}$ for the corresponding level-$\ell$
subvector of $\tilde{\mathbf{y}}_{r,h}^{[g]}$, define
\[
E_{r,\ell}^{[g]} =
\bigl[
\mathbf{y}_{r+T}^{\ell}
- \tilde{\mathbf{y}}_{r,1,\ell}^{[g]},\;
\ldots,\;
\mathbf{y}_{r+T+H-1}^{\ell}
- \tilde{\mathbf{y}}_{r,H,\ell}^{[g]}
\bigr].
\]
The level-$\ell$ RMSE for procedure $g$ averages the origin-level RMSEs:
\[
\mathrm{RMSE}^{[g]}(\ell) = \frac{1}{R}\sum_{r=1}^{R}
\frac{\|E_{r,\ell}^{[g]}\|_F}{\sqrt{n_\ell H}}.
\]
The corresponding RMSE skill score is the percentage improvement over the
unreconciled \textit{Base} forecasts:
\[
\mathrm{Skill}^{[g]}(\ell)
=
100\times
\left(
1-
\frac{\mathrm{RMSE}^{[g]}(\ell)}
{\mathrm{RMSE}^{[\mathrm{Base}]}(\ell)}
\right).
\]
In the electricity hierarchy, levels correspond to the total, source type,
energy category, and detailed energy-source components. In the labor force
grouped structure, they correspond to the top level, duration groups, STT
groups, and duration~$\times$~STT bottom-level series. The RMSE at level $\ell$
therefore aggregates errors across comparable series within that level, across
forecast horizons, and across rolling origins.
Figure~\ref{fig:rolling-scheme} summarizes both the outer rolling-origin
testing design and the inner validation loop for $\lambda$.

\definecolor{cTrain}{HTML}{595959}\definecolor{cTest}{HTML}{D95F0E}
\definecolor{cData}{HTML}{EAEFF4}\definecolor{cProc}{HTML}{F0F0F0}
\definecolor{cEst}{HTML}{E6F4E6}\definecolor{cFc}{HTML}{FDE0DC}
\providecommand{\splitbar}{\tikz[baseline=-0.5ex]{  \foreach \k in {0,...,7}{\fill[cTrain!22](\k*3pt,0) rectangle (\k*3pt+2.4pt,6.5pt);\draw[cTrain!70,line width=0.15pt](\k*3pt,0) rectangle (\k*3pt+2.4pt,6.5pt);}  \foreach \k in {8,9}{\fill[cTest!40](\k*3pt,0) rectangle (\k*3pt+2.4pt,6.5pt);\draw[cTest!80,line width=0.15pt](\k*3pt,0) rectangle (\k*3pt+2.4pt,6.5pt);}}}

\begin{figure}[htbp]\centering
\begin{adjustbox}{width=\textwidth, center}
\begin{tikzpicture}[
  io/.style   ={draw=cTrain!70, rounded corners=2pt, fill=cProc, align=center, inner sep=4pt, font=\small},
  proc/.style ={draw=cTrain!70, rounded corners=2pt, fill=cProc, align=center, inner sep=4pt, font=\small},
  est/.style  ={draw=cTrain!70, rounded corners=2pt, fill=cProc, align=center, inner sep=4pt, font=\small},
  pen/.style  ={draw=cTrain!70, rounded corners=2pt, fill=cProc, align=center, inner sep=4pt, font=\small},
  fc/.style   ={draw=cTrain!70, rounded corners=2pt, fill=cProc, align=center, inner sep=4pt, font=\small},
  ar/.style   ={-{Latex[length=2.6mm]}, semithick, gray!65},
  lbl/.style  ={font=\small\itshape, text=gray!55!black}]

\begin{scope}
  \def\Lx{13.0}\def\splitx{10.4}\def\T{10.4}\def\fc{0.18}\def\bh{0.46}
  \fill[cTrain!10] (0,0) rectangle (\splitx,\bh);
  \fill[cTest!14] (\splitx,0) rectangle (\Lx,\bh);
  \draw[gray!55] (0,0) rectangle (\Lx,\bh);
  \node[font=\normalsize\itshape, text=gray!55!black, anchor=west] at (0.05,\bh+0.40) {full sample of $n$ time series ($t \rightarrow$)};
  \node[font=\small\bfseries, text=cTrain] at (\splitx/2,\bh/2) {estimation period ($80\%$)};
  \node[font=\small\bfseries, text=cTest!85!black] at ({(\splitx+\Lx)/2},\bh/2) {test set ($20\%$)};
  \draw[dashed, cTest!80, thick] (\splitx,-3.15) -- (\splitx,\bh+0.15);
  \newcommand{\rollwin}[3]{    \fill[cTrain!20] (#2-\T,#1) rectangle (#2,#1+\bh);
    \draw[cTrain!70] (#2-\T,#1) rectangle (#2,#1+\bh);
    \foreach \k in {0,...,6}{\fill[cTest!30] ($(#2+\k*\fc,#1)$) rectangle ($(#2+\k*\fc+\fc,#1+\bh)$);
       \draw[cTest!70,line width=0.2pt] ($(#2+\k*\fc,#1)$) rectangle ($(#2+\k*\fc+\fc,#1+\bh)$);}
    \draw[dotted,gray!55,line width=0.3pt] (#2+7*\fc+0.05,#1+\bh/2) -- (13.35,#1+\bh/2);
    \node[font=\small,anchor=west,text=gray!55!black] at (13.48,#1+\bh/2) {#3};}
  \rollwin{-0.80}{10.40}{$r{=}1$}
  \rollwin{-1.58}{10.78}{$r{=}2$}
  \node[font=\large] at (11.55,-2.18) {$\vdots$};
  \rollwin{-2.92}{11.88}{$r{=}R$}
  \node[font=\small, text=cTrain!65!black] at (4.8,-0.57) {fixed training window $T$};
  \draw[ar] (0.20,-1.02) -- (0.20,-2.66);
  \node[lbl, rotate=90, anchor=south] at (0.0,-1.84) {roll forward by 1 step};
  \begin{scope}[shift={(3.05,-3.55)}]
    \fill[cTrain!20] (0,0) rectangle (0.36,0.26); \draw[cTrain!70](0,0) rectangle (0.36,0.26);
    \node[font=\small,anchor=west] at (0.44,0.13) {training sample};
    \fill[cTest!30] (3.0,0) rectangle (3.36,0.26); \draw[cTest!70](3.0,0) rectangle (3.36,0.26);
    \node[font=\small,anchor=west] at (3.44,0.13) {test set};
  \end{scope}
  \node[font=\bfseries] at (-0.5,\bh+0.30) {(a)};
\end{scope}

\begin{scope}[shift={(0,-6.4)}]
  \node[font=\bfseries] at (-0.5,2.15) {(b)};
  \node[font=\normalsize\itshape, text=gray!55!black, anchor=west] at (0.05,2.2) {at each rolling origin $r$:};
  \node[io, text width=2.8cm] (win) at (0,-0.95) {training sample\\$\{\mathbf y_r,\dots,\mathbf y_{r+T-1}\}$};
  \node[proc, text width=1.9cm, right=7mm of win] (arima) {fit ARIMA to each series};
  \coordinate (jx) at ($(arima.east)+(0.45,0)$);
  \node[io, text width=2.45cm, right=10mm of arima, yshift=20mm] (err)
     {in-sample one-step errors\\$\{\hat{\mathbf e}_{r+1|r},\dots,$\\$\hat{\mathbf e}_{r+T-1|r+T-2}\}$};
  \node[fc, text width=2.6cm, right=10mm of arima, yshift=-20mm] (base)
     {out-of-sample base forecasts\\$\{\hat{\mathbf y}_{r+T|r+T-1},$\\$\ldots,$\\$\hat{\mathbf y}_{r+T+H-1|r+T-1}\}$};
  \node[est, text width=2.7cm, right=13mm of err] (unpen)
     {\textbf{unpenalized} $(\lambda{=}0)$\\solve \eqref{empirical-general-penalized}, full $\hat Q_h$;\\Joint$\equiv$Sep (Prop~\ref{prop:equivalence})\\$\Rightarrow\hat{\mathbf w}$};
  \node[pen, text width=2.7cm, below=6mm of unpen] (penb)
     {\textbf{penalized} $(\lambda{>}0)$\\solve \eqref{empirical-general-penalized};\\$\lambda$ via inner validation\\$\Rightarrow\hat{\mathbf w}$};
  \node[proc, text width=2.9cm, right=8mm of unpen, yshift=-20mm] (recon)
     {reconcile\\
      $\tilde{\mathbf y}_{r+T+h-1\mid r+T-1}$\\
      $=S\,\Phi(\hat{\mathbf w})\,C\,$\\
      $\hat{\mathbf y}_{r+T+h-1\mid r+T-1},$\\
      $h=1,\dots,H$};
  \node[io, text width=1.5cm, below=8mm of recon] (eval) {RMSE per level};
  \draw[ar] (win) -- (arima);
  \draw[semithick,gray!65] (arima.east) -- (jx);
  \draw[ar] (jx) |- (err.west);
  \draw[ar] (jx) |- (base.west);
  \coordinate (fk) at ($(err.east)+(0.65,0)$);
  \draw[semithick,gray!65] (err.east) -- (fk);
  \draw[ar] (fk) |- (unpen.west);
  \draw[ar] (fk) |- (penb.west);
  \draw[ar] (unpen.east) -| (recon.north);
  \draw[ar] (penb.east) -- ++(0.45,0) |- (recon.west);
  \draw[ar] (base.south) -- ++(0,-0.5) -| ([xshift=6mm]recon.east) -- (recon.east);
  \draw[ar] (recon) -- (eval);
    \draw[rounded corners=3pt, draw=cTrain!65, fill=cProc] (0.8,-8.8) rectangle (12.95,-5.35);
  \node[anchor=west, font=\small\bfseries, text=cTrain!75!black] at (1.5,-5.62)
       {inner validation for $\lambda$ (per origin, within the training data)};
  \node[anchor=north, font=\footnotesize, align=center] at (2.8,-6.02)
       {shuffle one-step errors,\\split $80/20$};
  \node at (2.8,-7.8) {\splitbar};
  \draw[ar] (4.1,-7.35) -- (4.85,-7.35);
  \begin{scope}[shift={(5.6,-8.05)}]
    \draw[->,gray!70,line width=0.4pt] (0,0) -- (1.95,0);
    \draw[->,gray!70,line width=0.4pt] (0,0) -- (0,1.3);
    \draw[cTrain,line width=0.9pt,smooth] plot coordinates
      {(0.15,1.12)(0.50,0.62)(0.85,0.35)(1.08,0.26)(1.38,0.42)(1.70,0.74)(1.92,1.05)};
    \fill[cTest] (1.08,0.26) circle (1.6pt);
    \draw[dashed,cTest!75,line width=0.4pt] (1.08,0.26) -- (1.08,0);
    \node[font=\footnotesize,text=cTest!80!black,anchor=north] at (1.08,-0.02) {$\lambda^\star$};
    \node[font=\footnotesize,rotate=90,anchor=south,text=gray!55!black] at (-0.14,0.65) {RMSE};
  \end{scope}
  \node[align=left, font=\footnotesize, anchor=north west, text width=4.7cm] at (7.9,-5.75)
       {for each $\lambda$:\\
        $\;$80\% $\to\hat\Sigma_1,\hat{\mathbf w}(\lambda)$\\
        $\;$20\% $\to$ one-step recon.\ RMSE\\[1pt]
        $\lambda^\star{=}\arg\min_\lambda \mathrm{RMSE}(\lambda)$,\\
        refit $\hat{\mathbf w}$ on full sample};
  \draw[ar, dashed, cTrain!60] (penb.south) -- (penb.south|-{(0,-5.35)});
\end{scope}

\end{tikzpicture}
\end{adjustbox}
\caption{Rolling-origin forecasting and validation scheme:
\textbf{(a)} the rolling design on the timeline; \textbf{(b)} the estimation
pipeline applied at each origin.}
\label{fig:rolling-scheme}
\end{figure}

\subsection{Candidate-Weight Profiles}
\label{app:weight-interpretation}

This subsection visualizes the combination weight profiles to characterize
the contributions of the direct, ancestor, and collateral candidate forecasts,
defined in Section~\ref{constructing-algorithm}, to the reconciled
bottom-level forecasts. 
The unpenalized
\textit{Factor+Joint} configuration serves as the reference; by
Proposition~\ref{prop:equivalence}, it also represents the unpenalized
\textit{Factor+Separate} solution, so we label it as \textit{Factor+Joint(Separate)} in this part. We compare this reference with the
penalized 
\textit{Factor+Separate+eLASSO} configuration. Results for the
penalized 
\textit{Factor+Separate+eRidge} configuration are similar and hence omitted.   
The same diagnostic can be also 
constructed for the remaining covariance estimators and optimization
configurations.

For each method configuration $g$, let $\widehat w_{i,j,r}^{[g]}$ denote the estimated
combination weight on candidate forecast $j$ for bottom-level series $i$ at rolling
origin $r$. We summarize these weights by their rolling-origin mean:
\begin{equation}
\overline w_{i,j}^{[g]}
=
\frac{1}{R}\sum_{r=1}^{R}\widehat w_{i,j,r}^{[g]},
\qquad j=0,\ldots,n_a,
\label{eq:ec-mean-candidate-weight}
\end{equation}
where $R=67$ for the electricity application and $R=21$ for the labor force application. Thus,
$\overline w_{i,j}^{[g]}$ is the average signed weight assigned to candidate $j$
for series $i$ across rolling origins. Candidate $j=0$ is the direct
candidate, whereas $j=1,\ldots,n_a$ index the indirect candidates, including both ancestor-based candidates and collatorial-based candidates.

In Figures~\ref{fig:ec-weight-energy} and
\ref{fig:ec-weight-labour}, each row corresponds to a bottom-level series and
each point corresponds to one candidate's rolling-origin mean signed weight, for the electricity dataset and labor force dataset respectively. Colors
identify direct, ancestor, and collateral candidates. The solid and dashed
reference lines indicate zero and the equal-weight benchmark
$1/(n_a+1)$, respectively. From left to right, the panels show the
unpenalized \textit{Factor+Joint(Separate)} configuration and 
\textit{Factor+Separate+eLASSO}. The electricity panels use
method-specific horizontal scales, so comparisons across these panels should
be based on the axis values rather than on apparent panel widths. The labor force
panels share a common horizontal scale.

\begin{figure}[!htbp]
  \centering
  \includegraphics[width=\textwidth,height=0.72\textheight,keepaspectratio]
    {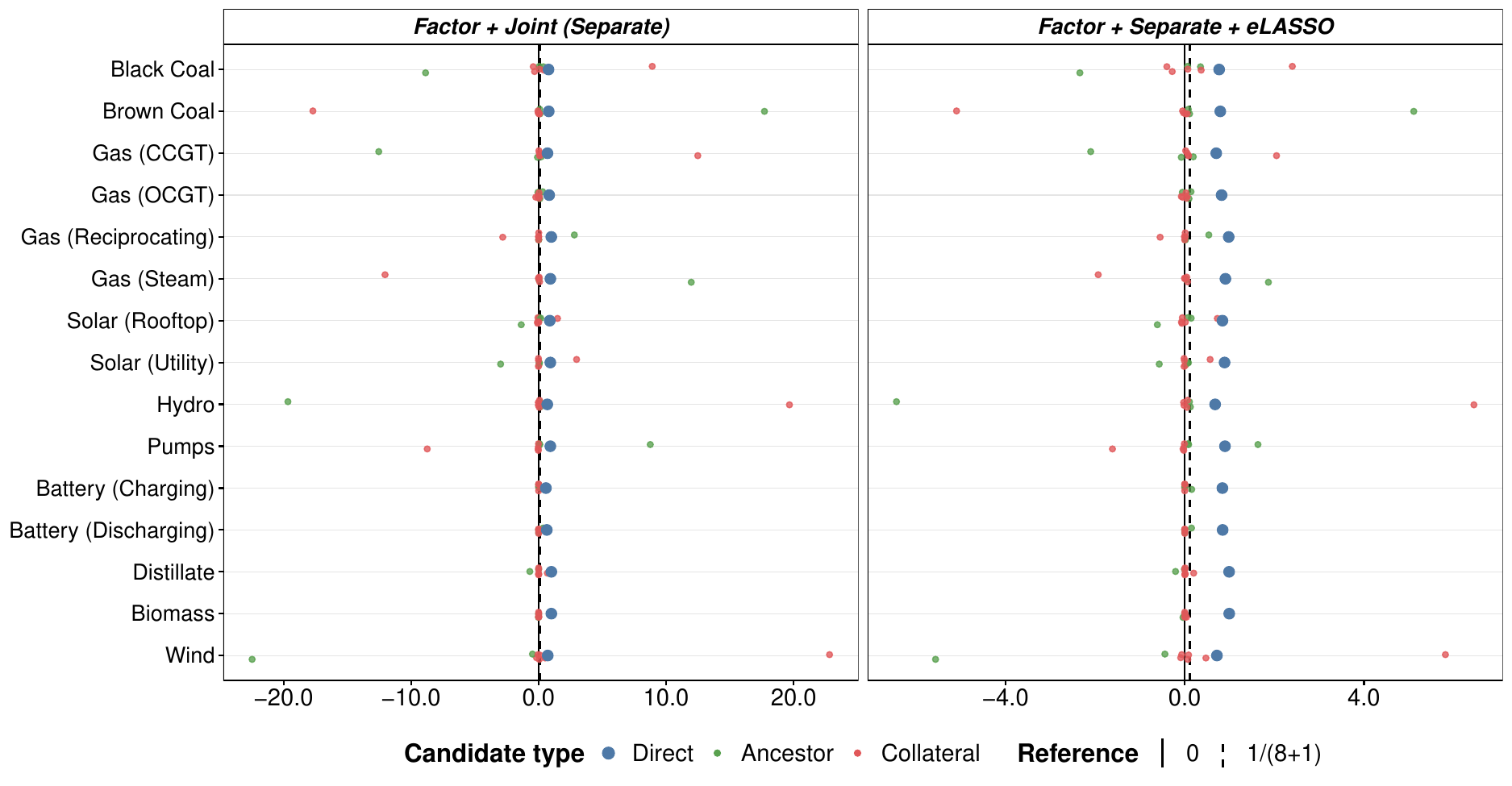}
  \caption{Rolling-origin mean signed candidate weights for the 15 bottom-level electricity series.}
  \label{fig:ec-weight-energy}
  \end{figure}

\paragraph{Electricity application.}
Figure~\ref{fig:ec-weight-energy} shows substantial heterogeneity in the
candidate-weight profiles for the electricity application. In the unpenalized \textit{Factor+Joint(Separate)}
configuration, the direct candidate receives a positive mean weight above the
equal-weight benchmark for all bottom-level series, but several indirect
candidates have large positive or negative mean weights. Thus, reconciliation
does not simply rescale the direct forecasts; it combines aggregate-derived
information in markedly different ways across energy sources.
\textit{Factor+Separate+eLASSO} reduces the magnitude of the most extreme
weights for the indirect candidates. However, the weight profiles for the  penalized configuration remain clearly
series-specific: direct weights remain prominent, and both ancestor and
collateral candidates retain nonzero contributions for selected series.
Hence, egalitarian regularization moderates the most extreme allocations
without imposing a common weight profile across the electricity hierarchy.

\begin{figure}[!htbp]
  \centering
  \includegraphics[width=\textwidth,height=0.80\textheight,keepaspectratio]
    {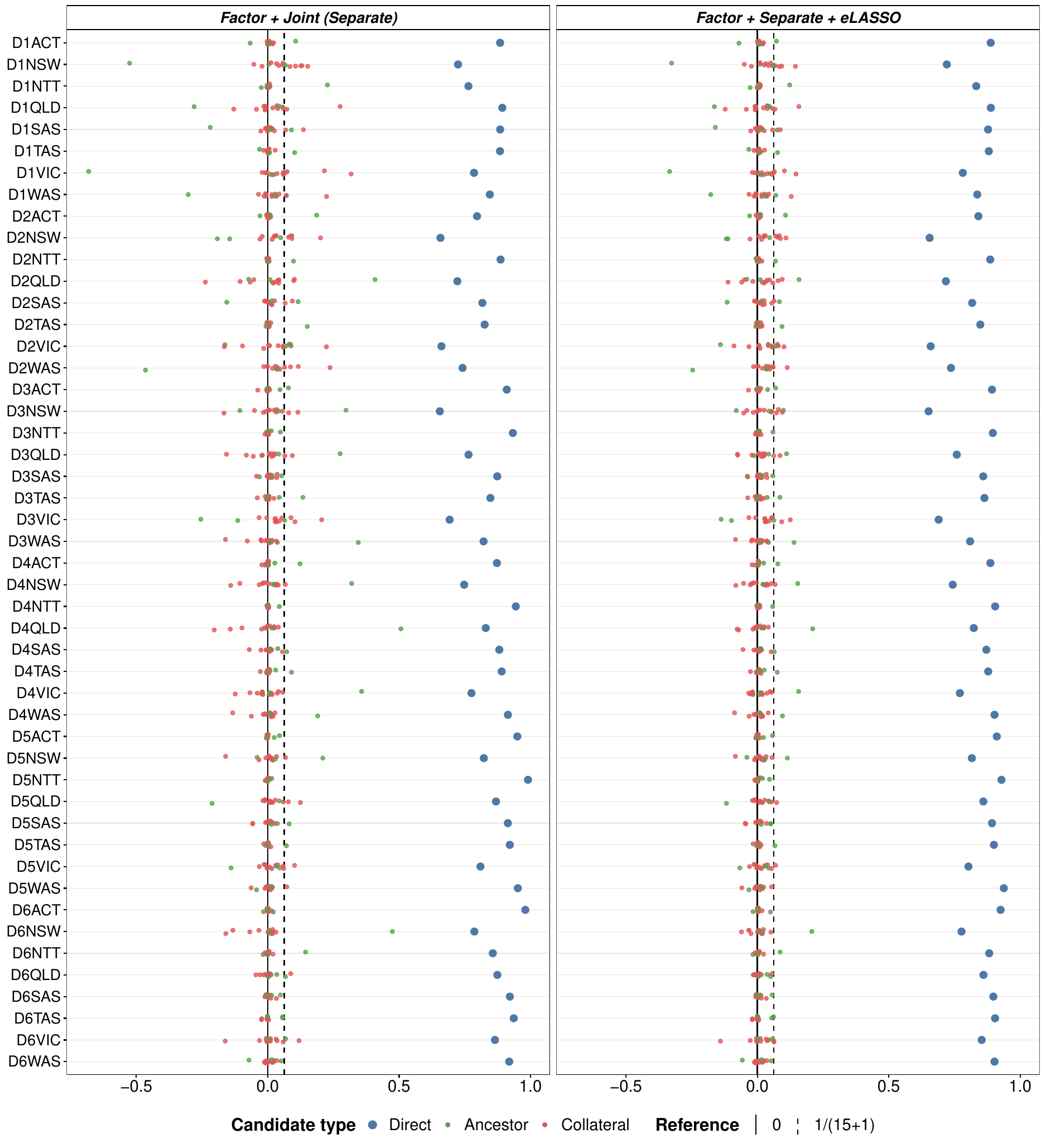}
  \caption{Rolling-origin mean signed candidate weights for the 48 bottom-level duration--STT labor force series.}
  \label{fig:ec-weight-labour}
  \end{figure}

\paragraph{Labor force application.}
Figure~\ref{fig:ec-weight-labour} displays a more concentrated pattern. The
labor force data are cross-classified by job-search duration and state or territory
(STT), so each bottom-level series has both ancestor and collateral candidates
in addition to its direct candidate. In the unpenalized \textit{Factor+Joint(Separate)} configuration, the
direct candidate receives the largest mean weight for most duration--STT
cells, whereas the indirect weights are generally closer to zero. Nevertheless,
selected ancestor and collateral candidates retain non-negligible weights, and
their allocations vary across cells. Because the weight profiles for the labor force unpenalized  \textit{Factor+Joint(Separate)} configuration are already relatively concentrated,
the visual effects of 
\textit{Factor+Separate+eLASSO} are more modest than
in the electricity application. The weight profiles of this penalized configuration still exhibit heterogeneous indirect
allocations across duration--STT cells, indicating that grouped information is
used selectively rather than through a common allocation.

\paragraph{Summary.} Overall, the two applications illustrate the same qualitative mechanism:
the direct forecast remains important, while reconciliation selectively draws
on indirect candidate forecasts. They differ in the extent of this adjustment:
the electricity application exhibits large and opposing indirect weights that are materially
tempered by regularization, whereas the labor force application has more concentrated weights
and correspondingly smaller regularization effects.

\subsection{Computational Comparison of \textit{Joint} and \textit{Separate} Implementations}
\label{app:computation-time}

This subsection quantifies the  computational benefit of the
\textit{Separate} implementation relative to \textit{Joint}.
The role of the \textit{Separate} implementation differs between the
unpenalized and penalized settings. For the unpenalized problem,
Proposition~\ref{prop:equivalence} shows that \textit{Joint} and
\textit{Separate} produce the same estimated weights. Their
difference is purely computational: \textit{Joint} solves the full
problem, whereas \textit{Separate} exploits its decomposition into
series-specific subproblems. With weight penalization, the two implementations generally produce different
weights. Nevertheless, the empirical results in
Section~\ref{sec:empirical} show that their forecasting performance is
closely comparable, with \textit{Separate} often achieving even better performance. This part further illustrates the computational benefit of \textit{Separate}, again using the \textit{Factor}-based covariance estimator as an example.

After imposing the sum-to-one constraints, the \textit{Joint} problem has
120 free parameters for the electricity application and 720 for the labor force application. By contrast,
\textit{Separate} decomposes these problems into 15 independent
eight-dimensional subproblems for the electricity application and 48 independent
fifteen-dimensional subproblems for the labor force application. Table~\ref{tab:ec-computation-time}
quantifies the resulting serial computation-time differences.
Specifically, this table reports mean serial elapsed seconds per
rolling origin for the \textit{Factor}-based implementations, with standard
deviations in parentheses, over 67 electricity rolling origins and 21 labor force rolling origins.
Timings include covariance estimation, validation-based tuning when
applicable, and final weight estimation, but exclude base-model estimation,
base-forecast generation, file input/output, and out-of-sample evaluation.

\begin{table}[!htbp]
\centering
\caption{Mean serial computation time per rolling origin for the \textit{Factor}-based implementations, with standard deviation in parentheses.}
\label{tab:ec-computation-time}
\small
\begin{tabular}{llccc}
\toprule
Dataset & Weight penalty &
\textit{Joint} (s) &
\textit{Separate} (s) &
\makecell{Time ratio\\(\textit{Sep.}/\textit{Joint}, \%)} \\
\midrule
\multirow{3}{*}{Electricity}
 & \textit{None}   & 0.017 (0.001) & 0.012 (0.001) & 69.0 \\
 & \textit{eRidge} & 0.155 (0.013) & 0.044 (0.001) & 28.2 \\
 & \textit{eLASSO} & 1.881 (0.884) & 1.838 (0.213) & 97.7 \\
\midrule
\multirow{3}{*}{Labor force}
 & \textit{None}   & 0.930 (0.098) & 0.142 (0.009) & 15.2 \\
 & \textit{eRidge} & 22.431 (1.959) & 0.315 (0.028) & 1.4 \\
 & \textit{eLASSO} & 82.265 (9.506) & 6.196 (0.683) & 7.5 \\
\bottomrule
\end{tabular}
\end{table}

To ensure comparability, all timing experiments were conducted on the same
MacBook Pro (16-inch, 2019) running macOS Big Sur 11.2.3, with a 2.3 GHz
8-core Intel Core i9 processor and 32 GB of memory. The code was run in R
through RStudio under a strictly serial configuration with one BLAS/OpenMP
thread. Elapsed time was measured once for each method--origin pair using
\texttt{proc.time()[["elapsed"]]}.

Table~\ref{tab:ec-computation-time} shows that \textit{Separate} is faster
than \textit{Joint} in every reported configuration. The gains are especially
pronounced for the labor force dataset, where \textit{Separate} needs only 1.4--15.2\% of
the corresponding \textit{Joint} time. For the electricity application, it needs 28.2--97.7\%,
so the gain depends on the penalty. The standard deviations are also generally
smaller under \textit{Separate}, suggesting more stable computation times. This
is clear for electricity with \textit{eLASSO}: although the two
implementations have similar mean runtimes, \textit{Separate} has a markedly
smaller standard deviation.
Moreover, the \textit{eLASSO} configurations are substantially slower than their
\textit{eRidge} counterparts under both implementations because \textit{eLASSO}
requires iterative nonsmooth optimization during validation and estimation.
Even so, for the larger labor force application, \textit{Separate} retains substantial
computational advantages under all penalties. Together with its comparable and
often improved forecasting accuracy, this reduction in computational cost makes
\textit{Separate} particularly attractive for larger reconciliation problems.

\end{document}